\newcommand{\myref}[2]{\hyperref[#1]{#2}}
\numberwithin{equation}{section}
\def\qed{\rule{2mm}{2mm}}
\mathchardef\dash="2D
\theoremstyle{definition}
\newtheorem{theorem}{Theorem}[section]
\newtheorem{lemma}{Lemma}[section]
\newtheorem{example}{Example}
\newtheorem{assumption}{Assumption}
\newtheorem{remark}{Remark}[section]
\def\qed{\rule{2mm}{2mm}}
\begin{document}
\title{Marginal homogeneity tests with panel data\thanks{We thank seminar participants at Duke Microeconometrics Breakfast for helpful comments and discussion. Of course, all errors are our own.}
}
\author{Federico A. Bugni\\Department of Economics\\Northwestern University\\ \href{mailto:federico.bugni@northwestern.edu}{\texttt{federico.bugni@northwestern.edu}}
\and
Jackson Bunting\\Department of Economics\\University of Washington\\ \href{mailto:buntingj@uw.edu}{\texttt{buntingj@uw.edu}}
\and
Muyang Ren\\Department of Economics\\University of Tennessee\\ \href{mailto:mren7@utk.edu}{\texttt{mren7@utk.edu}}}
\maketitle

\begin{abstract}
A panel dataset satisfies marginal homogeneity if the time-specific marginal distributions are homogeneous or time-invariant. Marginal homogeneity is relevant in many economic settings, including dynamic discrete games, difference-in-differences models, and finance. In this paper, we propose several tests for the hypothesis of marginal homogeneity and investigate their properties. We consider an asymptotic framework in which the number of individuals $n$ in the panel diverges, while the number of periods $T$ is fixed. We implement our tests by comparing a studentized or non-studentized $T$-sample version of the Cram\'er-von Mises statistic with a suitable critical value. We propose three methods for constructing the critical value: asymptotic approximations, the bootstrap, and time permutations. We show that the first two methods result in asymptotically exact hypothesis tests. The permutation test based on a non-studentized statistic is asymptotically exact when $T=2$, but is asymptotically invalid when $T>2$. In contrast, the permutation test based on a studentized statistic is always asymptotically exact. Finally, under a time-exchangeability assumption, the permutation test is exact in finite samples, both with and without studentization.

\begin{description}
\medskip
\item \textit{Keywords and phrases}:
Marginal homogeneity test, panel data, Cram\'er-von Mises statistic, Asymptotic approximation, Bootstrap, Permutation tests.
\medskip
\item \textit{JEL classification}: C12, C23.
\end{description}
\end{abstract}

\newpage

\section{Introduction}\label{sec:introduction}

This paper considers a hypothesis testing problem for panel data, $\{\{X_{i,t}\}_{t=1}^{T}\}_{i=1}^{n}$. We assume that the data are independent and identically distributed (i.i.d.) across units $i = 1, \dots, n$, but allow for arbitrary dependence across time. For each period $t = 1, \dots, T$, let $F_t$ denote the common marginal cumulative distribution function (CDF) of $X_{i,t}$. We say that the panel dataset satisfies {\it marginal homogeneity} if these marginal distributions are homogeneous or time-invariant, i.e.,
\begin{equation}
    F_{1}~=~F_{2}~=~\dots~=~ F_{T}.
    \label{eq:MHdefn}
\end{equation}
In this paper, we propose several tests for the hypothesis of marginal homogeneity and investigate their properties.

Marginal homogeneity is relevant across a wide range of economic settings, including dynamic discrete-choice games, difference-in-differences models, and finance. We now provide some examples of the relevance of marginal homogeneity in each of these settings.

\begin{example}[Dynamic discrete-choice games]
In dynamic discrete-choice games, researchers often observe $X_{i,t} = (A_{i,t}, S_{i,t})$, representing action and state variables for units $i = 1, \ldots, n$ (individuals, firms, etc.) over $t = 1, \ldots, T$ periods, respectively. In this context, a standard approach is to assume that the conditional choice probabilities (i.e., $P(A_{i,t} = a \mid S_{i,t} = s)$) and state transition probabilities (i.e., $P(S_{i,t+1} = s' \mid A_{i,t} = a, S_{i,t} = s)$) are homogeneous, and posit a structural model for them. Under standard assumptions, these objects yield a structural model for $P(X_{i,t+1} = x' \mid X_{i,t} = x) = f_{\theta}(x', x)$ with parameter $\theta$. This posited structure serves as the basis for inference about $\theta$ in dynamic discrete-choice games. Notably, this inference does not invoke the marginal homogeneity hypothesis in \eqref{eq:MHdefn}. However, this condition can be shown to generate efficiency gains in the estimation of $\theta$. To see this, note that marginal homogeneity in this context implies that $X_{i,t}$ is in a steady state with a marginal CDF $F$, which yields the following structural equation:
\begin{equation}
dF(x')~=~\int_{x \in \mathcal{X}} f_{\theta}(x', x) \times dF(x)~~~\text{ for all } x' \in \mathcal{X}
\label{eq:PMFs2}
\end{equation} 
Imposing \eqref{eq:PMFs2} in the estimation of $\theta$ can deliver efficiency gains relative to the standard method that does not impose it. We then can interpret the marginal homogeneity hypothesis in \eqref{eq:MHdefn} as a source of efficiency gains in the structural estimation of dynamic discrete choice games.
\end{example}

\begin{example}[Difference-in-differences models]
This example considers a difference-in-differences (DiD) model following \cite{roth2023s}. In this framework, researchers observe outcomes $Y_{i,t}$ for units $i=1,\ldots,n$ over several periods $t=1,2,\ldots,T$. Unit $i$ receives an irreversible binary treatment at period $G_i$, with $G_i=\infty$ indicating that unit $i$ is never treated. Identification of causal treatment effects in this model relies on the following parallel trends assumption: for all $t\neq t'$ and $g\neq g'$,
\begin{equation}\label{eq:parallel_trends}
\mathbb{E}\left[h(Y_{i, t}(\infty))-h(Y_{i, t^{\prime}}(\infty)) | G_i=g\right]
~=~
\mathbb{E}\left[h(Y_{i, t}(\infty))-h(Y_{i, t^{\prime}}(\infty)) | G_i=g^{\prime}\right],
\end{equation}
where $Y_{i,t}(g)$ denotes the potential outcome for unit $i$ in period $t$ if it is treated at time $g$, and $h$ is a known transformation specified by the researcher. \cite{roth/santanna:2023} note that whether the parallel trends assumption holds can depend on the choice of $h$: for instance, \eqref{eq:parallel_trends} may be valid for $h(x)=x$ but not for $h(x)=\log x$ (or vice versa). This raises the question of whether the parallel trends assumption is robust to the researcher’s choice of $h$. \citet[Proposition 3.2]{roth/santanna:2023} shows that robustness to functional form is related to marginal homogeneity of the untreated potential outcomes within each treatment group, i.e.,
\begin{equation}\label{eq:did_homo}
F_{Y_{i,1}(\infty)\mid G_i=g}=F_{Y_{i,2}(\infty)\mid G_i=g}=\dots=F_{Y_{i,T}(\infty)\mid G_i=g},
\quad \text{for each } g=1,\ldots,T,\infty.
\end{equation}
Relatedly, \citet[Remark 4]{roth/santanna:2023} note that \eqref{eq:did_homo} can also be used to validate the identifying assumption in Changes-in-Changes models \citep{athey/imbens:2006}, underscoring its empirical importance. Under the no-anticipation assumption that $Y_{i,t}(g)=Y_{i,t}(\infty)$ for all $i$ and all $t<g$ \citep[Assumption 5]{roth2023s}, one can relate \eqref{eq:did_homo} to a set of marginal homogeneity hypotheses in \eqref{eq:MHdefn}. 
In conclusion, marginal homogeneity tests for \eqref{eq:MHdefn} can be used to assess whether the parallel trends assumption is robust to the choice of functional form.
\end{example}

\begin{example}[Finance]
\cite{ditzhaus/gaigall:2022} and references therein use marginal homogeneity tests to evaluate whether two stock market indices have equal distributions of returns. Their data can be expressed as $\{\{X_{i,t}\}_{t=1}^{2}\}_{i=1}^{n}$, where $\{X_{i,1}\}_{i=1}^{n}$ denotes the monthly returns of the first index (e.g., the Nikkei 225 Stock Average) and $\{X_{i,2}\}_{i=1}^{n}$ denotes the monthly returns of the second index (e.g., the Dow Jones Industrial Average). Motivated by classical models for stock prices, \cite{ditzhaus/gaigall:2022} assume that monthly returns are i.i.d.\ across $i=1,\dots,n$ (i.e., over months), while the interconnectedness of global financial markets allows for dependence across $t=1,2$ (i.e., across indices). In this setting, the marginal homogeneity hypothesis in \eqref{eq:MHdefn} states that the two indices have identically distributed returns. Finally, we note that their analysis focuses on pairs of indices (i.e., $T=2$), whereas our framework allows for applications with $T>2$.
\end{example}

An inherent feature of the preceding examples is that the data are likely to exhibit temporal dependence. In dynamic discrete games, both actions and states can depend on their past values, which gives the problem its dynamic nature. In difference-in-difference settings, potential outcomes can depend on their lagged values. In the finance application in \cite{ditzhaus/gaigall:2022}, stock returns across the globe are likely related by the interconnectedness of global financial markets.
Beyond these examples, dependence over time is common in panel data analysis. For this reason, neither the classical two-sample literature for independent samples (e.g., \cite{darling:1957}) nor its $T$-sample generalization (e.g., \cite{kiefer:1959}) applies.

This paper studies the $T$-sample hypothesis testing problem in \eqref{eq:MHdefn} with possibly dependent data. Namely, we implement our tests by comparing a studentized or non-studentized $T$-sample version of the Cram\'er-von Mises statistic with a suitable critical value. We consider three methods for constructing the critical value: asymptotic approximations, the bootstrap, and time permutations. We show that the first two methods lead to asymptotically exact hypothesis tests, with or without studentization. Results for the permutation test are more nuanced: the permutation test based on a non-studentized statistic is shown to be asymptotically exact when $T=2$, but becomes asymptotically invalid when $T>2$. Once studentized, the permutation test is shown to be always asymptotically exact. On the other hand, relative to the non-studentized case, the asymptotic analysis of the studentized statistics requires an additional assumption: the variance-covariance matrix used in the studentization must be non-singular, an assumption that can fail in practice (see the related discussion in Section \ref{sec:AsyDist} and our empirical application in Section \ref{sec:Application}). Finally, under a time-exchangeability assumption, we show that the permutation test is exact in finite samples, both with or without studentization.

For independent cross-sectional data, the marginal homogeneity hypothesis in \eqref{eq:MHdefn} becomes the standard equality-of-distribution hypothesis for $T$-sample data, which has been thoroughly studied in the literature. In such case, \citet[Theorem 17.2.1]{lehmann/romano:2022} shows that permutation tests of homogeneity are finite-sample exact. \cite{chung/romano:2013} explores the behavior of permutation tests with studentized test statistics. 
Relatedly, \cite{bugni/horowitz:2021} studies the application of permutation tests to functional cross-sectional data. Relatively speaking, the test for the marginal homogeneity hypothesis in \eqref{eq:MHdefn} with panel data (i.e., allowing for time dependence) has received less attention. \citet{quessy/ethier:2012} study a classical Cramér-von Mises statistic and a characteristic-function-based statistic for the $T$-sample case. For the case $T=2$, \citet{wylupek:2023} proposes a combination of weighted and unweighted Kolmogorov–Smirnov statistics. Also for $T=2$, \citet{ditzhaus/gaigall:2022} extend the marginal homogeneity hypothesis to functional data, and \citet{beutner:2025} develops a modified Neyman’s smooth test that accommodates dependence across both units and time. 
However, this literature tends to exclude permutation tests, deeming them invalid without exchangeability under general dependence structure (e.g., \citet[page 750]{ditzhaus/gaigall:2022}; \citet[page 2098]{quessy/ethier:2012}), and instead relies on bootstrap methods. To our knowledge, our paper is the first to establish the theoretical validity of permutation tests of marginal homogeneity hypothesis in panel data based on time permutations. A key finding is that the non-studentized permutation test remains asymptotically valid when $T=2$, but its validity breaks down once $T>2$.

A related testing problem in dynamic discrete choice games is concerned with evaluating the homogeneity of the state transition probabilities, i.e., $P(X_{i,t+1}=x'\mid X_{i,t}=x)=P(X_{i,t'+1}=x'\mid X_{i,t'}=x)$ for all $t,t'<T-1$. See \cite{otsu/pesendorfer/takayashi:2016} and \cite{bugni2020testing} for recent contributions on this topic. The homogeneity of state transition probabilities and the marginal homogeneity in \eqref{eq:MHdefn} are non-nested hypotheses, and so our contribution is complementary to but distinct from these references.

In other related work, \cite{pauly/brunner/konietscheke:2015} and \cite{friedrich/brunner/pauly:2017} investigate the validity of permutation tests to evaluate the presence of treatment effects in experiments under factorial and repeated measure designs. There are important differences between these papers and ours. The first key distinction is in the class of data permutations used to implement the tests. \cite{pauly/brunner/konietscheke:2015} and \cite{friedrich/brunner/pauly:2017} generate their test by permuting all observations over units, treatments, and time indices. In contrast, our test relies solely on permuting the time index of the observations. We show that the classes of distributions under which the two types of permutation tests are finite-sample valid are non-nested; see Lemma \ref{lem:Pauly}. Second, our null hypotheses are different. While \citet{pauly/brunner/konietscheke:2015} and \citet{friedrich/brunner/pauly:2017} test for \textit{mean} differences of outcomes across groups defined by treatment status and time, we test for \textit{distributional} differences across time as specified in \eqref{eq:MHdefn}. Another difference is that \cite{pauly/brunner/konietscheke:2015} and \cite{friedrich/brunner/pauly:2017} focus on studentized statistics, while we consider both studentized and non-studentized statistics. In this respect, it is relevant to note that analyzing studentized statistics requires additional assumptions than analyzing non-studentized statistics; see the discussion in Section \ref{sec:AsyDist}. Finally, our Monte Carlo simulations suggest that our permutation test appears more powerful than the version based on their permutation scheme in finite samples. This is related to the fact that our permutation test only considers time index permutations, which seem to provide a better contrast to detect departures from the marginal homogeneity hypothesis in  \eqref{eq:MHdefn}. 

In another related work, \citet{konietschke/pauly:2014} considers time permutations (as their resampling scheme II) when testing mean differences in two dependent samples and shows that studentization yields an asymptotically valid inference procedure. While their analysis focuses on studentized statistics, our findings indicate that asymptotic validity can still hold for the non-studentized statistic when $T=2$. More importantly, we establish the theoretical validity of permutation tests based on the studentized statistic in settings with $T>2$, which, to the best of our knowledge, is new to the literature.

The remainder of the paper is organized as follows. Section \ref{sec:HT} introduces the hypothesis test problem in greater detail. Section \ref{sec:validity} contains our main theoretical results. Section \ref{sec:power} discusses the power of the proposed inference methods. In Section \ref{sec:MC}, we evaluate the finite-sample performance of these tests via Monte Carlo simulations. Section \ref{sec:Application} considers an empirical application based on \cite{igami/yang:2016}. Section \ref{sec:Conclusions} concludes. The paper's appendix collects all of the proofs and several auxiliary results.

\section{The hypothesis testing problem}\label{sec:HT}

This paper considers a hypothesis-testing problem for panel data with $n$ units and $T$ time periods. Inspired by the typical application in economics, we consider an asymptotic framework in which $n$ grows and $T$ remains fixed. We denote the data by $\mathbf{X}_n = \{\{X_{i,t}\}_{t=1}^{T}\}_{i=1}^{n}$. As already explained, we allow the data to be arbitrarily dependent across time $t = 1, \dots, T$, and assume i.i.d.\ across units $i = 1, \dots, n$. We formalize this assumption next.
\begin{assumption}\label{ass:A1}
For all $t=1,\dots,T$, $\{X_{i,t}\}_{i=1}^{n}$ are i.i.d.\ with marginal CDF $F_t$.
\end{assumption}
\noindent 
Our goal is to test whether the marginal homogeneity hypothesis in \eqref{eq:MHdefn} holds in the data, i.e.,
\begin{equation}
H_0~:~\eqref{eq:MHdefn}\text{ holds} ~~~~\text{vs.}~~~~H_1~:~\eqref{eq:MHdefn}\text{ does not hold}.
\label{eq:H0}
\end{equation}

We propose implementing this hypothesis test by rejecting $H_0$ in \eqref{eq:H0} whenever a test statistic exceeds a suitable critical value. That is, for any significant level of $\alpha \in (0,1)$, we propose
\begin{equation}
\phi_{n}\left( \alpha \right) ~=~1\left\{ S_{n}>{c}_{n}(1-\alpha ) \right\},\label{eq:test}
\end{equation}
where $\phi_{n}\left( \alpha \right)$ denotes the test function, $S_{n}$ the test statistic, and ${c}_{n}(1-\alpha )$ the critical value. In the remainder of this section, we describe the test statistic (Section \ref{sec:TestStat}) and establish its asymptotic distribution under the null hypothesis of marginal homogeneity (Section \ref{sec:AsyDist}). With these results in place, Section \ref{sec:validity} provides three inference methods, each based on a different type of critical value.

\subsection{Test statistics} \label{sec:TestStat}

We propose implementing our test using the Cram\'{e}r-von Mises (CvM) statistic, given by the sample-weighted sum of squared differences of the empirical CDFs for all consecutive periods. For simplicity, we evaluate these differences on a finite number of user-defined points on the real line $\mathcal{U}_K = \{u_1,u_2,\dots,u_K\}$ with $u_0:= -\infty < u_1 < u_2 < \dots < u_K$. We refer to this as the {\it non-studentized} CvM statistic, given by
\begin{align}
\label{eq:cvm}
    {S}_{n}~=~n\sum_{k=1}^{K}\sum_{t=1}^{T-1}[ \hat{F} _{t}(u_k) - \hat{F}_{t+1}(u_k) ] ^{2}\hat{P}(u_k),
\end{align}
where $\hat{F}_t$ is the empirical CDF in period $t = 1,\ldots, T$, and $\hat{P}(u_k)$ is the empirical analog of the aggregate probability in the interval $(u_{k-1},u_{k}]$, i.e., $k = 1,\ldots K$, 
\begin{equation}
\label{eq:emp_mass}
    \hat{P}(u_k)~=~ \frac{1}{nT}\sum_{t=1}^T\sum_{i=1}^n 1(u_{k-1} < X_{i,t} \leq u_k) .
\end{equation}

It is easy to see that \eqref{eq:cvm} can be reexpressed as follows
\begin{equation*}
    S_n ~=~ \hat{Z}'\hat{Z},
\end{equation*}
where $\hat{Z} \in \mathbb{R}^{(T-1)K}$ and, for all $(t,k) \in \{1,\ldots,T-1\}\times \{1,\ldots,{K}\}$,
\begin{equation}
\label{eq:Z_hat_main_tex}
 \hat{Z}_{(t-1)K+k} ~=~ \sqrt{n\hat{P}(u_k)} [\hat{F}_t(u_k) - \hat{F}_{t+1}(u_k)].
\end{equation}
Under $H_0$ in \eqref{eq:H0}, our formal arguments (see the proof of Theorem \ref{thm:asyDistSample}) reveal that $\hat{Z}$ is asymptotically distributed according to $N({\bf 0},\Sigma_Z)$, where for each $(t,k), (\tilde{t},\tilde{k}) \in \{1,\ldots,T-1\}\times \{1,\ldots,{K}\}$,
\begin{equation}
\label{eq:defnSigma}
\begin{aligned}
    &\Sigma_Z[(t-1)K+k, (\tilde{t}-1)K+\tilde{k}] ~=~ \\
    &\sqrt{P(u_k)P(u_{\tilde{k}})} \times cov[1(X_{i,t} \leq u_k) - 1(X_{i,t+1} \leq u_k), 1(X_{i,\tilde{t}} \leq u_{\tilde{k}}) - 1(X_{i,\tilde{t}+1} \leq u_{\tilde{k}})],
\end{aligned}
\end{equation}
and $P(u_k)$ is the aggregate probability in the interval $(u_{k-1},u_{k}]$,
\begin{equation}
\label{eq:pop_mass}
    P(u_k) ~=~  \frac{1}{T}\sum_{t=1}^T [F_t(u_k) - F_t(u_{k-1})].
\end{equation}
As a corollary, $S_n$ has a generalized chi-squared asymptotic distribution under $H_0$ in \eqref{eq:H0}, with weights determined by the eigenvalues of $\Sigma_Z$. Since the limiting distribution depends on $\Sigma_Z$, we refer to $S_n$ as the {\it non-studentized} CvM statistic.

If ${\Sigma}_Z$ is a non-singular matrix, it is natural also to consider a studentized version of the CvM statistic. To this end, we consider the {\it studentized} CvM statistic, given by
\begin{equation}
\label{eq:norm_cvm}
    \bar{S}_n ~=~ \hat{Z}'\hat{\Sigma}_Z^-\hat{Z},
\end{equation}
where $\hat{\Sigma}_Z^{-}$ is the generalized inverse of the empirical analog of ${\Sigma}_Z$, denoted by $\hat{\Sigma}_Z$. For each $(t,k), (\tilde{t},\tilde{k}) \in \{1,\ldots,T-1\}\times \{1,\ldots,{K}\}$, we define $\hat{\Sigma}_{Z}$ as follows:
\begin{equation}
    \begin{aligned}
    \label{eq:emp_cov}
    &\hat{\Sigma}_{Z}[(t-1)K+k, (\tilde{t}-1)K+\tilde{k}] \equiv \\
    &\sqrt{\hat{P}(u_k) \hat{P}(u_{\tilde{k}}) }\times \frac{1}{n}\sum_{i=1}^{n}
    \left[ 
    \begin{array}{l}
    \left(1( X_{i,t}\leq u_k) -\hat{F}_{t}(u_k)-1( X_{i,t+1}\leq u_k) + \hat{F}_{t+1}(u_k) \right)\times \\
    \left(1( X_{i,\tilde{t}}\leq u_{\tilde{k}}) -\hat{F}_{\tilde{t}}(u_{\tilde{k}})-1( X_{i,\tilde{t}+1}\leq u_{\tilde{k}}) + \hat{F}_{\tilde{t}+1}(u_{\tilde{k}}) \right)
    \end{array}
    \right]
    \end{aligned}
\end{equation}
for each $(t,k), (\tilde{t},\tilde{k}) \in \{1,\ldots,T-1\}\times \{1,\ldots,{K}\}$. 
If $H_0$ in \eqref{eq:H0} holds and ${\Sigma}_Z$ is a non-singular, $\bar{S}_n$ has a chi-squared asymptotic distribution with $(T-1)K$ degrees of freedom. The lack of dependence of this limiting distribution on $\Sigma_Z$ justifies referring to $\bar{S}_n $ as the {\it studentized} CvM statistic.

\begin{remark}[On the choice of test-statistics]\label{rem:ChoiceU_K}
The test statistics in \eqref{eq:cvm} and \eqref{eq:norm_cvm} are CvM-type statistics evaluated over a finite set of points $\mathcal{U}_K$ based on contrasts between consecutive empirical distributions, i.e., $\{\hat{F}_{t} - \hat{F}_{t+1}\}_{t=1}^{T-1}$. In principle, our analysis extends to other types of test statistics evaluated over these points, such as the Kolmogorov-Smirnov (KS) statistic, or contrasts relative to the average empirical CDF $\{\hat{F}_{t} - \frac{1}{T}\sum_{t=1}^T\hat{F}_t\}_{t=1}^T$, as in \cite{kiefer:1959}. It is also worth reiterating that $\mathcal{U}_K$ can be arbitrarily chosen by the researcher. In this respect, the most important simplifying feature in \eqref{eq:cvm} and \eqref{eq:norm_cvm} is that we use a finite number of points rather than an infinite number of points, such as a continuum.

There are certain reasons to prefer a finite set  $\mathcal{U}_K$ over a continuum. First and foremost, it leads to simpler asymptotic analysis regarding the studentization of test statistics. Second, for applications in which the data are discrete and with finite support $S_{X}$, one can set $\mathcal{U}_K = S_{X}\backslash\{\max S_{X}\}$ without loss of information (note that equality of marginal distributions for all points in $S_{X}\backslash\{\max S_{X}\}$ is equivalent to $H_0$). Many empirical applications, including the one in Section \ref{sec:Application}, feature discrete data with finite support. Third, studentization of empirical processes seems currently unavailable for continuous samples under general dependency. For instance, the Anderson-Darling-type standardization applied to the (truncated) KS statistic cannot resolve the issue due to the unknown dependence structure across samples \citep[Proposition 2]{wylupek:2023}. Lastly, one could in principle extend the analysis to the continuum using bootstrap prepivoting \citep{beran:1987, beran:1988}, following an approach similar to \cite{olivares:2022}. Such an extension would require new and more intricate arguments, which we view as beyond the scope of this paper.
\end{remark}

\subsection{Asymptotic distribution under the null hypothesis}\label{sec:AsyDist}

In this section, we derive the asymptotic distribution of the non-studentized CvM statistic $S_n$ in \eqref{eq:cvm} and the studentized version $\bar{S}_n$ in \eqref{eq:norm_cvm}. Our characterization of the asymptotic distribution of the studentized CvM statistic relies on the following assumption.

\begin{assumption}
\label{ass:NS}
$\Sigma_Z$ is positive definite.
\end{assumption}

We note that Assumption \ref{ass:NS} is required for our asymptotic analysis of the studentized CvM statistic and is not necessary in the case of the non-studentized version. For a suitable choice of $\mathcal{U}_K$, Assumption \ref{ass:NS} should be applicable to many data-generating processes. We now describe several scenarios in which this assumption does not hold. First, note that Assumption \ref{ass:NS} would fail if $\mathcal{U}_K$ includes any point that is ``irrelevant'' with respect to the support of ${\bf X}_n$, i.e., $\hat{P}(u_k) = 0$ for some $k = 1,\ldots K$. An example of this occurs for any $u_k \in \mathcal{U}_K$ that lies below the support of ${\bf X}_n$. Of course, one can always restore the validity of Assumption \ref{ass:NS} by removing any such points. Second and relatedly, one should never include $u_k \in \mathcal{U}_K$ that equals or exceeds the support of ${\bf X}_n$. Doing this would result in $1(X_{i,t} \leq u_k) = 1(X_{i,t+1} \leq u_k) = 1$, leading to $1(X_{i,t} \leq u_k) - 1(X_{i,t+1} \leq u_k) = 0$, rendering $\Sigma_Z$ singular.
Finally, $\Sigma_Z$ would be singular if there is no communication between some states. For example, consider a Markov chain for $X_{i,t}\in \{1,2,3,4\}$ with $P( X_{i,t+1} = x' | X_{i,t} = x)= \Pi[x,x']$ for all $x,x'\in \{1,2,3,4\}$, where for any $\rho_1,\rho_2,\rho_3,\rho_4 \in (0,1)$,
\begin{equation*}
       \Pi  ~=~ 
        \begin{bmatrix}
            \rho_1 & 1-\rho_1 & 0   & 0 \\
            1-\rho_2 & \rho_2 & 0   & 0 \\
            0   & 0   & \rho_3 & 1-\rho_3 \\
            0   & 0   &  1-\rho_4& \rho_4
        \end{bmatrix}
\end{equation*}
This transition matrix implies no communication between the first and last two states. Then, $1(X_{i,t} \leq u_k) - 1(X_{i,t+1} \leq u_k) = 0$ for any $u_k \in [2,3)$, producing a singular $\Sigma_Z$.

\begin{remark}
    One possible way to relax Assumption \ref{ass:NS} is to apply bootstrap prepivoting  \citep{beran:1987, beran:1988} directly to the non-studentized statistic $S_n$, rather than studentizing it using the joint sample covariance matrix $\hat{\Sigma}_Z$. In this case, it may suffice to require only that the limiting distribution $S$ be nondegenerate. This approach has proven effective in other permutation test settings, such as tests for equality of parameters across independent samples (see, e.g., \citet[Section 2.2]{chung/romano:2016} and \citet{fogarty:2021}). We consider this out of the scope of our work.
\end{remark}

The following result establishes the asymptotic distribution of the non-studentized and studentized CvM statistics under the marginal homogeneity hypothesis in \eqref{eq:H0}.

\begin{theorem} \label{thm:asyDistSample}
Let Assumption \ref{ass:A1} hold.
\begin{enumerate}
    \item[(a)] Under $H_0$ in \eqref{eq:H0}, 
    \begin{equation}
    S_n~~\overset{d}{\to }~~S~\equiv~\sum_{j=1}^{(T-1)K}\lambda _{j}\zeta _{j}^{2}, \label{eq:LimitingS}
    \end{equation}
    where $\{\lambda _{j}\}_{j=1}^{(T-1)K}$ are the eigenvalues of $\Sigma_Z$ in \eqref{eq:defnSigma} and $\{\zeta_j\}_{j=1}^{(T-1)K}$ are i.i.d.\ $N(0,1)$. 
    \item[(b)] Under Assumption \ref{ass:NS} and $H_0$ in \eqref{eq:H0}, 
    \begin{equation}
        \bar{S}_n ~~\overset{d}{\to }~~\chi_{(T-1)K}^2, \label{eq:LimitingS_norm}
    \end{equation}
    where $\chi_{(T-1)K}^2$ denotes the chi-squared distribution with $(T-1)K$ degrees of freedom. 
\end{enumerate}
\end{theorem}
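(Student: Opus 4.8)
The plan is to reduce both parts of Theorem \ref{thm:asyDistSample} to a multivariate central limit theorem (CLT) for a sample mean of i.i.d.\ random vectors, combined with Slutsky's theorem and the continuous mapping theorem. First I would fix notation: under $H_0$ in \eqref{eq:H0}, write $F\equiv F_1=\dots=F_T$, and for each unit $i$ define $W_i\in\mathbb{R}^{(T-1)K}$ by $W_{i,(t-1)K+k}=1(X_{i,t}\le u_k)-1(X_{i,t+1}\le u_k)$. By Assumption \ref{ass:A1} the $\{W_i\}_{i=1}^n$ are i.i.d., and under $H_0$ they have mean zero, since $E[W_{i,(t-1)K+k}]=F_t(u_k)-F_{t+1}(u_k)=0$; their covariance matrix $\Sigma_W$ has $[(t-1)K+k,(\tilde t-1)K+\tilde k]$ entry equal to the covariance term in \eqref{eq:defnSigma} stripped of the factor $\sqrt{P(u_k)P(u_{\tilde k})}$. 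The key algebraic observation is $\hat F_t(u_k)-\hat F_{t+1}(u_k)=\bar W_{n,(t-1)K+k}$ with $\bar W_n=n^{-1}\sum_{i=1}^n W_i$, so from \eqref{eq:Z_hat_main_tex} we have $\hat Z=\hat D^{1/2}\sqrt n\,\bar W_n$, where $\hat D$ is diagonal with $(\hat D)_{(t-1)K+k,(t-1)K+k}=\hat P(u_k)$.

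For part (a), the multivariate CLT applied to the bounded, mean-zero i.i.d.\ vectors $W_i$ gives $\sqrt n\,\bar W_n\overset{d}{\to}N(\mathbf 0,\Sigma_W)$. The weak law of large numbers applied to \eqref{eq:emp_mass} gives $\hat P(u_k)\overset{p}{\to}P(u_k)$ from \eqref{eq:pop_mass} (not even requiring $H_0$), hence $\hat D\overset{p}{\to}D$, the diagonal matrix with entries $P(u_k)$. By Slutsky's theorem, $\hat Z=\hat D^{1/2}\sqrt n\,\bar W_n\overset{d}{\to}D^{1/2}N(\mathbf 0,\Sigma_W)=N(\mathbf 0,D^{1/2}\Sigma_W D^{1/2})$, and an entrywise comparison with \eqref{eq:defnSigma} confirms $D^{1/2}\Sigma_W D^{1/2}=\Sigma_Z$. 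The continuous mapping theorem then yields $S_n=\hat Z'\hat Z\overset{d}{\to}Z'Z$ with $Z\sim N(\mathbf 0,\Sigma_Z)$; writing the spectral decomposition $\Sigma_Z=Q\Lambda Q'$ with $Q$ orthogonal and $\Lambda=\operatorname{diag}(\lambda_1,\dots,\lambda_{(T-1)K})$, and representing $Z$ in distribution as $Q\Lambda^{1/2}\zeta$ with $\zeta\sim N(\mathbf 0,I_{(T-1)K})$, gives $Z'Z=\zeta'\Lambda\zeta=\sum_{j=1}^{(T-1)K}\lambda_j\zeta_j^2$, which is \eqref{eq:LimitingS}.

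For part (b), I would first show $\hat\Sigma_Z\overset{p}{\to}\Sigma_Z$: expanding each bracket in \eqref{eq:emp_cov} as $(W_{i,a}-\bar W_{n,a})(W_{i,b}-\bar W_{n,b})$ with $a=(t-1)K+k$ and $b=(\tilde t-1)K+\tilde k$, the sample second moment $n^{-1}\sum_i W_{i,a}W_{i,b}$ converges in probability to $E[W_{i,a}W_{i,b}]$ by the weak law, while under $H_0$ we have $\bar W_n\overset{p}{\to}\mathbf 0$ so $\bar W_{n,a}\bar W_{n,b}=o_p(1)$, and $\hat P(u_k)\overset{p}{\to}P(u_k)$; hence the displayed quantity converges to $\sqrt{P(u_k)P(u_{\tilde k})}\operatorname{cov}(W_{i,a},W_{i,b})=\Sigma_Z[a,b]$. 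Under Assumption \ref{ass:NS}, $\Sigma_Z$ is invertible, so with probability approaching one $\hat\Sigma_Z$ is invertible and $\hat\Sigma_Z^-=\hat\Sigma_Z^{-1}$; by continuity of matrix inversion at $\Sigma_Z$, $\hat\Sigma_Z^-\overset{p}{\to}\Sigma_Z^{-1}$. Combining with $\hat Z\overset{d}{\to}Z\sim N(\mathbf 0,\Sigma_Z)$ from part (a) via Slutsky and the continuous mapping theorem, $\bar S_n=\hat Z'\hat\Sigma_Z^-\hat Z\overset{d}{\to}Z'\Sigma_Z^{-1}Z$; since $\Sigma_Z^{-1/2}Z\sim N(\mathbf 0,I_{(T-1)K})$, the limit is $\chi_{(T-1)K}^2$, establishing \eqref{eq:LimitingS_norm}.

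The argument is largely standard, so the ``hard part'' is a matter of care rather than depth. The three points I would watch are: (i) the centered Gaussian limit underlying \eqref{eq:LimitingS}–\eqref{eq:LimitingS_norm} genuinely relies on $H_0$, which forces $E[W_i]=\mathbf 0$ and makes $\sqrt n\,\bar W_n$ asymptotically mean-zero; (ii) the random scaling $\hat P(u_k)$ must be peeled off with Slutsky's theorem rather than folded into the CLT; and (iii) in part (b), the generalized inverse $\hat\Sigma_Z^-$ is well-behaved only because Assumption \ref{ass:NS} makes $\Sigma_Z$ full rank — it is precisely at rank-deficient matrices that $A\mapsto A^-$ fails to be continuous, which is why the studentized statistic requires Assumption \ref{ass:NS} while the non-studentized one does not.
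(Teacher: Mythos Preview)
Your proposal is correct and follows essentially the same approach as the paper: both reduce to a multivariate CLT for an i.i.d.\ mean-zero sum, replace the empirical weights $\hat P(u_k)$ by their population limits via the LLN and Slutsky, apply the CMT to the quadratic form, and read off the generalized chi-square representation from a spectral decomposition of $\Sigma_Z$; part (b) is handled identically via $\hat\Sigma_Z\overset{p}{\to}\Sigma_Z$ and continuity of inversion under Assumption \ref{ass:NS}. The only cosmetic difference is that the paper bakes the population weight $\sqrt{P(u_k)}$ into its summands $Z_i$ and then shows the residual $\hat Z-\tilde Z=o_p(1)$ componentwise, whereas you keep the diagonal scaling $\hat D^{1/2}$ outside and invoke Slutsky directly---these are equivalent packagings of the same argument.
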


Theorem \ref{thm:asyDistSample} shows that under the marginal homogeneity hypothesis, the non-studentized CvM statistic in \eqref{eq:cvm} converges to a generalized chi-square distribution with the weights determined by the eigenvalues of $\Sigma_Z$. Since $\Sigma_Z$ is not necessarily positive definite, some eigenvalues may be zero, leading to reduced degrees of freedom. When $\Sigma_Z$ is positive definite, then the limiting distribution of the studentized CvM statistic in \eqref{eq:norm_cvm} is chi-square distributed. These results are the basis of the critical values proposed in the following section.

\section{Critical values and validity of inference}\label{sec:validity}

This section describes three critical values for the CvM test statistics proposed in Section \ref{sec:TestStat}. Each critical value generates a different hypothesis test according to equation \eqref{eq:test}. We formally study the validity of each one of these methods.

\subsection{Asymptotic approximation} \label{sec:AA}

In this section, we propose a hypothesis test for \eqref{eq:H0} by approximating the quantiles of the asymptotic distribution in Theorem \ref{thm:asyDistSample}. To this end, we now introduce some notation. For any $\ell\in \mathbb{R}^{(T-1)K}$, let $S(\ell)\geq 0$ denote a random variable with the generalized chi-square distribution of weights equal to $\ell$, unit vector of degrees of freedom, zero vector of non-centrality parameters, and no constant or normal terms. Also, for any $(x,\ell)\in \mathbb{R} \times \mathbb{R}^{(T-1)K}$, let $G(x,\ell)$ denote the CDF of $S(\ell) $ evaluated at $x\in\mathbb{R}$. This function can be numerically computed with arbitrary accuracy by simulating its empirical distribution.

For the non-studentized CvM statistic, we propose
\begin{equation}
{c}_{n}^{A}( 1-\alpha ) ~=~\inf \big\{ x\in \mathbb{R} :G(x,\hat{\lambda}_{n})\geq 1-\alpha \big\},  \label{eq:cnAA}
\end{equation}
where $\hat{\lambda}_n$ denotes the eigenvalues of $\hat{\Sigma}_Z$, and the following hypothesis test:
\begin{equation}
\phi_{n}^{A}( \alpha ) ~=~1\left\{ S_{n}>{c}_{n}^{A}(1-\alpha ) \right\}.  \label{eq:testAA}
\end{equation}
For the studentized CvM statistic, we propose the following hypothesis test:
\begin{equation}
\bar\phi_{n}^{A}( \alpha ) ~=~1\left\{ \bar{S}_{n}>\bar{c}_{n}^{A}(1-\alpha ) \right\},   \label{eq:testAA_norm}
\end{equation}
where $\bar{c}_{n}^{A}(1-\alpha )$ equals the $(1-\alpha)$-quantile of the (standard) chi-squared distribution with $(T-1)K$ degrees of freedom.

The following result shows that the hypothesis tests in \eqref{eq:testAA} and \eqref{eq:testAA_norm} are asymptotically valid.

\begin{theorem}\label{thm:AA_fixed}
Let Assumption \ref{ass:A1} and $H_{0}$ in \eqref{eq:H0} hold, and let $\alpha \in (0,1)$.
\begin{enumerate}
    \item[(a)] $\lim_{n\to\infty} E_{P}[ \phi _{n}^{A}(\alpha)] \leq \alpha$. Furthermore, the inequality becomes an equality if $\Sigma_Z \neq 0_{(T-1)K \times (T-1)K}$.
    \item[(b)] Under Assumption \ref{ass:NS}, $\lim_{n\to\infty} E_P[\bar{\phi}^A_n(\alpha)] = \alpha$.
\end{enumerate}
\end{theorem}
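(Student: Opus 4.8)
The plan is to deduce part~(b) directly from Theorem~\ref{thm:asyDistSample}(b) and to obtain part~(a) by showing that the random critical value $c_n^A(1-\alpha)$ converges in probability to the corresponding population quantile, treating the cases $\Sigma_Z\neq 0_{(T-1)K\times(T-1)K}$ and $\Sigma_Z= 0_{(T-1)K\times(T-1)K}$ separately. For part~(b): under Assumption~\ref{ass:NS} and $H_0$, Theorem~\ref{thm:asyDistSample}(b) gives $\bar S_n\overset{d}{\to}\chi^2_{(T-1)K}$; the critical value $\bar c_n^A(1-\alpha)$ is the fixed $(1-\alpha)$-quantile of that law, and since the $\chi^2_{(T-1)K}$ CDF is continuous on $\mathbb{R}$ and strictly increasing on $(0,\infty)$, this quantile is a continuity point of the limit at which the limiting CDF equals $1-\alpha$. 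Convergence in distribution at a continuity point of the limit then gives $E_P[\bar\phi_n^A(\alpha)]=P(\bar S_n>\bar c_n^A(1-\alpha))\to 1-(1-\alpha)=\alpha$.

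For part~(a) with $\Sigma_Z\neq 0$, I would first establish $\hat\Sigma_Z\overset{p}{\to}\Sigma_Z$ by a law of large numbers applied to \eqref{eq:emp_cov}, together with $\hat F_t(u_k)\overset{p}{\to}F_t(u_k)$ and $\hat P(u_k)\overset{p}{\to}P(u_k)$ (consistency facts already contained in the proof of Theorem~\ref{thm:asyDistSample}). Since the eigenvalues of a symmetric matrix depend continuously on its entries and $G(x,\cdot)$ is symmetric in its weight argument, this yields $\hat\lambda_n\overset{p}{\to}\lambda$, the eigenvalue vector of $\Sigma_Z$. As $\Sigma_Z$ is a covariance matrix, $\lambda\geq 0$ componentwise, and $\Sigma_Z\neq 0$ forces $\lambda_j>0$ for some $j$; hence $S=S(\lambda)$ is absolutely continuous on $(0,\infty)$, the CDF $G(\cdot,\lambda)$ is continuous on $\mathbb{R}$ and strictly increasing on $(0,\infty)$, and $c^A:=\inf\{x:G(x,\lambda)\geq 1-\alpha\}$ satisfies $c^A>0$ and $G(c^A,\lambda)=1-\alpha$. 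Next, for deterministic nonnegative $\ell_n\to\lambda$ one has $S(\ell_n)\overset{d}{\to}S(\lambda)$ (from the almost-sure convergence of $\sum_j\ell_{n,j}\zeta_j^2$), so $G(\cdot,\ell_n)\to G(\cdot,\lambda)$ pointwise; by the standard quantile-convergence lemma, using continuity and local strict monotonicity of $G(\cdot,\lambda)$ at $c^A$, one gets $\inf\{x:G(x,\ell_n)\geq 1-\alpha\}\to c^A$, and the continuous mapping theorem then gives $c_n^A(1-\alpha)\overset{p}{\to}c^A$. Combining $S_n\overset{d}{\to}S$ from Theorem~\ref{thm:asyDistSample}(a) with Slutsky's lemma yields $S_n-c_n^A(1-\alpha)\overset{d}{\to}S-c^A$; since $P(S=c^A)=0$, the point $0$ is a continuity point of this limit, so $E_P[\phi_n^A(\alpha)]=P(S_n>c_n^A(1-\alpha))\to P(S>c^A)=\alpha$.

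In the remaining case $\Sigma_Z=0$, every diagonal entry $P(u_k)\,\mathrm{Var}\big(1(X_{i,t}\leq u_k)-1(X_{i,t+1}\leq u_k)\big)$ equals zero, and this forces every component of $\hat Z$ to vanish almost surely: if $P(u_k)=0$ the interval $(u_{k-1},u_k]$ carries no mass under any $F_t$, so $\hat P(u_k)=0$ a.s.; and if $P(u_k)>0$ then the variance is zero and, because $F_t=F_{t+1}$ under $H_0$, $1(X_{i,t}\leq u_k)=1(X_{i,t+1}\leq u_k)$ a.s., whence $\hat F_t(u_k)=\hat F_{t+1}(u_k)$ a.s. Therefore $S_n=\hat Z'\hat Z=0$ a.s.\ and $E_P[\phi_n^A(\alpha)]=0<\alpha$. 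Combining the two cases gives $\lim_{n\to\infty}E_P[\phi_n^A(\alpha)]\leq\alpha$, with equality precisely when $\Sigma_Z\neq 0$. The main obstacle is the nondegenerate case of part~(a): showing that the data-driven critical value --- the $(1-\alpha)$-quantile of a generalized chi-square whose weights are the estimated eigenvalues $\hat\lambda_n$ --- converges to $c^A$. Beyond consistency of $\hat\Sigma_Z$ and continuity of the eigenvalue map, the delicate ingredient is continuity of the generalized-$\chi^2$ quantile in its weight vector at $\lambda$, which relies on $\Sigma_Z\neq 0$ so that the limiting law has no atom at its $(1-\alpha)$-quantile; the degenerate case, where this continuity fails, has to be --- and can be --- handled via the separate observation that $S_n$ is identically zero.
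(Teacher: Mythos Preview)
Your proposal is correct and follows essentially the same approach as the paper: both split part~(a) into the cases $\Sigma_Z\neq 0$ and $\Sigma_Z=0$, establish $\hat\lambda_n\overset{p}{\to}\lambda$ via $\hat\Sigma_Z\overset{p}{\to}\Sigma_Z$ and continuity of eigenvalues, show the generalized-$\chi^2$ quantile is continuous in its weight vector at $\lambda$ so that $c_n^A(1-\alpha)\overset{p}{\to}c^A$, and then combine with Theorem~\ref{thm:asyDistSample}. The only differences are cosmetic---you use almost-sure convergence of $\sum_j\ell_{n,j}\zeta_j^2$ where the paper uses characteristic functions and L\'evy's continuity theorem, and you invoke Slutsky's lemma where the paper writes out an explicit $\varepsilon$-sandwich---and your treatment of the degenerate case $\Sigma_Z=0$ (distinguishing $P(u_k)=0$ from $P(u_k)>0$) is in fact slightly more careful than the paper's.
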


\subsection{Bootstrap} \label{sec:BS}

This section proposes a hypothesis test for \eqref{eq:H0} via the bootstrap. To this end, we repeatedly resample the data with replacement across units $i=1,\dots,n$ to construct a bootstrap sample, denoted by $\{X_i^*\}_{i=1}^n$. For each bootstrap sample, the bootstrap analogs of the non-studentized and studentized CvM statistics are given by
\begin{equation}
\label{eq:bootStat}
    S_n^* ~=~ (\hat{Z}^*)'(\hat{Z}^*)~~~~\text{and}~~~~\bar{S}_n^* ~=~ (\hat{Z}^*)'\hat{\Sigma}_{Z}^-(\hat{Z}^*),
\end{equation}
where, for all $(t,k) \in \{1,\ldots, T-1\} \times \{1,\ldots, K\}$, 
\begin{equation}
        \hat{Z}^*_{(t-1)K + k} ~\equiv~ \sqrt{\hat{P}(u_k)} \frac{1}{\sqrt{n}}\sum_{i=1}^n \big(\big[ 1(X_{i,t}^* \leq u_k) - \hat{F}_t(u_k) \big] - \big[ 1(X_{i,t+1}^* \leq u_k) - \hat{F}_{t+1}(u_k) \big]\big).
      \label{eq:Z_hat_star}
\end{equation}

\begin{remark}\label{rem:bootEstVar}
    One could also define $\bar{S}_n^*$ in \eqref{eq:bootStat} with $\hat{\Sigma}_{Z}^-$ replaced by its bootstrap analog. Our main text omits this option for brevity, but we include it in our Monte Carlo simulations.
\end{remark}

By repeating the bootstrap sampling sufficiently many times, we can approximate the conditional distributions $P( S_n^*\leq x|\mathbf{X}_n) $ and $P( \bar{S}_n^*\leq x|\mathbf{X}_n) $ with arbitrary accuracy.

For the non-studentized CvM statistic, we propose
\begin{equation}
{c}_{n}^{B}( 1-\alpha ) ~=~\inf \left\{ x\in \mathbb{R} :P( S_{n}^{\ast }\leq x|\mathbf{X}_n)\geq 1-\alpha \right\},  \label{eq:cn_B}
\end{equation}
and the following hypothesis test:
\begin{equation}
\phi_{n}^{B}( \alpha ) ~=~1\left\{ S_{n}>{c}_{n}^{B}(1-\alpha ) \right\}.  \label{eq:testB}
\end{equation}
For the studentized CvM statistic, we propose
\begin{equation}
\bar{c}_{n}^{B}( 1-\alpha ) ~=~\inf \left\{ x\in \mathbb{R} :P( \bar{S}_{n}^{\ast }\leq x|\mathbf{X}_n)\geq 1-\alpha \right\},  \label{eq:cn_B_norm}
\end{equation}
and the following hypothesis test:
\begin{equation}
\bar{\phi}_{n}^{B}( \alpha ) ~=~1\left\{\bar{S}_{n}>\bar{c}_{n}^{B}(1-\alpha ) \right\}.  \label{eq:testB_norm}
\end{equation}

The next result shows that the hypothesis tests in \eqref{eq:cn_B} and \eqref{eq:cn_B_norm} are asymptotically valid.

\begin{theorem}\label{thm:B_fixed} 
Let Assumption \ref{ass:A1} and $H_{0}$ in \eqref{eq:H0} hold, and let $\alpha \in (0,1)$.
\begin{enumerate}
    \item[(a)] $\lim_{n\to\infty} E_{P}[ \phi _{n}^{B}(\alpha)] \leq \alpha$. Furthermore, the inequality becomes an equality if $\Sigma_Z \neq 0_{(T-1)K \times (T-1)K}$.
    \item[(b)] Under Assumption \ref{ass:NS},  $\lim_{n\to\infty} E_P[\bar{\phi}_n^B(\alpha)] = \alpha$.
\end{enumerate}
\end{theorem}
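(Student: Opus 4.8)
The plan is to establish conditional bootstrap consistency for $\hat Z^*$ in \eqref{eq:Z_hat_star}, and then transfer it to $S_n^*$, $\bar S_n^*$ and their quantiles $c_n^B(1-\alpha)$, $\bar c_n^B(1-\alpha)$ by continuous mapping and a standard quantile-convergence argument; the structure mirrors the proof of Theorem~\ref{thm:asyDistSample}. The key claim is: conditionally on $\mathbf X_n$ and in probability, $\hat Z^* \overset{d}{\to} N(\mathbf 0, \Sigma_Z)$. To see this, write $W_{i,(t,k)} \equiv 1(X_{i,t} \le u_k) - 1(X_{i,t+1} \le u_k)$ and $\bar W_{n,(t,k)} \equiv n^{-1}\sum_{i=1}^n W_{i,(t,k)} = \hat F_t(u_k) - \hat F_{t+1}(u_k)$, so that \eqref{eq:Z_hat_star} reads $\hat Z^*_{(t-1)K+k} = \sqrt{\hat P(u_k)}\, n^{-1/2}\sum_{i=1}^n (W^*_{i,(t,k)} - \bar W_{n,(t,k)})$, where $W^*_i$ is built from the resampled unit. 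Since the $W_i$ are i.i.d.\ and bounded, the nonparametric bootstrap central limit theorem for sample means (e.g., Bickel and Freedman; van der Vaart, Ch.~23) gives that $n^{-1/2}\sum_i (W^*_i - \bar W_n)$ converges conditionally, in probability, to $N(\mathbf 0, \mathrm{Var}(W_i))$. Combining with $\hat P(u_k) \overset{p}{\to} P(u_k)$ (law of large numbers, cf.~\eqref{eq:emp_mass}, \eqref{eq:pop_mass}) and a conditional Slutsky/continuous-mapping argument yields the claim, noting that under $H_0$ the scaled covariance $D\,\mathrm{Var}(W_i)\,D$ with $D = \mathrm{diag}(\sqrt{P(u_k)})$ equals $\Sigma_Z$ in \eqref{eq:defnSigma}. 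The recentering by $\hat F_t$ in \eqref{eq:Z_hat_star} is exactly what makes this hold; $H_0$ is used only to identify the conditional limit with the unconditional limit of $\hat Z$ from Theorem~\ref{thm:asyDistSample}.

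For part~(a), applying the continuous map $z \mapsto z'z$ gives $S_n^* \overset{d}{\to} S$ conditionally in probability, with $S = \sum_{j=1}^{(T-1)K}\lambda_j\zeta_j^2$ as in \eqref{eq:LimitingS}. If $\Sigma_Z \ne 0$, then $\Sigma_Z$ has at least one positive eigenvalue, so $S > 0$ almost surely and its CDF $F_S$ is continuous and strictly increasing on $(0,\infty)$; the standard quantile-convergence lemma then gives $c_n^B(1-\alpha) \overset{p}{\to} q_{1-\alpha}$, the $(1-\alpha)$-quantile of $S$, which is a continuity point of $F_S$ with $F_S(q_{1-\alpha}) = 1-\alpha$. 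Since $S_n \overset{d}{\to} S$ by Theorem~\ref{thm:asyDistSample}(a), Slutsky's theorem gives $(S_n, c_n^B(1-\alpha)) \overset{d}{\to} (S, q_{1-\alpha})$, hence $E_P[\phi_n^B(\alpha)] = P(S_n > c_n^B(1-\alpha)) \to P(S > q_{1-\alpha}) = \alpha$. If instead $\Sigma_Z = 0$, then for every $(t,k)$ either $P(u_k) = 0$ or (using $H_0$) $W_{i,(t,k)} = 0$ almost surely, so $\hat Z = \hat Z^* = \mathbf 0$ almost surely, whence $S_n = 0$, $c_n^B(1-\alpha) = 0$, and $E_P[\phi_n^B(\alpha)] = 0 \le \alpha$. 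This proves part~(a).

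For part~(b), Assumption~\ref{ass:NS} makes $\Sigma_Z$ positive definite, and the empirical analog $\hat\Sigma_Z$ in \eqref{eq:emp_cov} is consistent for $\Sigma_Z$ (law of large numbers, using consistency of $\hat P$ and $\hat F_t$); hence, with probability tending to one, $\hat\Sigma_Z$ is nonsingular and $\hat\Sigma_Z^- = \hat\Sigma_Z^{-1} \overset{p}{\to} \Sigma_Z^{-1}$. Since $\hat\Sigma_Z^-$ is $\mathbf X_n$-measurable, combining the conditional convergence $\hat Z^* \overset{d}{\to} N(\mathbf 0, \Sigma_Z)$ with a conditional Slutsky argument gives $\bar S_n^* = (\hat Z^*)'\hat\Sigma_Z^-\hat Z^* \overset{d}{\to} N(\mathbf 0,\Sigma_Z)'\Sigma_Z^{-1}N(\mathbf 0,\Sigma_Z) = \chi^2_{(T-1)K}$ conditionally in probability. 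As the $\chi^2_{(T-1)K}$ CDF is continuous and strictly increasing on $(0,\infty)$, $\bar c_n^B(1-\alpha) \overset{p}{\to} \chi^2_{(T-1)K,1-\alpha}$; together with $\bar S_n \overset{d}{\to} \chi^2_{(T-1)K}$ from Theorem~\ref{thm:asyDistSample}(b) and Slutsky's theorem, $E_P[\bar\phi_n^B(\alpha)] = P(\bar S_n > \bar c_n^B(1-\alpha)) \to \alpha$.

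The main obstacle is the conditional bootstrap CLT of the first paragraph: stating and using the ``conditional convergence in distribution, in probability'' mode correctly, and handling the data-dependent factors $\hat P(u_k)$ and, in part~(b), $\hat\Sigma_Z^-$ through the appropriate conditional continuous-mapping/Slutsky lemmas. Everything downstream is routine: the continuous-mapping step is immediate, and the passage from ``the bootstrap law converges'' plus ``$S_n$ (resp.\ $\bar S_n$) converges'' to ``the rejection probability converges'' is the usual argument, whose only nontrivial input is the regularity of $F_S$ when $\Sigma_Z \ne 0$ (resp.\ of the $\chi^2$ CDF), verified above.
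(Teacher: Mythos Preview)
Your proposal is correct and follows essentially the same route as the paper: establish a conditional bootstrap CLT for $\hat Z^*$, pass to $S_n^*$ (resp.\ $\bar S_n^*$) by continuous mapping/Slutsky, deduce quantile consistency of $c_n^B(1-\alpha)$ (resp.\ $\bar c_n^B(1-\alpha)$), and combine with Theorem~\ref{thm:asyDistSample} via the standard sandwich argument. The only cosmetic difference is that the paper first replaces $\hat P(u_k)$ by $P(u_k)$ to form an auxiliary $\tilde Z^*$, applies the bootstrap CLT to $\tilde Z^*$ (citing \cite{vandervaart/wellner:1996}), and controls the remainder $\hat Z^* - \tilde Z^*$ separately, whereas you apply the bootstrap CLT to the unweighted $W_i^*$ and absorb the $\sqrt{\hat P(u_k)}$ factor via a conditional Slutsky step; these are equivalent decompositions. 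Your handling of the degenerate case $\Sigma_Z=0$ is in fact slightly more explicit than the paper's, correctly distinguishing the possibilities $P(u_k)=0$ and $W_{i,(t,k)}=0$ a.s.
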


\subsection{Permutations} \label{sec:PT}

In this section, we propose a hypothesis test for \eqref{eq:H0} by random permutations of the data. Our permutations are motivated by the marginal homogeneity hypothesis in \eqref{eq:MHdefn}, and they consist of randomly permuting the time index $t=1,\dots, T$ for each unit $i=1,\dots,n$.

These tests require the following notation. Let $\mathcal{M}$ denote the set of all permutations of the indices $\{ 1,\ldots,T\} $, and $\mathcal{M}^{n}$ is defined as the set of all possible permutations of the time indices over $n$ observations. A typical element of $\mathcal{M}^{n}$ is given by $\pi^n  \equiv \{ \{ \pi _{i}( 1) ,\ldots ,\pi _{i}( T) \} \} _{i=1}^{n}$, where $\pi _{i}(t)$ denotes the permuted time index of the observation $X_{i,t}$, and $\{\pi_i(1), \ldots, \pi_i(T)\}$ is an arbitrary time permutation that belongs to the set $\mathcal{M}$. In other words, the permuted version of the data $\mathbf{X}_n = \{\{X_{i,t}\}_{t=1}^{T}\}_{i=1}^{n}$ can be written as $\mathbf{X}_n^{\pi } \equiv \{\{X_{i,\pi _{i}(t)}\}_{t=1}^{T}\}_{i=1}^{n}$.

For each permutation $\pi^n  \in \mathcal{M}^{n}$, the permutation analog of the non-studentized and studentized CvM statistics are given by
\begin{equation}
    S_n^\pi = (\hat{Z}^\pi)'(\hat{Z}^\pi)     ~~~~\text{and}~~~~   \bar{S}_{n}^{\pi } = (\hat{Z}^\pi)' \hat{\Sigma}^{-}_{Z^\pi} (\hat{Z}^\pi) , 
\label{eq:permStat}
\end{equation}
where, for all $(t,k) \in \{1,\ldots,T-1\}\times \{1,\ldots,K\}$,
\begin{equation}
\label{eq:Z_hat_perm}
    \hat{Z}^\pi_{(t-1)K+k} \equiv \sqrt{n\hat{P}^\pi(x)} \big(\hat{F}^\pi_t(u_k) - \hat{F}^\pi_{t+1}(u_k)\big),
\end{equation}
and, for all $(t,k), (\tilde{t}, \tilde{k}) \in \{1,\ldots,T-1\}\times \{1,\ldots,K\}$,
\begin{equation}
    \begin{aligned}
    \label{eq:perm_emp_cov}
    &\hat{\Sigma}_{Z^\pi}[(t-1)K+k, (\tilde{t}-1)K+\tilde{k}] \equiv \\
    &\sqrt{\hat{P}^\pi(u_k) \hat{P}^\pi(u_{\tilde{k}}) }\times \frac{1}{n}\sum_{i=1}^{n}
    \left[ 
    \begin{array}{l}
    \left(1( X_{i,\pi_i(t)}\leq u_k) -\hat{F}^\pi_{t}(u_k)-1( X_{i,\pi_i(t+1)}\leq u_k) + \hat{F}^\pi_{t+1}(u_k) \right)\times \\
    \left(1( X_{i,\pi_i(\tilde{t})}\leq u_{\tilde{k}}) -\hat{F}^\pi_{\tilde{t}}(u_{\tilde{k}})-1( X_{i,\pi_i(\tilde{t}+1)}\leq u_{\tilde{k}}) + \hat{F}^\pi_{\tilde{t}+1}(u_{\tilde{k}}) \right)
    \end{array}
    \right],
    \end{aligned}
\end{equation}
with $\hat{P}^{\pi }( u_k) \equiv\frac{1}{nT}\sum_{t=1}^{T}\sum_{i=1}^{n}1( u_{k-1} < X_{i, \pi_i ( t)} \leq u_k) $ and $\hat{F}^\pi_t(u_k) \equiv \frac{1}{n} \sum_{i=1}^n 1(X_{i,\pi_i(t)} \leq u_k)$.

These permutation test statistics can be used to construct permutation-based tests along the lines of \citet[Section 17.2.1]{lehmann/romano:2022}. For the non-studentized CvM statistic, we propose a critical value ${c}_n^\pi(1-\alpha)$, which is the $(1-\alpha)$-quantile over all possible permutations of the non-studentized CvM statistics (denoted by $\left\{ {S}_{n}^{\pi } : \pi^n \in \mathcal{M}^{n} \right\}$). The corresponding hypothesis test is defined as:
\begin{equation}
\phi_{n}^{\pi}( \alpha ) ~=~1\left\{ S_{n}>{c}_{n}^{\pi}(1-\alpha ) \right\}.  \label{eq:test_perm}
\end{equation}
For the studentized CvM statistic, we propose an analogous object. That is, $\bar{c}_n^\pi(1-\alpha)$ equals to the $(1-\alpha)$-quantile over all the possible permutations for the studentized CvM statistics (given by $\{ \bar{S}_{n}^{\pi }: \pi^n \in \mathcal{M}^{n}\}$).
The corresponding hypothesis test is given by:
\begin{equation}
\bar\phi_{n}^{\pi}( \alpha ) ~=~1\left\{ \bar{S}_{n}>\bar{c}_{n}^{\pi}(1-\alpha ) \right\}.  \label{eq:test_perm_norm}
\end{equation}

\begin{remark}\label{rem:nonRandomPerm}
The tests in \eqref{eq:test_perm} and \eqref{eq:test_perm_norm} are the non-random versions of the standard permutation tests described in \citet[Section 17.2.1]{lehmann/romano:2022}. The key difference between the non-random and random versions lies in how ties between the test statistic and the critical value are handled: the non-random version does not reject, while the random version rejects with a specific probability. While more conservative in finite samples, the non-random version is preferred because it is non-stochastic, simpler, and yields the same asymptotic behavior as its random counterpart for the tests under consideration.
\end{remark}

The next result studies the asymptotic validity of the hypothesis tests in \eqref{eq:test_perm} and \eqref{eq:test_perm_norm}.

\begin{theorem}\label{thm:P_fixed} 
Let Assumption \ref{ass:A1} and $H_{0}$ in \eqref{eq:H0} hold, and let $\alpha \in (0,1)$.
\begin{enumerate}
    \item[(a)] For $T=2$, $\lim_{n\to\infty} E_{P}[ \phi _{n}^{\pi}(\alpha)] \leq \alpha$. Furthermore, the inequality becomes an equality if $\Sigma_Z \neq 0_{(T-1)K \times (T-1)K}$.
    \item[(b)] When $T>2$, the non-studentized permutation test in \eqref{eq:test_perm} is invalid. That is, it is possible to have $\liminf_{n\to\infty} E_P[\phi _{n}^{\pi}(\alpha)] > \alpha$ for some distribution $P$ that satisfies our assumptions.
    \item[(c)] Under Assumption \ref{ass:NS},  $\lim_{n\to\infty} E_P[\bar\phi_{n}^{\pi}(\alpha)] = \alpha$.
\end{enumerate}
\end{theorem}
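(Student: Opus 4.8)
The plan is to invoke the general theory of asymptotic validity of randomization tests \citep[Section 15.2]{lehmann/romano:2022}, \citep{chung/romano:2013}: I would show that the permutation distributions of $S_n^\pi$ and $\bar S_n^\pi$ converge in probability, at continuity points, to fixed limiting CDFs, and then compare their $(1-\alpha)$-quantiles with those of the actual limiting null laws from Theorem \ref{thm:asyDistSample}. The engine behind all three parts is a \emph{conditional} central limit theorem for $\hat Z^\pi$ given the data. Conditional on $\mathbf X_n$, the permuted summands are independent across units; $E[\hat Z^\pi\mid\mathbf X_n]=\mathbf 0$ holds \emph{exactly}, because $E[1(X_{i,\pi_i(t)}\le u_k)\mid\mathbf X_n]=\tfrac1T\sum_{s=1}^{T}1(X_{i,s}\le u_k)$ does not depend on $t$; and $\mathrm{Cov}(\hat Z^\pi\mid\mathbf X_n)$ converges in probability, by a law of large numbers across the i.i.d.\ units (with automatic Lindeberg since the summands are bounded), to a matrix $\Sigma_Z^\pi$. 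Hence $\hat Z^\pi\mid\mathbf X_n\overset{d}{\to}N(\mathbf 0,\Sigma_Z^\pi)$ in probability, which by continuous mapping identifies the permutation limit of $S_n^\pi$ as the law of $S(\lambda^\pi)$, with $\lambda^\pi$ the eigenvalues of $\Sigma_Z^\pi$. The same law of large numbers gives $\hat\Sigma_{Z^\pi}\to\Sigma_Z^\pi$ in probability, and one checks that $\Sigma_Z^\pi$ is the second-moment matrix $E[W^\pi(W^\pi)']$, where $W^\pi$ collects the $\sqrt{P(u_k)}$-weighted permuted difference-of-indicators of a single generic unit and the expectation is over the unit \emph{and} its uniform random permutation.

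For part (a), when $T=2$ a time permutation of unit $i$ is a sign flip $\varepsilon_i\in\{\pm1\}$, so $\hat Z^\pi_k=\sqrt{\hat P(u_k)}\,n^{-1/2}\sum_i\varepsilon_i\big(1(X_{i,1}\le u_k)-1(X_{i,2}\le u_k)\big)$; since $\varepsilon_i^2=1$, the conditional covariance converges to $\sqrt{P(u_k)P(u_{\tilde k})}\,E\big[(1(X_{i,1}\le u_k)-1(X_{i,2}\le u_k))(1(X_{i,1}\le u_{\tilde k})-1(X_{i,2}\le u_{\tilde k}))\big]$, and under $H_0$ the mean of each difference is zero so this equals $\Sigma_Z$ in \eqref{eq:defnSigma}. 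Thus $\Sigma_Z^\pi=\Sigma_Z$, the permutation limit of $S_n^\pi$ is the law of $S$ in \eqref{eq:LimitingS}, and the randomization-test argument delivers $\lim_n E_P[\phi_n^\pi(\alpha)]=\alpha$ whenever $\Sigma_Z\neq\mathbf 0$ (so the limit has a density at its $(1-\alpha)$-quantile). When $\Sigma_Z=\mathbf 0$, every $S_n^\pi$ and $S_n$ vanish identically, so the non-randomized test never rejects and $E_P[\phi_n^\pi(\alpha)]=0\le\alpha$, which gives the stated inequality.

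For part (c), the conditional CLT shows that if $\Sigma_Z^\pi$ is non-singular then $\bar S_n^\pi\mid\mathbf X_n\overset{d}{\to}\chi^2_{(T-1)K}$ in probability, matching the limiting null law of $\bar S_n$ from Theorem \ref{thm:asyDistSample}(b), so the randomization-test argument yields $\lim_n E_P[\bar\phi_n^\pi(\alpha)]=\alpha$. The one genuinely new point is that Assumption \ref{ass:NS} implies $\Sigma_Z^\pi\succ 0$: using the representation $\Sigma_Z^\pi=E[W^\pi(W^\pi)']$, if $\xi'\Sigma_Z^\pi\xi=E[(\xi'W^\pi)^2]=0$ for some $\xi$, then $\xi'W^\pi=0$ almost surely over the unit and its permutation; specializing to the identity permutation (which has positive probability under the uniform law) gives $\xi'W^{\mathrm{id}}=0$ a.s., and since $E[W^{\mathrm{id}}]=\mathbf 0$ under $H_0$ this forces $\xi'\Sigma_Z\xi=E[(\xi'W^{\mathrm{id}})^2]=0$, hence $\xi=\mathbf 0$.

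The hard part is part (b), where the argument must \emph{break}. Here the within-unit reordering replaces the true cross-time covariances in $\Sigma_Z$ by sampling-without-replacement averages — computed from the fact that $\mathrm{Cov}_\pi(a_{\pi(r)},a_{\pi(s)})$ equals the within-period variance of $a$ when $r=s$ and $-\tfrac1{T-1}$ times it when $r\ne s$ — so that $\Sigma_Z^\pi\ne\Sigma_Z$ in general. To demonstrate over-rejection it suffices to exhibit a single $P$ under $H_0$ with $T=3$, $K=1$ for which the $(1-\alpha)$-quantile of $S(\lambda^\pi)$ is strictly below that of $S(\lambda)$ for some small $\alpha$; then the steps above give $\lim_n E_P[\phi_n^\pi(\alpha)]=P\big(S(\lambda)>q_{1-\alpha}(S(\lambda^\pi))\big)>\alpha$. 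A convenient candidate is $X_{i,t}\in\{0,1\}$ following a stationary symmetric two-state Markov chain with small stay-probability, with $u_1$ chosen so that $1(X_{i,t}\le u_1)=1(X_{i,t}=0)$: both $\Sigma_Z$ and $\Sigma_Z^\pi$ are then positive-definite $2\times2$ matrices, but one computes $\mathrm{tr}(\Sigma_Z^\pi)<\mathrm{tr}(\Sigma_Z)$ and $\lambda^\pi_{\max}<\lambda_{\max}$, so the right tail of $S(\lambda^\pi)$ is lighter than that of $S(\lambda)$ and $q_{1-\alpha}(S(\lambda^\pi))<q_{1-\alpha}(S(\lambda))$ for all sufficiently small $\alpha$. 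Pinning down such a DGP and verifying the quantile ordering is where essentially all the content lies; everything else reduces to the conditional CLT of the first paragraph together with the standard randomization-test lemmas.
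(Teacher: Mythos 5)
Your proposal is correct in substance and mirrors the paper's architecture: identify the limiting randomization covariance as the permutation-average of the (permuted) covariance of a single unit, show it equals $\Sigma_Z$ when $T=2$, show it is positive definite under Assumption \ref{ass:NS} via the identity permutation so that the studentized statistic has the matching $\chi^2_{(T-1)K}$ limit, and break validity for $T>2$ with a stationary two-state Markov chain, $K=1$ — exactly the paper's counterexample family (the paper takes $\tau_1=\tau_2=0.1$, i.e.\ your ``small stay-probability'' chain). The technical route to the randomization limit differs: you use a conditional Lindeberg CLT given $\mathbf{X}_n$, exploiting that the conditional mean of $\hat Z^{\pi}$ is exactly zero and that the summands are bounded and independent across units, whereas the paper verifies Hoeffding's condition (joint convergence under two independent random permutations, via Cram\'er--Wold and an unconditional CLT, then \citet[Lemma A.1]{chung/romano:2016}), after explicitly constructing the matrices $B(\pi)$ that represent permuted indicator differences as telescoping sums of the original ones; both devices are standard and deliver the same limit $\Omega_Z^{\pi}=\frac{1}{T!}\sum_{\pi}B(\pi)\Omega_Z B(\pi)'$, and your route is arguably more direct, though the step from a random conditional covariance to convergence in probability of the conditional law needs a subsequence or Berry--Esseen argument you only gesture at, and the claim $\hat\Sigma_{Z^{\pi}}\overset{p}{\to}\Omega_Z^{\pi}$ that part (c) needs is proved in the paper as a separate lemma (Lemma \ref{lem:perm_cov}) while you assert it. The one place where real work is left on the table is part (b): you write ``one computes'' that $\lambda^{\pi}_{\max}<\lambda_{\max}$ for the symmetric chain; this is in fact true (for stay-probability $\tau=0.1$ the eigenvalues are $0.855,0.045$ versus $0.54,0.24$), but your tail-comparison argument only yields $q_{1-\alpha}(S(\lambda^{\pi}))<q_{1-\alpha}(S(\lambda))$, hence over-rejection, for \emph{sufficiently small} $\alpha$, whereas the theorem fixes an arbitrary $\alpha\in(0,1)$; the paper instead computes $\Omega_Z^{\pi}$ explicitly and verifies the quantile ordering and limiting rejection rates numerically at $\alpha\in\{0.10,0.05,0.01\}$ (e.g.\ limit $0.1194$ at $\alpha=0.05$), which is a more concrete but also level-specific verification — so neither argument covers every $\alpha$, but yours should either carry out the eigenvalue computation or state the small-$\alpha$ qualification explicitly.
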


We first describe the result for the non-studentized test in \eqref{eq:test_perm}. Part (a) shows that the non-studentized test is asymptotically valid for $T=2$.  Part (b) reveals that the previous result fails for $T>2$. This finding is apparently new in the literature and is empirically relevant, as many economic applications involve panel data with more than two periods. To gain intuition about (a) and (b), it is useful to compare the variance-covariance of the $i$th observation, i.e., $\{X_{i,t}\}_{t=1}^{T}$, before and after permutations. If the variance-covariance matrix remains the same, the permutation test can be shown to be asymptotically valid. See the proof of Theorem \ref{thm:P_fixed} for a justification. When $T=2$, the $i$th observation is $(X_{i,1},X_{i,2})$ before the permutation and a mixture of $(X_{i,1},X_{i,2})$ and $(X_{i,2},X_{i,1})$ after the permutation. Under $H_{0}$ in \eqref{eq:H0}, these two random vectors share the same variance-covariance matrix. When $T>2$, this equivalence breaks down. For example, when $T=3$, the $i$th observation is $(X_{i,1},X_{i,2},X_{i,3})$ before the permutation and a mixture of $(X_{i,1},X_{i,2},X_{i,3})$, $(X_{i,1},X_{i,3},X_{i,2})$, $(X_{i,2},X_{i,1},X_{i,3})$, $(X_{i,2},X_{i,3},X_{i,1})$, $(X_{i,3},X_{i,1},X_{i,2})$, and  $(X_{i,3},X_{i,2},X_{i,1})$ after the permutation. The variance-covariance matrix of the former and the latter can differ even under $H_{0}$ in \eqref{eq:H0}, which renders the permutation test invalid.

Finally, part (c) of Theorem \ref{thm:P_fixed} shows that the studentized test in \eqref{eq:test_perm_norm} is asymptotically valid for any $T \geq 2$. This result is in line with several studies in the literature that prove the asymptotic validity of permutation tests for suitably normalized test statistics, e.g., \cite{janssen:1997,chung/romano:2013,diciccio/romano:2017,chung/olivares:2021}. This result is also related to the earlier discussion regarding variance-covariance matrices of the $i$th observation before and after permutations. After studentization, the variance-covariance matrix of the $i$th observation is asymptotically invariant to permutations. By the logic of the previous paragraph, this implies that the permutation test is asymptotically valid for the studentized test statistic.

An important advantage of the permutation tests over the ones described in previous subsections lies in their finite-sample validity under an important class of distributions that satisfy $H_{0}$ in \eqref{eq:H0}. To explain this clearly, let $\Omega_{\rm TE}$ denote the set of distributions that satisfy time exchangeability, i.e., for any $i=1,\dots,n$, $\{X_{i,t}\}_{t=1}^T $ has the same distribution as $\{X_{i,\pi(t)}\}_{t=1}^T$ for each $\pi \in \mathcal{M}$. The following result describes the finite-sample validity of our test under suitable conditions.

\begin{theorem}
\label{thm:fsperm}
Let Assumption \ref{ass:A1} hold. Then,
\begin{enumerate}
\item[(a)] $P \in \Omega_{\rm TE}$ implies that $P$ satisfies $H_{0}$ in \eqref{eq:H0}. However, the converse does not hold.
\item[(b)] For each $P \in \Omega_{\rm TE}$, both of our permutation tests are finite-sample valid, i.e.,
\begin{equation}
            E_P[\phi_n^\pi(\alpha)] \leq \alpha 
       ~~~~ \text{and} ~~~~
        E_P[\bar{\phi}_n^\pi(\alpha)] \leq  \alpha.
        \label{eq:FS_SizeControl}
\end{equation}
\end{enumerate}
\end{theorem}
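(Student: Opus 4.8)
\textbf{Proof proposal for Theorem \ref{thm:fsperm}.}

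The plan is to establish the two parts separately, with part (a) being a direct verification and part (b) following from the classical finite-sample exactness argument for permutation tests once the right invariance is identified. For part (a), the forward implication ($P \in \Omega_{\rm TE} \Rightarrow H_0$) is immediate: time exchangeability says $\{X_{i,t}\}_{t=1}^T \overset{d}{=} \{X_{i,\pi(t)}\}_{t=1}^T$ for every $\pi \in \mathcal{M}$, and taking $\pi$ to be the transposition swapping $s$ and $t$ and projecting onto the $s$-th coordinate gives $X_{i,s} \overset{d}{=} X_{i,t}$, i.e., $F_s = F_t$ for all $s,t$, which is exactly \eqref{eq:MHdefn}. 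For the converse failing, I would exhibit a simple counterexample with $T=2$: take $(X_{i,1},X_{i,2})$ with identical marginals but an asymmetric joint law, e.g., a bivariate distribution whose copula is not exchangeable (so $P(X_{i,1}\le a, X_{i,2}\le b) \ne P(X_{i,1}\le b, X_{i,2}\le a)$ for some $a,b$). Such a $P$ satisfies $H_0$ but violates time exchangeability, since $(X_{i,1},X_{i,2})$ and $(X_{i,2},X_{i,1})$ have different joint CDFs.

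For part (b), the key observation is that when $P \in \Omega_{\rm TE}$, the entire panel $\mathbf{X}_n$ is invariant in distribution under the action of the permutation group $\mathcal{M}^n$ that independently permutes the time indices within each unit: because units are i.i.d.\ (Assumption \ref{ass:A1}) and each unit's time-vector is exchangeable, applying any $\pi^n \in \mathcal{M}^n$ yields $\mathbf{X}_n^\pi \overset{d}{=} \mathbf{X}_n$. I would then invoke the generic finite-sample validity of (non-randomized) permutation tests — the argument in \citet[Section 17.2.1, proof of Theorem 17.2.1]{lehmann/romano:2022}: since $S_n = S_n^{\pi_{\rm id}}$ is one element of the multiset $\{S_n^\pi : \pi^n \in \mathcal{M}^n\}$ and the joint distribution of this multiset is invariant under relabeling by the group, the rank of $S_n$ within the multiset is (sub-)uniform, so $P(S_n > c_n^\pi(1-\alpha)) \le \alpha$; the same applies verbatim to $\bar S_n$ with the multiset $\{\bar S_n^\pi\}$, because the studentized statistic is also a fixed measurable function of $\mathbf{X}_n$ evaluated along each permutation. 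The non-randomized version only makes the test more conservative (Remark \ref{rem:nonRandomPerm}), so the inequalities in \eqref{eq:FS_SizeControl} hold.

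The one point requiring a little care — and the main obstacle — is verifying that the critical values $c_n^\pi(1-\alpha)$ and $\bar c_n^\pi(1-\alpha)$, as defined via the empirical permutation distributions in Section \ref{sec:PT}, are themselves functions of $\mathbf{X}_n$ that are permutation-invariant, so that the ``randomization hypothesis'' of \citet[Definition 17.2.1]{lehmann/romano:2022} applies cleanly. This is true because $c_n^\pi(1-\alpha)$ is a quantile of the multiset $\{S_n^\pi : \pi^n \in \mathcal{M}^n\}$, and this multiset is unchanged if we first relabel $\mathbf{X}_n$ by any fixed $\sigma^n \in \mathcal{M}^n$ (the group composes), hence $c_n^\pi(\cdot)$ is invariant; likewise for $\bar c_n^\pi(\cdot)$. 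One should also note that the studentized permutation statistic uses the permuted covariance estimate $\hat\Sigma_{Z^\pi}$ in \eqref{eq:perm_emp_cov}, which is exactly the covariance estimate recomputed on the permuted data, so $\bar S_n^\pi$ is the same measurable functional applied to $\mathbf{X}_n^\pi$ — this is what makes the group-invariance argument go through for the studentized case without any non-singularity assumption (Assumption \ref{ass:NS} is not needed here, in contrast to the asymptotic results). With these observations in hand, the conclusion follows from the standard argument.
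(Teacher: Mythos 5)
Your proposal is correct and follows essentially the same route as the paper: the forward direction of (a) via a transposition of the two time indices, and (b) by verifying that Assumption \ref{ass:A1} together with time exchangeability gives $\mathbf{X}_n \overset{d}{=} \mathbf{X}_n^\pi$ for every $\pi^n \in \mathcal{M}^n$ (the randomization hypothesis of \citet[Definition 17.2.1]{lehmann/romano:2022}), then invoking \citet[Theorem 17.2.1]{lehmann/romano:2022} and noting the non-randomized test is weakly conservative, with no need for Assumption \ref{ass:NS}. The only difference is cosmetic: for the failure of the converse in (a), the paper gives the explicit example $X_{i,1}=X_{i,2}\perp X_{i,3}$ with $N(0,1)$ marginals, whereas you appeal to the existence of a non-exchangeable bivariate copula with identical marginals, which is equally valid.
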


Theorem \ref{thm:fsperm} implies that our permutation tests are finite-sample valid under suitable conditions. Part (a) says that a time-exchangeable distribution satisfies the null hypothesis $H_{0}$ in \eqref{eq:H0} under our maintained Assumption \ref{ass:A1}. Under such distribution, the permutation tests provide finite-sample size control. We stress that size control is not exact (i.e., the inequalities in \eqref{eq:FS_SizeControl} might be strict) only because we are using a non-randomized permutation test; see Remark \ref{rem:nonRandomPerm}. That is, if we replaced our permutation tests with their random versions, these would enjoy exact size control (i.e., both inequalities in \eqref{eq:FS_SizeControl} would hold with equality).

The combination of Theorems \ref{thm:P_fixed} and \ref{thm:fsperm} justifies the use of the studentized permutation test in \eqref{eq:test_perm_norm} (and also the non-studentized one in \eqref{eq:test_perm} when $T=2$). Theorem \ref{thm:P_fixed} indicates that this test is asymptotically exact, and Theorem \ref{thm:fsperm} shows that it is finite-sample valid for an important class of distributions in $\Omega_{\rm TE}$. As already mentioned, the finite-sample validity makes the permutation test an exceptionally attractive inference method compared to those discussed in previous subsections.

\begin{remark}\label{rem:pauly}
It is worth noting that our class of time permutations differs from the class of ``all permutations'' that uniformly permute both time periods and units. The latter has been used in the previous literature, such as \cite{friedrich/brunner/pauly:2017}. Naturally, our time permutations form a strict subset of the class of ``all permutations''. On the flip side, under our maintained Assumption \ref{ass:A1}, the class of distributions that satisfy exchangeability over ``all permutations'' is {\it more restrictive} than the set of distributions that satisfy time exchangeability only; see Lemma \ref{lem:Pauly}. In other words, if we used a permutation test based on ``all permutations'', we could only establish Theorem \ref{thm:fsperm} for a substantially smaller class of distributions than $\Omega_{\rm TE}$. Importantly, however, our objective is not to test whether $P \in \Omega_{\rm TE}$ or whether the distribution is fully exchangeable, but to test the null hypothesis \eqref{eq:H0} on equality of distributions across periods. 
\end{remark}

\section{Power analysis}\label{sec:power}

This section briefly describes the power properties of the various marginal homogeneity tests considered in this paper. Given the results in Section \ref{sec:validity}, we restrict attention to the hypothesis tests that are asymptotically valid under Assumptions \ref{ass:A1}–\ref{ass:NS}. These are as follows:
\begin{itemize}
    \item The asymptotic approximation-based test, both non-studentized and studentized.
    \item The bootstrap-based test, both non-studentized and studentized.
    \item The studentized permutation-based test.
    \item The non-studentized permutation-based test for $T=2$.
\end{itemize}

As explained in Section \ref{sec:TestStat}, our CvM test statistic is defined to detect differences in marginal CDFs at any point in $\mathcal{U}_{K}$. We thus focus our power analysis on the following subset of $H_1$: 
\begin{equation}
    \bar{H}_{1}: F_t(u)\neq F_{r}(u)~~\text{ for some }~t,r=1,\dots,T~\text{ and }~u \in \mathcal{U}_{K}.
\end{equation}
For all fixed hypotheses in $\bar{H}_{1}$, it is not hard to see that the CvM test statistics in \eqref{eq:cvm} and \eqref{eq:norm_cvm} diverge. At the same time, one can establish that the critical values described throughout this paper remain bounded in probability. For this reason, all asymptotically valid tests are consistent against any fixed hypothesis in $\bar{H}_{1}$.

We now compare the local power properties of these tests. We consider local alternative hypotheses under $\bar{H}_{1}$, which are sequences of DGPs whose marginal CDFs $\{F_{n,t}:t=1,\dots,T\}$ satisfy $\{\sqrt{n} (F_{n,t}(u_k) - F_{n,r}(u_k)):t,r=1,\dots,T,u \in \mathcal{U}_{K}\} \to c \in \mathbb{R}^{T^2 K}$ with $c'c > 0$. Under these sequences of distributions, we can repeat the arguments used to prove Theorem \ref{thm:asyDistSample} to establish the asymptotic distribution of the CvM test statistic in \eqref{eq:cvm} and \eqref{eq:norm_cvm}. It is not hard to see that these become the non-central versions of the asymptotic distributions in \eqref{eq:LimitingS} and \eqref{eq:LimitingS_norm} under $H_0$. Moreover, under these local alternatives, the critical values used throughout the paper can be shown to have the same asymptotic behavior. As a corollary, the asymptotically valid tests based on the non-studentized statistics share the same local power properties, and the same holds for those based on the studentized statistics.

\section{Monte Carlo simulations}\label{sec:MC}

This section investigates the finite-sample performance of the marginal homogeneity tests proposed in this paper. To this end, we repeatedly simulate independent panel datasets ${\bf{X}}_n = \{\{X_{i,t}\}_{t=1}^{T}\}_{i=1}^{n}$ where, for each $i=1,\dots,n$ and $t=1,\dots,T$,
\[
    X_{i,t} ~=~ (\varepsilon_{i,t} + \xi_i)\alpha_t
\]
with
\[
    \left(
    \begin{array}{c}
       \{\varepsilon_{i,t}\}_{t=1}^T \\
     \xi_i
    \end{array}
    \right)~\sim~ 
    N\left(0_{(T+1)\times 1}, \begin{bmatrix}
    \Xi & 0_{T\times 1} \\
    0_{1\times T} & 1
    \end{bmatrix}\right),
\]
where $\{\alpha_t\}_{t=1}^{T}$ is a sequence of constants and $\Xi$ is a constant positive-definite matrix. By definition, $\xi_i$ is a random effect and $\{\varepsilon_{i,t}\}_{t=1}^T$ are transient shocks with variance-covariance matrix $\Xi$. We consider two specifications for $\Xi$:
\begin{itemize}
\item WN: $\Xi = I_{T\times T}$, i.e., $\{\varepsilon_{i,t}\}_{t=1}^T$ is a white noise process with zero mean and unit variance. The distribution of ${\bf{X}}_n$ is time-exchangeable  whenever $\alpha_t$ is constant across $t$.
\item AR(1): for all $t_1,t_2=1,\dots,T$ with $t_1\neq t_2$, $\Xi[t_1,t_1]=1$ and $\Xi[t_1,t_2]=(-0.9)^{|t_1-t_2|}$, i.e., $\{\varepsilon_{i,t}\}_{t=1}^T$ is an AR(1) process zero mean, variance one, and correlation coefficient $-0.9$. In this case, the distribution of ${\bf{X}}_n$ is not time-exchangeable.
\end{itemize}
We consider two options for $\{\alpha_t\}_{t=1}^{T}$, which determine whether the data satisfies marginal homogeneity or not. For simulations under $H_0$ in \eqref{eq:H0}, we use $\alpha_t = 1$ for $t=1,\dots,T$. For simulations under $H_1$ in \eqref{eq:H0}, we set $\alpha_t = \sqrt{1 + (t-1)/2}$ for $t=1,\dots,T$.

To compute the CvM statistics we set $\mathcal{U}_K$ equal to the 1/6, 2/6, 3/6, 4/6, and 5/6 empirical quantiles of $\{\{X_{i,t}\}_{t=1}^{T}\}_{i=1}^{n}$ (thus, $K=5$). For each dataset, we implemented the following tests with a significance level of $\alpha = 5\%$:
\begin{itemize}
    \item AA: Asymptotic approximation tests in Section \ref{sec:AA}, non-studentized as in \eqref{eq:testAA} and studentized by sample covariance matrix $\hat{\Sigma}_Z$ as in \eqref{eq:testAA_norm}.
    \item BS: Bootstrap tests in Section \ref{sec:BS}, non-studentized as in \eqref{eq:testB} and studentized as in \eqref{eq:testB_norm}, the latter using the sample covariance matrix $\hat{\Sigma}_Z$ for studentization.
    \item BS2: Studentized bootstrap tests based on a bootstrapped covariance estimator, as described in Remark \ref{rem:bootEstVar}.
    \item PT: Permutation tests in Section \ref{sec:PT}, non-studentized as in \eqref{eq:test_perm} and studentized as in \eqref{eq:test_perm_norm}, the latter using the permutation covariance matrix $\hat{\Sigma}_Z^\pi$ for studentization.
    \item PT2: Permutation tests based on the class of ``all permutations'' (i.e., time periods and units), as described in Remark \ref{rem:pauly}. These can be non-studentized and studentized.
\end{itemize}

We conduct simulations with $n \in \{30,60,120,240,480\}$ units and $T \in \{2,3\}$ periods. The results shown in the tables are obtained from $S=5,000$ independent panel data draws from the design under consideration.

Table \ref{tab:panel_size} describes the rejection rates of the various tests under marginal homogeneity, i.e., $H_0$ in \eqref{eq:H0}. When $T=2$, all our proposed hypothesis tests (AA, BS, PT, studentized or not) provide exact size control when $n$ is sufficiently large. These results are consistent with our asymptotic analysis. This conclusion also seems to apply to the BS2 and the studentized PT2 tests. On the other hand, the non-studentized PT2 test does not control size. This is consistent with our discussion in Lemma \ref{lem:Pauly}, where we argue that the permutation class used to implement our PT tests is qualitatively different from the one used for the PT2 tests. Our simulations also allow us to examine how our asymptotically valid methods perform when $n$ is relatively small. In this respect, one interesting finding is that the non-studentized versions of the AA and BS tests outperform their corresponding studentized ones when $n$ is small. On the other hand, the PT test performs equally well with and without studentization for small $n$. The results for $T=3$ are qualitatively similar to those for $T=2$, except for the PT test. For $T>2$, our formal results show that the studentized PT test is asymptotically valid, but the non-studentized PT test is not. This is clearly evidenced in our simulations with AR(1) shocks, where the non-studentized PT test exhibits rejection rates close to 13\%. However, if the data are time-exchangeable, as with the WN shocks, we observe that the PT test is finite-sample valid regardless of studentization, even when $T=3$.

\begin{table}[htbp]
\centering
\scalebox{1}{
\begin{tabular}{ccccrrrrr}
\hline \hline
$T$     &  $\Xi$ & Test type      & Critical value & $n=30$  & $n=60$  & $n=120$ & $n=240$ & $n=480$ \\
\midrule
\multirow{18}[8]{*}{$T=2$} & \multirow{9}[4]{*}{WN} & \multirow{4}[2]{*}{Non-stud.} & AA    & 5.48  & 5.54  & 5.08  & 4.76  & 5.16 \\
      &       &       & BS    & 5.60  & 5.36  & 5.14  & 4.78  & 5.06 \\
      &       &       & PT    & 4.54  & 4.94  & 4.82  & 4.62  & 5.08 \\
      &       &       & PT2   & 1.02  & 0.80  & 0.76  & 0.96  & 1.00 \\
\cmidrule{3-9}      &       & \multirow{5}[2]{*}{Studentized} & AA    & 14.34 & 8.54  & 6.94  & 6.08  & 4.96 \\
      &       &       & BS    & 14.24 & 8.56  & 6.98  & 6.00  & 4.86 \\
      &       &       & BS2   & 2.56  & 4.66  & 5.02  & 5.30  & 4.54 \\
      &       &       & PT    & 5.06  & 5.08  & 5.20  & 5.20  & 4.44 \\
      &       &       & PT2   & 4.50  & 5.02  & 5.16  & 5.28  & 4.44 \\
\cmidrule{2-9}      & \multirow{9}[4]{*}{AR(1)} & \multirow{4}[2]{*}{Non-stud.} & AA    & 5.12  & 6.02  & 5.12  & 5.24  & 5.22 \\
      &       &       & BS    & 5.36  & 5.90  & 5.12  & 5.22  & 5.10 \\
      &       &       & PT    & 4.16  & 5.32  & 4.88  & 5.00  & 5.06 \\
      &       &       & PT2   & 3.46  & 4.58  & 4.32  & 4.62  & 4.54 \\
\cmidrule{3-9}      &       & \multirow{5}[2]{*}{Studentized} & AA    & 14.00 & 8.78  & 6.88  & 6.30  & 5.58 \\
      &       &       & BS    & 14.18 & 8.96  & 6.98  & 6.40  & 5.52 \\
      &       &       & BS2   & 3.58  & 4.86  & 5.16  & 5.38  & 5.08 \\
      &       &       & PT    & 5.12  & 5.36  & 5.16  & 5.34  & 5.00 \\
      &       &       & PT2   & 5.10  & 5.32  & 5.40  & 5.42  & 5.16 \\
\midrule
\multirow{18}[8]{*}{$T=3$} & \multirow{9}[4]{*}{WN} & \multirow{4}[2]{*}{Non-stud.} & AA    & 5.68  & 4.80  & 5.46  & 5.16  & 5.56 \\
      &       &       & BS    & 5.84  & 4.74  & 5.38  & 5.22  & 5.52 \\
      &       &       & PT    & 5.30  & 4.66  & 5.38  & 5.00  & 5.38 \\
      &       &       & PT2   & 0.88  & 0.72  & 0.60  & 0.94  & 1.10 \\
\cmidrule{3-9}      &       & \multirow{5}[2]{*}{Studentized} & AA    & 35.82 & 16.74 & 9.88  & 7.04  & 5.80 \\
      &       &       & BS    & 36.00 & 16.78 & 9.90  & 6.98  & 5.84 \\
      &       &       & BS2   & 0.70  & 3.28  & 4.66  & 4.82  & 4.92 \\
      &       &       & PT    & 5.02  & 5.08  & 4.98  & 4.84  & 4.88 \\
      &       &       & PT2   & 3.78  & 4.52  & 4.82  & 4.66  & 4.86 \\
\cmidrule{2-9}      & \multirow{9}[4]{*}{AR(1)} & \multirow{4}[2]{*}{Non-stud.} & AA    & 6.22  & 6.18  & 5.76  & 5.20  & 5.64 \\
      &       &       & BS    & 6.08  & 6.24  & 5.74  & 5.42  & 5.38 \\
      &       &       & PT    & 12.70 & 13.80 & 13.88 & 13.68 & 12.96 \\
      &       &       & PT2   & 6.14  & 7.30  & 6.94  & 6.44  & 6.84 \\
\cmidrule{3-9}      &       & \multirow{5}[2]{*}{Studentized} & AA    & 33.44 & 17.06 & 10.38 & 7.26  & 5.96 \\
      &       &       & BS    & 33.94 & 16.92 & 10.12 & 7.22  & 5.80 \\
      &       &       & BS2   & 1.78  & 3.14  & 4.26  & 4.80  & 4.98 \\
      &       &       & PT    & 4.62  & 5.26  & 5.08  & 5.14  & 5.12 \\
      &       &       & PT2   & 3.36  & 4.54  & 4.84  & 4.86  & 5.02 \\
\hline \hline
\end{tabular}
}
\caption{Empirical rejection rates (in \%) under $H_0$ in \eqref{eq:H0} for $\alpha = 5\%$ based on $S=5,000$ i.i.d.\ panel datasets for the various designs.}
\label{tab:panel_size}
\end{table}

Table \ref{tab:panel_power} explores the performance of the same test for data configurations that do not satisfy marginal homogeneity. To make the comparison fair, we focus on asymptotically valid inference methods. For $T=2$, this includes studentized and non-studentized versions of the AA, BS, and PT tests, and studentized BS2. Our results indicate that studentized tests are considerably more powerful than their corresponding non-studentized versions. The main difference in the case of $T=3$, is that we now have to eliminate the non-studentized PT test. With this exception, the results with $T=3$ are qualitatively similar to those for $T=2$.

\begin{table}[htbp]
\centering
\scalebox{1}{
\begin{tabular}{ccccrrrrr}
\hline \hline
$T$     &  $\Xi$ & Test type      & Critical value & $n=30$  & $n=60$  & $n=120$ & $n=240$ & $n=480$ \\
\midrule
\multirow{18}[7]{*}{$T=2$} & \multirow{9}[4]{*}{WN} & \multirow{4}[2]{*}{Non-stud.} & AA    & 8.04  & 11.06 & 16.22 & 32.12 & 65.34 \\
      &       &       & BS    & 8.06  & 11.00 & 16.34 & 32.12 & 65.08 \\
      &       &       & PT    & 6.66  & 10.20 & 15.78 & 31.68 & 64.98 \\
      &       &       & PT2   & 1.30  & 2.14  & 3.44  & 7.64  & 28.86 \\
\cmidrule{3-9}      &       & \multirow{5}[2]{*}{Studentized} & AA    & 19.88 & 18.74 & 27.12 & 46.40 & 78.66 \\
      &       &       & BS    & 20.04 & 18.70 & 27.14 & 46.50 & 78.86 \\
      &       &       & BS2   & 4.92  & 10.96 & 22.38 & 43.90 & 77.68 \\
      &       &       & PT    & 8.32  & 11.96 & 22.60 & 43.94 & 77.72 \\
      &       &       & PT2   & 7.66  & 11.74 & 22.42 & 43.82 & 77.84 \\
\cmidrule{2-9}      & \multirow{9}[3]{*}{AR(1)} & \multirow{4}[2]{*}{Non-stud.} & AA    & 6.50  & 7.64  & 9.54  & 18.14 & 39.56 \\
      &       &       & BS    & 6.58  & 7.76  & 9.22  & 18.14 & 39.70 \\
      &       &       & PT    & 5.36  & 7.08  & 8.90  & 17.80 & 39.40 \\
      &       &       & PT2   & 4.48  & 6.10  & 7.90  & 15.68 & 36.88 \\
\cmidrule{3-9}      &       & \multirow{5}[1]{*}{Studentized} & AA    & 18.82 & 17.60 & 22.52 & 40.96 & 72.34 \\
      &       &       & BS    & 18.96 & 17.76 & 22.50 & 40.90 & 72.24 \\
      &       &       & BS2   & 5.16  & 10.96 & 18.26 & 38.30 & 71.30 \\
      &       &       & PT    & 7.48  & 11.48 & 18.48 & 38.36 & 71.06 \\
      &       &       & PT2   & 7.46  & 11.52 & 18.52 & 38.44 & 71.22 \\
      \midrule
\multirow{18}[7]{*}{$T=3$} & \multirow{9}[3]{*}{WN} & \multirow{4}[1]{*}{Non-stud.} & AA    & 6.98  & 8.50  & 12.70 & 27.70 & 72.42 \\
      &       &       & BS    & 7.16  & 8.54  & 12.64 & 27.78 & 72.56 \\
      &       &       & PT    & 6.32  & 8.14  & 12.18 & 27.26 & 72.44 \\
      &       &       & PT2   & 1.00  & 1.34  & 1.58  & 3.78  & 18.86 \\
\cmidrule{3-9}      &       & \multirow{5}[2]{*}{Studentized} & AA    & 51.14 & 46.10 & 62.68 & 89.28 & 99.66 \\
      &       &       & BS    & 51.22 & 45.94 & 62.56 & 89.20 & 99.68 \\
      &       &       & BS2   & 1.86  & 16.14 & 47.42 & 85.40 & 99.60 \\
      &       &       & PT    & 11.30 & 21.28 & 48.82 & 85.74 & 99.60 \\
      &       &       & PT2   & 9.20  & 19.60 & 47.88 & 85.52 & 99.54 \\
\cmidrule{2-9}      & \multirow{9}[4]{*}{AR(1)} & \multirow{4}[2]{*}{Non-stud.} & AA    & 6.38  & 7.16  & 7.56  & 10.44 & 21.30 \\
      &       &       & BS    & 6.34  & 7.10  & 7.64  & 10.58 & 21.28 \\
      &       &       & PT    & 14.26 & 16.64 & 19.74 & 28.10 & 65.38 \\
      &       &       & PT2   & 6.42  & 8.42  & 9.38  & 12.98 & 28.18 \\
\cmidrule{3-9}      &       & \multirow{5}[2]{*}{Studentized} & AA    & 62.84 & 72.54 & 94.60 & 99.94 & 100.00 \\
      &       &       & BS    & 63.34 & 72.50 & 94.60 & 99.94 & 100.00 \\
      &       &       & BS2   & 6.24  & 33.40 & 86.92 & 99.88 & 100.00 \\
      &       &       & PT    & 16.30 & 45.90 & 89.34 & 99.88 & 100.00 \\
      &       &       & PT2   & 13.00 & 43.08 & 88.52 & 99.88 & 100.00 \\
\hline \hline
\end{tabular}
}
\caption{Empirical rejection rates (in \%) under $H_1$ in \eqref{eq:H0} based on $\alpha_t = \sqrt{1 + (t-1)/2}$ for $t=1,\dots,T$, with $\alpha = 5\%$ based on $S=5,000$ i.i.d.\ panel datasets for the various designs.}
\label{tab:panel_power}
\end{table}

Tables \ref{tab:panel_size} and \ref{tab:panel_power} offer interesting conclusions regarding the finite sample properties of the asymptotically valid tests. When the sample size is relatively small, the non-studentized tests appear to provide better size control than their studentized counterparts, though at the expense of lower power. As the sample size increases, the studentized tests improve their size control. As a result, the studentized tests seem to be a better option for larger sample sizes: they offer adequate size control and higher power than their non-studentized counterparts.

\section{Empirical application} \label{sec:Application}

This section applies our marginal homogeneity tests to the state variable in the dynamic discrete choice game in \cite{igami/yang:2016}. In this paper, the authors develop and estimate a dynamic entry model oligopoly game among Canada's five main hamburger chains: A\&W, Burger King, Harvey's, McDonald's, and Wendy's. They use yearly data from 400 geographical markets\footnote{The paper defines a market as a cluster of stores located within a 0.5-mile radius at any point of their sample period. Markets in downtown areas are omitted because they experience a different form of competition.} located in seven major Canadian cities between 1970 and 2004, i.e., $n=400$ and $T=35$. For each market-year pair $(i,t)$, they observe the number of stores for each chain, population, and income.

The state variable used in \cite{igami/yang:2016} is a discrete categorical variable whose value represents the number of stores of each chain, population, and income of a given market-year pair. We now briefly explain its construction, and defer to their \citet[Section 4.1]{igami/yang:2016} for details. The paper restricts the number of stores per chain to three, and divides population and income into quartiles. 

For each period $t=1,\dots,T=35$ and market $i=1,\dots,n=400$, $X_{i,t}$ is uniquely determined by the number of stores (up to three) for each chain $j=1,2,3,4,5$, ${N}_{i,t,j} \in \{0,1,2,3\}$, the population quartile $P_{i,t} \in \{1,2,3,4\}$, and the income quartile $I_{i,t}\in \{1,2,3,4\}$. So, $X_{i,t}=1$ indicates that ${N}_{i,t,j}=0$ for all $j=1,2,3,4,5$, $P_{i,t}=1$, and $I_{i,t}=1$; $X_{i,t}=2$ indicates that ${N}_{i,t,j}=0$ for all $j=1,2,3,4,5$, $P_{i,t}=1$, and $I_{i,t}=2$, and so on, until $X_{i,t}=4^7=16,384$ indicates that ${N}_{i,t,j}=3$ for all $j=1,2,3,4,5$, $P_{i,t}=4$, and $I_{i,t}=4$. While $X_{i,t}$ could take up to $16,384$ possible values, it only takes $467$ distinct values in the entire dataset. Our goal is to assess whether the marginal distribution of the discrete state variable $X_{i,t}$ remains stable for a chosen period of time.\footnote{Because income and population are discretized, testing with these discretized states naturally raises a post-selection inference problem. This is a common issue because structural model estimation typically relies on discretized states. We omit a detailed treatment here, as it lies beyond the scope of the paper.}

The data spans an extensive period in which the Canadian fast-food industry grew considerably. As \citet[Section 3.2]{igami/yang:2016} reports, the average number of shops per market ranged from less than 0.5 in the 1970s to approximately 1.8 in the early 2000s. We now provide further evidence about the evolution of this industry. Figure \ref{fig:Figure3} shows the average number of competitors per market over time. This figure reveals that the frequency of empty markets decreased steadily between 1970 and 2000, while the frequency of monopoly, duopoly, and triopoly increased more steadily over the same period. In contrast, during 2000-2004, the frequency of each market type has remained relatively stable. This evidence suggests that the Canadian fast-food industry evolved between 1970 and the early 2000s and may have reached a steady state in the last years of the sample. The hypothesis tests developed in this paper can be used to evaluate whether the state distribution is homogeneous across any period in the sample.

\begin{figure}
    \centering
  \scalebox{0.55}{\includegraphics{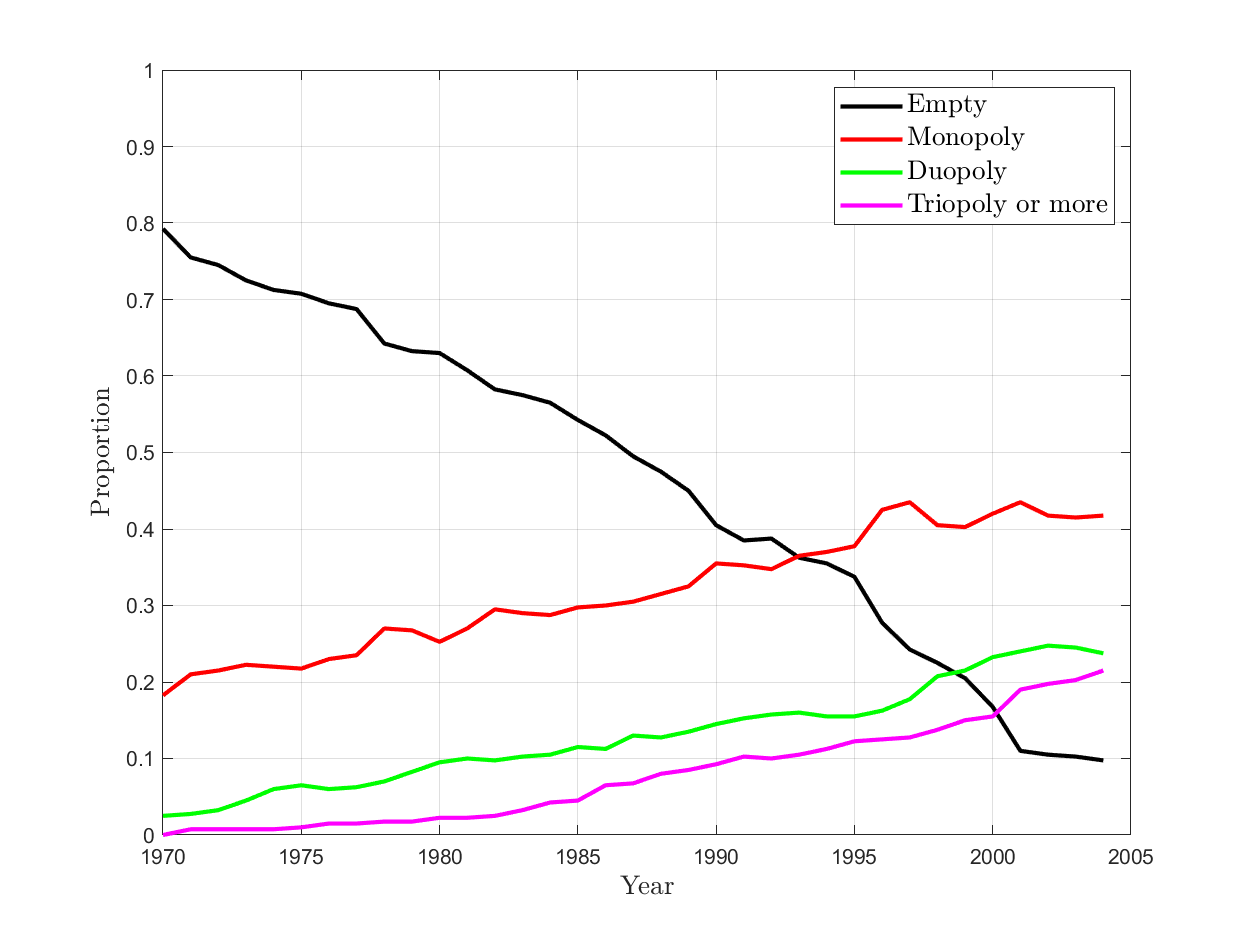}}
    \caption{\small Average type of market over time.}
    \label{fig:Figure3}
\end{figure}

As explained in Section \ref{sec:introduction}, the marginal homogeneity of the state variable can be a source for efficiency gains in the estimation of dynamic discrete games. With this motivation in mind, we now apply our marginal homogeneity tests to our panel data of the state variable. Given the large number of values that the variable takes, Assumption \ref{ass:NS} does not hold over the sample period. For this reason, we only consider non-studentized tests. We implement our tests for two subsets of periods. Since we consider panel data with $T>2$, the non-studentized asymptotic approximation and bootstrap tests are valid, but the permutation test is not. Table \ref{tab:igami} presents the results of our hypothesis tests for two subsets of sample periods. First, we consider a subset of our data every five years, i.e., 1970, 1975, 1980, 1985, 1990, 1995, and 2000. In this case, our tests strongly reject the hypothesis of marginal homogeneity. This result is expected, as it is consistent with the informal discussion in the previous paragraph regarding the growth of the Canadian fast-food industry between 1970 and 2000. Second, we repeat the analysis for the last four years in the sample period, i.e., 2001, 2002, 2003, and 2004. In this case, our tests do not reject the hypothesis of marginal homogeneity. These results suggest that the Canadian fast-food industry may be in a steady state in the latter part of the sample period.

\begin{table}[ht]
\centering
\begin{tabular}{cccccc}
  \hline   \hline
Sample & Test type   & Test statistic &  \multicolumn{3}{c}{Critical Value with $\alpha = 5\%$} \\ 
 & && Asy. approx. & Bootstrap & Permutation \\ 
  \hline
{Pre-2000} & Non-stud.  & 10.76 & 0.88     &    0.82   &       2.35 \\
{Post-2000} & Non-stud.  & 0.08 & 0.13     &    0.13   &       0.23 \\
   \hline   \hline
\end{tabular}
\caption{Marginal homogeneity tests applied to panel data of the state variable in \cite{igami/yang:2016} with $\alpha = 5\%$. The Pre-2000 sample refers to the subsample with $t \in \{1970, 75,80,90,95,00\}$. The Post-2000 sample refers to the subsample with $t \in \{2001,02,03,04\}$. 
}
\label{tab:igami}
\end{table}

\section{Conclusions} \label{sec:Conclusions}

This paper proposes hypothesis tests to evaluate whether panel data satisfy marginal homogeneity. As we argue in the paper, marginal homogeneity is a relevant property in economic settings such as dynamic discrete games, difference-in-differences models, and finance.

Our asymptotic framework for panel data considers a diverging number of units $n$ and a fixed number of periods $T$. We implement our tests by comparing a studentized or non-studentized $T$-sample version of the Cram\'er-von Mises statistic with a suitable critical value. Relative to the non-studentized case, the asymptotic analysis of the studentized statistics requires an additional assumption: the variance–covariance matrix used for studentization must be non-singular. It is worth noting that this condition can fail in practice. In fact, it failed in our empirical application.

We investigate three methods to construct the critical value: asymptotic approximations, the bootstrap, and time permutations. We prove that the asymptotic approximation and bootstrap tests are asymptotically valid, regardless of whether we use studentized or non-studentized test statistics. The permutation test based on a non-studentized statistic is asymptotically exact when $T=2$, but is asymptotically invalid when $T>2$. In contrast, the permutation test based on a studentized statistic is always asymptotically exact. Finally, under a time-exchangeability assumption, the permutation test is valid in finite samples, both with and without studentization.

We also study the power of the various methods. The asymptotically valid tests we consider are consistent and have non-trivial asymptotic power under suitable local alternatives. Moreover, the asymptotically valid tests based on the non-studentized statistics share the same local power properties, and the same holds for the asymptotically valid tests based on the studentized statistics.

Our Monte Carlo simulations investigate the finite sample behavior of our tests. The non-studentized tests exhibit better finite-sample size control than their studentized counterparts, though this comes at the cost of lower power. Finally, we apply our test to the state variable of the dynamic oligopoly model of the Canadian fast-food industry in \cite{igami/yang:2016}. Our findings suggest that the industry evolved between 1970 and 2000, and appears to have reached a steady state since then.

\clearpage
 \appendix
 
 \section{Proofs}\label{sec:appendix_proofs}
 \begin{small}
 Throughout this appendix, we use RHS, LLN, CMT, CLT, PSD, PD, and to abbreviate ``right-hand side'', ``Kolmogorov's strong law of large numbers'', ``continuous mapping theorem'', ``Lindeberg-Levy central limit theorem'', ``positive semi-definite'', and ``positive definite'', respectively. We also define the sequence of random vectors $\{Z_i\}_{i=1}^n$ with 
\begin{equation}
    \label{eq:Zi_defn}
    Z_i ~\equiv~ \left\{\sqrt{{P}(u_k)} [1(X_{i,t} \leq u_k) - 1(X_{i,t+1} \leq u_k)]: \quad t=1,\ldots,T-1, ~~ k=1,\ldots,K\right\} \in \mathbb{R}^{(T-1)K},
\end{equation}
where $P(u_k)$ is the aggregated probability given in \eqref{eq:pop_mass}.

Let $\Omega \in  \mathbb{R}^{(T-1)K \times (T-1)K}$ denote a PSD matrix with a vector of eigenvalues $\ell \in  \mathbb{R}^{(T-1)K}$. By the Principal-Axis Theorem \cite[pp. 397, 418]{scheffe:1959}, $\epsilon'\Omega\epsilon$ with $\epsilon \sim N(0_{(T-1)K\times 1}, \Omega)$, has the same distribution as $S(\ell)$ defined in Section \ref{sec:AA}. That is, $\epsilon'\Omega\epsilon$ has a generalized chi-square distribution of weights equal to $\ell$, unit vector of degrees of freedom, zero vector of non-centrality parameters, and no constant or normal terms. Throughout this appendix, we use $c^\chi(1-\alpha; \Omega)$ to denote the $(1-\alpha)$-quantile of $\epsilon'\Omega\epsilon$.

\begin{proof}[Proof of Theorem \ref{thm:asyDistSample}]
\underline{Part (a).} Under Assumption \ref{ass:A1} and $H_0$, $\{Z_i\}_{i=1}^{n}$ are i.i.d.\ with $E[Z_i]=0_{1 \times (T-1)K}$ and $V[Z_i]=\Sigma_Z$ as defined in \eqref{eq:defnSigma}. By the CLT,
\begin{equation}
\tilde{Z} ~\equiv~ \frac{1}{\sqrt{n}} \sum_{i=1}^{n}{Z}_{i}' ~
\overset{d}{\to }~\xi 
~\sim~ N( 0_{(T-1)K \times 1},\Sigma_Z). \label{eq:LLCLT}
\end{equation}

For any $t=1,\dots,T-1$ and $k=1,\dots,K$, the $((t-1)K + k)$-component of $\hat{\delta}\equiv 
\hat{Z} - \tilde{Z}$ satisfies 
\begin{align}
    &\sqrt{n\hat{P}(u_k)}[\hat{F}_t(u_k) -\hat{F}_{t+1}(u_k) ] - \sqrt{n{P}(u_k)}[\hat{F}_t(u_k) -\hat{F}_{t+1}(u_k) ]\notag\\
    &\overset{(1)}{=}~\sqrt{n}[(\hat{F}_t(u_k) - F_t(u_k)) - (\hat{F}_{t+1}(u_k) - F_{t+1}(u_k))]~ (\sqrt{\hat{P}(u_k) }- \sqrt{P(u_k) }) \notag\\
    &\overset{(2)}{=} ~O_{p}(1) o_p(1)~ =~ o_{p}(1),\label{eq:resid}
\end{align}
where (1) holds by $F_t(u_k)=F_{t+1}(u_k)$, which is implied by $H_0$, and (2) by $\sqrt{n}[(\hat{F}_t(u_k) - F_t(u_k)) - (\hat{F}_{t+1}(u_k) - F_{t+1}(u_k))] = O_{p}(1)$ and $\hat{P}(u_k)-P(u_k) = o_{p}(1)$, which are implied by the CLT and LLN, respectively.

Then, consider the following derivation.
\begin{align}
{S}_{n}
~\overset{(1)}{=}~ 
\tilde{Z}'\tilde{Z} + 2\hat{\delta}'\tilde{Z} + \hat{\delta}'\hat{\delta}
~\overset{(2)}{\overset{d}{\to }}~\xi ^{\prime }\xi.\label{eq:expansion}
\end{align}
where (1) holds by $\hat{\delta} = \hat{Z} - \tilde{Z}$, and (2) by \eqref{eq:LLCLT} and \eqref{eq:resid}.

To complete the proof, it then suffices to show that the RHS of \eqref{eq:expansion} can be expressed as \eqref{eq:LimitingS}. To this end, consider an orthogonal decomposition of $\Sigma_Z = H^{\prime }\Lambda H$, where $H\in \mathbb{R}^{(T-1)K\times (T-1)K}$ is an orthogonal matrix (i.e., $HH^{\prime }=I_{(T-1)K\times (T-1)K}$) and $\Lambda =diag\{\{\lambda_{j}\}_{j=1}^{(T-1)K}\}$ is the diagonal matrix of eigenvalues of $\Sigma_Z$. The desired result then holds by the following derivation:
\begin{equation}
\xi ^{\prime }\xi ~\overset{(1)}{=}~\epsilon ^{\prime }\epsilon ~\overset{(2)}{=}~\sum_{j=1}^{(T-1)K}\lambda _{j}\zeta _{j}^{2},
\label{eq:chi2argument}
\end{equation}
where (1) holds for $\epsilon \equiv H\xi $ which implies $\epsilon ^{\prime}\epsilon =\xi ^{\prime }H^{\prime }H\xi =\xi ^{\prime }\xi $, and (2) by $\epsilon \sim N(0_{(T-1)K \times 1},\Lambda )$, and so $\epsilon =\{ \lambda _{j}^{1/2}\zeta _{j}\} _{j=1}^{(T-1)K}$ for $\zeta \sim N(0_{(T-1)K}, I_{(T-1)K \times (T-1)K})$. 

\noindent \underline{Part (b).} Under Assumption \ref{ass:A1} and the LLN, $\hat{\Sigma}_Z = \Sigma_Z +o_{p}(1)$. Then, Assumption \ref{ass:NS} and the CMT imply that $\hat{\Sigma}_Z^- = \Sigma_Z^{-1} +o_{p}(1)$. We can then repeat the arguments in part (a) to get
\begin{align}
    \hat{\delta} ~=~ o_p(1)~~~\text{ and }~~~\tilde{Z}' \hat{\Sigma}_Z^- \tilde{Z} ~\overset{d}{\to}~  \xi'\Sigma_Z^{-1}\xi  \sim \chi^2_{(T-1)K}.\label{eq:resid2}
\end{align}
From here, the desired result follows from the next derivation:
\begin{align*}
    \bar{S}_n
    ~\overset{(1)}{=}~ 
    \tilde{Z}' \hat{\Sigma}_Z^- \tilde{Z} +2 \hat{\delta}'\hat{\Sigma}_Z^- \tilde{Z}     + \hat{\delta}'\hat{\Sigma}_Z^-\hat{\delta}~\xrightarrow{d}~ \xi'\Sigma_Z^{-1}\xi ~\overset{(3)}{\sim}~ \chi^2_{(T-1)K},
\end{align*}
where (1) holds by $\hat{\delta}= \hat{Z} - \tilde{Z}$, (2) by \eqref{eq:resid2}, and (3) by $\xi \sim N( 0_{(T-1)K},\Sigma_Z)$.
\end{proof}


\begin{proof}[Proof of Theorem \ref{thm:AA_fixed}] 

\noindent \underline{Part (a).}
We divide the argument into two cases. 

\underline{Case 1: $\Sigma_Z \neq 0_{(T-1)K\times (T-1)K}$}. 
Let $\lambda \in \mathbb{R}^{(T-1)K}$ denote the vector of eigenvalues of $\Sigma_Z$. Since $\lambda $ is a continuous function of $\Sigma_Z$ and $\hat{\Sigma}_Z \overset{p}{\to }\Sigma_Z$ holds, the CMT implies that $\hat{\lambda}_{n}\overset{p}{\to }\lambda $. Since $\Sigma_Z$ is positive semi-definite, $\Sigma_Z \neq 0_{(T-1)K\times (T-1)K}$ implies that  $\lambda \neq 0_{(T-1)K\times 1}$. 

Note that $G(x,\ell)$ is continuous in $\ell$ for any $x\in \mathbb{R}$ and $\ell\in \mathbb{R}^{(T-1)K}\backslash 0_{(T-1)K\times 1}$. To see why, consider an arbitrary $x\in \mathbb{R}$ and sequence $\ell^{(n)}\rightarrow \ell\in \mathbb{R}^{(T-1)K}\backslash 0_{(T-1)K\times 1}$. The characteristic function of $S(\ell)$ is $$\varphi ( t,\ell) ~=~\prod_{j=1}^{(T-1)K }( 1-it\ell_{j}) ^{-1/2}$$ and satisfies $\varphi ( t,\ell^{(n)}) \to \varphi ( t,\ell) $ for all $t\in \mathbb{R} $. From this and Levy's Continuity Theorem (e.g., see \citet[Theorem 22.17]{davidson:1994}) we deduce that $S( \ell^{(n)}) \overset{d}{ \to }S( \ell) $. This is equivalent to $G(x,\ell^{(n)})\to G(x,\ell)$ since $S(\ell) $ is continuously distributed (as $\ell\neq 0_{(T-1)K\times 1}$). Since the choices of $x\in \mathbb{R} $ and $\ell \in \mathbb{R} ^{(T-1)K}\backslash 0_{(T-1)K\times 1} $ were arbitrary, the desired result follows. 

Given the continuity of $G(x,\cdot)$ for all $ x\in \mathbb{R} $, the CMT gives $G(x,\hat{\lambda}_{n}) \overset{p}{\to }G(x,\lambda )=P( S\leq x) $. In turn, since $G(x,\hat{\lambda}_{n})$ and $P( S\leq x)$ are weakly increasing and bounded, and $P( S\leq x)$ is continuous (as $\lambda \neq 0_{(T-1)K\times 1}$), an argument along the lines of \citet[Lemma 2.11]{vandervaart:1998} implies that
\begin{equation}
\sup_{x\in \mathbb{R} }\vert G(x,\hat{\lambda}_{n})-P( S\leq x) \vert ~\overset{p}{\to }~0. \label{eq:AA1}
\end{equation}

We now show that
\begin{equation}
    {c}_{n}^{A}(1-\alpha ) ~\overset{p}{\to}~ c^\chi(1-\alpha; \Sigma_Z).\label{eq:cn_consistency}
\end{equation}
Fix $\varepsilon >0$ arbitrarily. Since the CDF of $S$ is continuous and strictly increasing at $c^\chi(1-\alpha; \Sigma_Z)>0$, $\exists\delta =\delta ( \varepsilon ) >0$ such that
\begin{align}
c^\chi(1-\alpha; \Sigma_Z)-c^\chi(1-\alpha-\delta; \Sigma_Z) ~\leq~ \varepsilon ~~~\text{ and }~~~
c^\chi(1-\alpha+\delta; \Sigma_Z) +\delta -c^\chi(1-\alpha; \Sigma_Z) ~\leq~ \varepsilon .\label{eq:quantile_6}
\end{align}
Then, let $E_{n} $ be defined by
\begin{equation*}
E_{n}~=~\left\{ \sup_{x\in \mathbb{R}}|G(x,\hat{\lambda}_{n})-P( S\leq x)|<\delta \right\} .
\end{equation*}
Under $E_{n}$, we have
\begin{equation*}
\delta ~\overset{(1)}{\geq }~
G({c}_{n}^{A}(1-\alpha ),\hat{\lambda}_{n})-P(S \leq {c}_{n}^{A}(1-\alpha ))~\overset{(2)}{\geq}~( 1-\alpha ) -P(S \leq {c}_{n}^{A}(1-\alpha )),
\end{equation*}
where (1) holds by $E_{n}$, and (2) by \eqref{eq:cnAA}, as it implies $G({c}_{n}^{A}(1-\alpha ),\hat{\lambda}_{n})\geq 1-\alpha $. This yields
\begin{equation}
P(S \leq {c}_{n}^{A}(1-\alpha ))~\geq~ ( 1-\alpha ) -\delta . \label{eq:quantile_7}
\end{equation}
From here, we can get
\begin{equation}
{c}_{n}^{A}(1-\alpha )~\overset{(1)}{\geq }~c^\chi(1-\alpha-\delta; \Sigma_Z) ~\overset{(2)}{\geq }~c^\chi(1-\alpha; \Sigma_Z)-\varepsilon , \label{eq:quantile_9}
\end{equation}
where (1) holds by \eqref{eq:quantile_7} and (2) by the first condition in \eqref{eq:quantile_6}. Also under $E_{n}$, we have
\begin{equation*}
-\delta ~\overset{(1)}{\leq }~
G({c}_{n}^{A}(1-\alpha )-\delta,\hat{\lambda}_{n})-P(S \leq {c}_{n}^{A}(1-\alpha )-\delta)~\overset{(2)}{<}~( 1-\alpha ) -P(S\leq {c}_{n}^{A}(1-\alpha )-\delta ),
\end{equation*}
where (1) holds by $E_{n}$, and (2) by $G({c}_{n}^{A}(1-\alpha )-\delta,\hat{\lambda}_{n})<1-\alpha $. This implies that
\begin{equation}
P(S\leq {c}_{n}^{A}(1-\alpha )-\delta )~<~( 1-\alpha ) +\delta . \label{eq:quantile_8}
\end{equation}
From here, we can get
\begin{equation}
{c}_{n}^{A}(1-\alpha )~\overset{(1)}{\leq }~c^\chi (( 1-\alpha ) +\delta ; \Sigma_Z) +\delta ~\overset{(2)}{\leq }~c^\chi(1-\alpha ; \Sigma_Z)+\varepsilon , \label{eq:quantile_10}
\end{equation}
where (1) holds by \eqref{eq:quantile_8} and (2) by the second condition in \eqref{eq:quantile_6}. By combining \eqref{eq:quantile_9} and \eqref{eq:quantile_10}, we conclude that $ \vert {c}_{n}^{A}(1-\alpha )-c^\chi(1-\alpha; \Sigma_Z)\vert \leq \varepsilon $. From this argument, we deduce that
\begin{equation}
P( E_{n}) ~\leq~ P( \vert {c}_{n}^{A}(1-\alpha )-c^\chi(1-\alpha; \Sigma_Z)\vert \leq \varepsilon ) .\label{eq:quantile_11}
\end{equation}
Since $P( E_{n}) \to 1$ by \eqref{eq:AA1}, we conclude from \eqref{eq:quantile_11} that $P( \vert {c}_{n}^{A}(1-\alpha )-c^\chi(1-\alpha; \Sigma_Z)\vert \leq \varepsilon ) \to 1$. Since the choice of $\varepsilon >0$ was arbitrary, \eqref{eq:cn_consistency} follows.

For any $\varepsilon \in ( 0,c^\chi(1-\alpha; \Sigma_Z) ) $, consider the following argument.
\begin{align}
& P( \{ S_{n}<c^\chi(1-\alpha; \Sigma_Z) -\varepsilon \} ) +P( \{ \vert {c}_{n}^{A}( 1-\alpha ) -c^\chi(1-\alpha; \Sigma_Z) \vert \leq \varepsilon \} ) -1\notag  \\
& ~\leq ~P( \{ S_{n}<c^\chi(1-\alpha; \Sigma_Z) -\varepsilon \} \cap \{ \vert {c}_{n}^{A}( 1-\alpha ) -c^\chi(1-\alpha; \Sigma_Z) \vert \leq \varepsilon \} )\notag  \\
& ~\leq ~ \left\{ 
\begin{array}{c}
P( \{ S_{n}<c^\chi(1-\alpha; \Sigma_Z) -\varepsilon \} \cap \{ \vert {c}_{n}^{A}( 1-\alpha ) -c^\chi(1-\alpha; \Sigma_Z) \vert \leq \varepsilon \} ) \notag \\
+P( \{ S_{n}<{c}_{n}^{A}( 1-\alpha ) \} \cap \{ \vert {c}_{n}^{A}( 1-\alpha ) -c^\chi(1-\alpha; \Sigma_Z) \vert >\varepsilon \} )
\end{array}
\right\} \notag  \\
& ~\leq ~\left\{ 
\begin{array}{c}
P( \{ S_{n}<{c}_{n}^{A}( 1-\alpha ) \} \cap \{ \vert c_{n}^{A}( 1-\alpha ) -c^\chi(1-\alpha; \Sigma_Z) \vert \leq \varepsilon \} ) + \\
P( \{ S_{n}<c_{n}^{A}( 1-\alpha ) \} \cap \{ \vert c_{n}^{A}( 1-\alpha ) -c^\chi(1-\alpha; \Sigma_Z) \vert >\varepsilon \} )
\end{array}
\right\} \notag  \\
& ~=~P( S_{n}<c_{n}^{A}( 1-\alpha ) ) \notag \\
& ~\overset{(1)}{\leq }~1-E[ \phi _{n}^{A }( \alpha ) ] \notag \\
& ~\overset{(2)}{\leq }~P( S_{n}\leq c_{n}^{A}( 1-\alpha ) ) \notag \\
& ~=~\left\{ 
\begin{array}{c}
P( \{ S_{n}\leq c_{n}^{A}( 1-\alpha ) \} \cap \{ \vert c_{n}^{A}( 1-\alpha ) -c^\chi(1-\alpha; \Sigma_Z) \vert \leq \varepsilon \} ) + \\
P( \{ S_{n}\leq c_{n}^{A}( 1-\alpha ) \} \cap \{ \vert c_{n}^{A}( 1-\alpha ) -c^\chi(1-\alpha; \Sigma_Z) \vert >\varepsilon \} )
\end{array}
\right\} \notag \\
& ~\leq~ P( S_{n}\leq c^\chi(1-\alpha; \Sigma_Z) +\varepsilon ) +P( \vert c_{n}^{A}( 1-\alpha ) -c^\chi(1-\alpha; \Sigma_Z) \vert > \varepsilon ) ,\label{eq:cv_asy_size}
\end{align}
where (1) and (2) hold by $\{ S_{n}<c_{n}^{A}( 1-\alpha ) \} \subseteq \{ \phi _{n}^{A }(\alpha ) =0\} \subseteq \{ S_{n}\leq c_{n}^{A}( 1-\alpha ) \} $. By taking sequential limits on \eqref{eq:cv_asy_size} as $n\to \infty $ and $\varepsilon \to 0$, and combined with \eqref{eq:cn_consistency}, Theorem \ref{thm:asyDistSample}(a), and the fact that $S$ is continuously distributed, we conclude that $ E[ \phi _{n}^{A}( \alpha ) ] \to P( S> c^\chi(1-\alpha; \Sigma_Z) ) = \alpha$, as desired.

\underline{Case 2: $\Sigma_Z = 0_{(T-1)K\times (T-1)K}$}. Under $H_0$ in \eqref{eq:H0}, we also have $E[Z_i] = 0_{(T-1)K\times 1}$, and so  $Z_i = 0_{(T-1)K\times 1}$ a.s. By \eqref{eq:Zi_defn}, this implies that 
$1(X_{i,t} \leq u_k) = 1(X_{i,t+1} \leq u_k)$  a.s.\ for all $t = 1,\ldots,T-1$ and $k = 1,\ldots K$. In turn, this gives that $\hat{Z} = 0_{(T-1)K\times 1}$ a.s., with $\hat{Z}$ defined as in \eqref{eq:Z_hat_main_tex}. Then, $S_n = 0$ holds a.s. By this and $c^A_n(1-\alpha) \geq 0$, we get $E[\phi_n^A(\alpha)] = P(S_n > c^A_n(1-\alpha)) = 0 < \alpha$. From here, the desired result holds by taking limits as $n\to\infty$.

\noindent  \underline{Part (b)} The conclusion follows directly from combining \eqref{eq:testAA_norm}, Theorem \ref{thm:asyDistSample}(b), and that the CDF of $\chi_{(T-1)K}^2$ is continuous at $c^\chi(1-\alpha; I_{(T-1)K \times (T-1)K})>0$.
\end{proof}


\begin{proof}[Proof of Theorem \ref{thm:B_fixed}]
\underline{Part (a)}. We divide the argument into two cases. 

\noindent \underline{Case 1: $\Sigma_Z \neq 0_{(T-1)K\times (T-1)K}$}. 
For each $ i=1,\ldots ,n$, $t = 1,\ldots, T-1$, and $k = 1,\ldots, K$, let
\begin{equation*}
{Z}^*_{i,(t-1)K+k} ~\equiv~ \sqrt{{P}(u_k)} [1(X^*_{i,t} \leq u_k) - \hat{F}_t(u_k) - (1(X^*_{i,t+1} \leq u_k) - \hat{F}_{t+1}(u_k))],
\end{equation*}
and let $Z^*_i = \{{Z}^*_{i,(t-1)K+k}:k=1,\dots,K,t=1,\dots,T-1\} \in \mathbb{R}^{(T-1)K} $ and $\tilde{Z}^* \equiv \frac{1}{\sqrt{n}} \sum_{i=1}^{n} {{Z}^{*}_{i}}'$. By \citet[Theorem 3.6.2]{vandervaart/wellner:1996},
\begin{equation}
\{\tilde{Z}^*|{\bf X}_n\} ~
\overset{d}{\to }~\xi 
~\sim~ N( 0_{(T-1)K \times 1},\Sigma_Z)~~~\text{a.s.} \label{eq:CLT_boot}
\end{equation}

Let $\hat{\delta}^*\equiv 
\hat{Z}^* - \tilde{Z}^*$ with $\hat{Z}^*$ as in \eqref{eq:Z_hat_star}. For any $t=1,\dots,T-1$ and $k=1,\dots,K$, then conditional on ${\bf X}_n$, the $((t-1)K + k)$-component of $\hat{\delta}^*$ satisfies 
\begin{align}
    & \sqrt{n \hat{P}(u_k)}[\hat{F}_{t}^*(u_k) - \hat{F}_t(u_k) - (\hat{F}_{t+1}^*(u_k) - \hat{F}_{t+1}(u_k))]
    -
    \sqrt{n {P}(u_k)}[\hat{F}_{t}^*(u_k) - \hat{F}_t(u_k) - (\hat{F}_{t+1}^*(u_k) - \hat{F}_{t+1}(u_k))] \notag \\
    & = \sqrt{n}[\hat{F}_{t}^*(u_k) - \hat{F}_t(u_k) - (\hat{F}_{t+1}^*(u_k) - \hat{F}_{t+1}(u_k))] (\sqrt{\hat{P}(u_k)} - \sqrt{P(u_k)}) \notag \\
    & \overset{(1)}{=} O_p(1)o(1) = o_p(1),  ~~\text{a.s.},\label{eq:resid_boot}
\end{align}
where (1) holds by \citet[Theorem 3.6.2]{vandervaart/wellner:1996} (which implies that $\{\sqrt{n}[\hat{F}_{t}^*(u_k) - \hat{F}_t(u_k) - (\hat{F}_{t+1}^*(u_k) - \hat{F}_{t+1}(u_k))] |{\bf X}_n\}= O_p(1)$ a.s.) and that $\hat{P}(u_k)-P(u_k) =o_{a.s.}(1)$ (and, thus, $\hat{P}(u_k)-P(u_k) |{\bf X}_n\}= o_{p}(1)$ a.s.) by LLN.

Then, consider the following derivation. Conditional on ${\bf X}_n$, 
\begin{align}
{S}_{n}^*
~\overset{(1)}{=}~ 
(\tilde{Z}^*)'(\tilde{Z}^*) + 2(\hat{\delta}^*)'(\tilde{Z}^*) + (\hat{\delta}^*)'(\hat{\delta}^*)
~\overset{(2)}{\overset{d}{\to }}~S, ~~\text{w.p.a.1},
\label{eq:expansion_boot}
\end{align}
where (1) holds by $\hat{\delta}^* = \hat{Z}^* - \tilde{Z}^*$ and (2) by \eqref{eq:CLT_boot} and \eqref{eq:resid_boot}. As a corollary of \eqref{eq:expansion_boot} and also by the condition that $\Sigma_Z$ is a nonzero matrix, we deduce that $P(S^*_n\leq x |{\bf X}_n) \overset{p}{\to } P( S\leq x) $ for all points $ x\in \mathbb{R} $. From this point onward, the rest of the proof is identical to that of part (a) in Theorem \ref{thm:AA_fixed}.

\noindent \underline{Case 2: $\Sigma_Z = 0_{(T-1)K\times (T-1)K}$}. This result holds by the same argument as in Theorem \ref{thm:AA_fixed}, except that $\phi_n^{A}(\alpha)$ and $c^{A}_n(1-\alpha)$ are replaced by $\phi_n^B(\alpha)$ and $c^B_n(1-\alpha)$, respectively.

\noindent \underline{Part (b).} 
By the proof of part (b) in Theorem \ref{thm:AA_fixed}, $\hat{\Sigma}_Z^- = \Sigma_Z^{-1} +o_{p}(1)$. Then, conditional on ${\bf X}_n$,
\begin{equation}
    \hat{\Sigma}_Z^- ~\to~ \Sigma_Z^{-1}  ~~\text{w.p.a.1}.
    \label{eq:sigma_boot}
\end{equation}
We can then repeat the arguments in part (a) to get that, conditional on ${\bf X}_n$,
\begin{align}
    \hat{\delta}^{*} ~=~ o_p(1)~~~\text{ and }~~~{(\tilde{Z}^*)}' \hat{\Sigma}_Z^- ({\tilde{Z}^*}) ~\overset{d}{\to}~  \xi'\Sigma_Z^{-1}\xi  \sim \chi^2_{(T-1)K}~~\text{w.p.a.1}.\label{eq:resid_boot_2}
\end{align}
From here, the desired result follows from the next derivation. Conditional on ${\bf X}_n$,
\begin{align}
    \bar{S}_n^* 
    ~&\overset{(1)}{=}~ 
    (\tilde{Z}^*)'\hat{\Sigma}_Z^-(\tilde{Z}^*) + 2(\hat{\delta}^*)'\hat{\Sigma}_Z^-(\tilde{Z}^*) + (\hat{\delta}^*)'\hat{\Sigma}_Z^-(\hat{\delta}^*) 
    ~\overset{(2)}{\overset{d}{\to }}~\chi_{(T-1)K}^2, ~\text{w.p.a.1},
    \label{eq:expansion_boot2}
\end{align}
 where (1) holds by $\hat{\delta}^* = \hat{Z}^* - \tilde{Z}^*$, and (2) by \eqref{eq:sigma_boot} and \eqref{eq:resid_boot_2}. As a corollary of \eqref{eq:expansion_boot2}, we have that $P(\bar{S}^*_n\leq x |{\bf X}_n) \overset{p}{\to } P( \chi_{(T-1)K}^2\leq x) $ for all $ x\in \mathbb{R} $. From this point onward, the rest of the proof follows from arguments in part (a) in Theorem \ref{thm:AA_fixed}.
\end{proof}

\begin{theorem} \label{thm:perm_dist}
Let Assumption \ref{ass:A1} hold, and let $\hat{Z}^{\bm{\pi}}$ be as in \eqref{eq:Z_hat_perm}, where $\bm{\pi}$ is a uniformly chosen \textit{random permutation} in $\mathcal{M}^n$. Then, 
\begin{equation}
    P(\hat{Z}^{\bm{\pi}} \leq x | {\bf X}_n) ~\overset{p}{\to}~P(\xi \leq x) 
    \label{eq:chungRomanoLemmaA1}
\end{equation}
for all $x$ such that $P(\xi \leq x) $ is continuous, where $\xi \sim N(0_{(T-1)K \times 1}, \Omega^{\pi}_{Z})$ with 
\begin{equation}
    \Omega^{\pi}_{Z} ~\equiv~ \frac{1}{T!}\sum_{\pi \in \mathcal{M}} B(\pi) \Omega_Z B(\pi)', \label{eq:Omega_pi}
\end{equation}
$\Omega_Z = E[Z_iZ_i']$ with $Z_i$ as in \eqref{eq:Zi_defn}, and $\{B({\pi }) \in  \{-1,0,1\}^{(T-1)K\times (T-1)K}:\pi \in \mathcal{M}\}$ are known matrices defined within in the proof.
\end{theorem}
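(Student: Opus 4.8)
The plan is to exploit that our permutations act independently within each unit, so that, conditionally on $\mathbf{X}_n$, the vector $\hat{Z}^{\bm{\pi}}$ is a normalized sum of \emph{independent} (non-identically distributed) bounded random vectors, and then to invoke a conditional Lindeberg--Feller central limit theorem. First I would record a simplification: since each $\pi_i$ is a bijection of $\{1,\dots,T\}$, the aggregate empirical mass is permutation-invariant, i.e.\ $\hat{P}^{\pi}(u_k)=\hat{P}(u_k)$ for every $k$ and every $\pi^n\in\mathcal{M}^n$. Consequently,
\[
\hat{Z}^{\bm{\pi}}_{(t-1)K+k} ~=~ \sqrt{\hat{P}(u_k)}\,\frac{1}{\sqrt{n}}\sum_{i=1}^{n}\big(1(X_{i,\pi_i(t)}\le u_k)-1(X_{i,\pi_i(t+1)}\le u_k)\big),
\]
so $\hat{Z}^{\bm{\pi}}=\tfrac{1}{\sqrt{n}}\sum_{i=1}^{n}\hat{W}_i$, where $\hat{W}_i$ is a function of $X_i$ and $\pi_i$ only. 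Because the $\pi_i$ are i.i.d.\ uniform on $\mathcal{M}$ and independent of the data, the summands $\hat{W}_i$ are independent across $i$ conditionally on $\mathbf{X}_n$; moreover each coordinate of $\hat{W}_i$ lies in $[-1,1]$ (as $\hat{P}(u_k)\le 1$), so they are uniformly bounded.

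Next I would establish the algebraic identity that produces the matrices $B(\pi)$. Fixing $i$ and $k$ and writing $a_s=1(X_{i,s}\le u_k)$, any difference $a_{\pi(t)}-a_{\pi(t+1)}$ telescopes along consecutive indices and hence equals $\sum_{s=1}^{T-1}b_{ts}(\pi)(a_s-a_{s+1})$ with coefficients $b_{ts}(\pi)\in\{-1,0,1\}$ depending only on $(\pi,t,s)$. Setting $B_0(\pi)\equiv(b_{ts}(\pi))\in\{-1,0,1\}^{(T-1)\times(T-1)}$ and $B(\pi)\equiv B_0(\pi)\otimes I_K$, I get $\hat{W}_i=B(\pi_i)\,\hat{\Delta}_n\check{Z}_i$, where $\check{Z}_i$ stacks the base differences $1(X_{i,s}\le u_k)-1(X_{i,s+1}\le u_k)$ in the $(s,k)$-ordering and $\hat{\Delta}_n$ is the blockwise-diagonal matrix with entries $\sqrt{\hat{P}(u_k)}$; note $\hat{\Delta}_n\to\Delta\equiv\mathrm{diag}(\sqrt{P(u_k)})$ almost surely, with $\Delta\check{Z}_i=Z_i$ as in \eqref{eq:Zi_defn}.

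With this in hand the conditional CLT is routine. The conditional mean vanishes: $E[1(X_{i,\pi_i(t)}\le u_k)\mid\mathbf{X}_n]=\tfrac1T\sum_{s=1}^T 1(X_{i,s}\le u_k)$ does not depend on $t$, so $E[\hat{W}_i\mid\mathbf{X}_n]=0$ even without imposing $H_0$ (this is why $\Omega_Z$ is the raw second moment $E[Z_iZ_i']$). For the conditional covariance, averaging over $\mathcal{M}$ gives
\[
\frac1n\sum_{i=1}^n E\big[\hat{W}_i\hat{W}_i'\mid\mathbf{X}_n\big]~=~\frac1{T!}\sum_{\pi\in\mathcal{M}}B(\pi)\Big[\hat{\Delta}_n\Big(\tfrac1n\sum_{i=1}^n\check{Z}_i\check{Z}_i'\Big)\hat{\Delta}_n\Big]B(\pi)',
\]
and the bracket converges almost surely to $\Delta\,E[\check{Z}_1\check{Z}_1']\,\Delta=E[Z_1Z_1']=\Omega_Z$ by the LLN (using Assumption~\ref{ass:A1}) together with $\hat{\Delta}_n\to\Delta$; hence the left side tends almost surely to $\Omega^{\pi}_Z$ in \eqref{eq:Omega_pi}. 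The Lindeberg condition holds trivially from the uniform bound on $\hat{W}_i$. I would then apply the multivariate Lindeberg--Feller theorem (via the Cram\'er--Wold device), working along a fixed sequence of data realizations in the full-probability event on which the above LLN conclusions hold, to conclude $\{\hat{Z}^{\bm{\pi}}\mid\mathbf{X}_n\}\overset{d}{\to}N(0,\Omega^{\pi}_Z)$ for almost every data sequence. Evaluating at a continuity point $x$ of the limiting CDF and using that almost-sure convergence implies convergence in probability yields \eqref{eq:chungRomanoLemmaA1}.

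I expect the main obstacle to be bookkeeping rather than depth: checking carefully that the telescoping coefficients $b_{ts}(\pi)$ are well defined, $\{-1,0,1\}$-valued, and independent of $i$ and $k$, so the Kronecker factorization $B(\pi)=B_0(\pi)\otimes I_K$ is valid in the $(t,k)$-indexing used throughout, and ensuring the conditional CLT is run on a single full-probability set (the intersection of the finitely many LLN events for $\tfrac1n\sum_i\check{Z}_i\check{Z}_i'$ and the $\hat{P}(u_k)$), since the limit $\Omega^{\pi}_Z$ is deterministic but is reached only along almost every data realization. The degenerate case $P(u_k)=0$ is harmless: it forces the corresponding coordinates of $\hat{Z}^{\bm{\pi}}$ and rows/columns of $\Omega_Z$ to vanish.
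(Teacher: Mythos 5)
Your argument is correct, but it reaches the conclusion by a genuinely different route than the paper. The algebraic core is the same: you use the permutation-invariance of the pooled masses (the paper's \eqref{eq:P_pi_constant}), the telescoping representation of permuted differences through matrices $B(\pi)$ (the paper's \eqref{eq:B_perm}), and the fact that the weight matrix commutes with $B(\pi)$ (the paper's \eqref{eq:interchange}); your $B_0(\pi)\otimes I_K$ is the paper's $I_K\otimes\tilde B(\pi)$ up to a reordering of coordinates, which is immaterial. Where you diverge is in how the randomization distribution is shown to converge: the paper verifies a Hoeffding-type condition, proving the \emph{unconditional} joint convergence \eqref{eq:hoeffding} of the statistic computed under two mutually independent uniform permutations of the same data, and then invokes Lemma A.1 of Chung and Romano (2016) to translate this into convergence in probability of the conditional law; you instead condition on $\mathbf{X}_n$ directly and apply a Lindeberg--Feller CLT for triangular arrays of (conditionally) independent, exactly mean-zero, uniformly bounded vectors, with the row covariance converging almost surely to $\Omega_Z^\pi$ in \eqref{eq:Omega_pi} by the SLLN applied to $\tfrac1n\sum_i \check Z_i\check Z_i'$ and $\hat P(u_k)$. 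Your exact conditional mean-zero computation ($\pi_i(t)$ is marginally uniform, so the conditional mean of each permuted difference vanishes for every $n$) is the conditional counterpart of the paper's $\sum_{\pi}B(\pi)=0$ step, and it likewise makes clear why the raw second moment $\Omega_Z=E[Z_iZ_i']$, not the variance, appears. What each approach buys: the paper's coupling argument reduces everything to an ordinary unconditional CLT and outsources the conditional statement to a ready-made lemma, at the cost of carrying two independent permutation streams and the Cram\'er--Wold bookkeeping in \eqref{eq:hoeffding2}; your route is self-contained and arguably more transparent probabilistically, but you must be careful (as you note) to run the conditional CLT along a single full-probability set of data realizations, to treat the summands as a triangular array (since $\hat\Delta_n$ depends on $n$), and to handle degenerate projections $\lambda'\Omega_Z^\pi\lambda=0$ in the Cram\'er--Wold step, which is why the statement is restricted to continuity points of $P(\xi\le x)$. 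Both proofs deliver the same limit $N(0_{(T-1)K\times 1},\Omega_Z^\pi)$ with $Z_i$ as in \eqref{eq:Zi_defn}, so your proposal is a valid alternative proof.
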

\begin{proof}
We divide the proof into several steps.

\underline{Step 1.} Introduce suitable notation.

For each $i=1,\ldots ,n$ and $k=1,\ldots ,K$, let
\begin{equation}
V_{i,k}~\equiv ~\left[
\begin{array}{c}
1(X_{i,1}\leq u_{k})-1(X_{i,2}\leq u_{k}) \\ \vdots \\
1(X_{i,T-1}\leq u_{k})-1(X_{i,T}\leq u_{k})
\end{array}
\right]~ ~\text{and}~~ V_{i}~\equiv ~
\left[ 
\begin{array}{c}
V_{i,1} \\ 
\vdots  \\ 
V_{i,K}
\end{array}
\right] \in \{-1,0,1\}^{(T-1)K\times 1}. \label{eq:V_i}
\end{equation}
Also, let
\begin{align}
M~ &\equiv ~diag\{ \sqrt{P(u_{1})},\ldots ,\sqrt{P(u_{K})}\} \otimes I_{T-1}~~\in ~[0,1]^{(T-1)K\times (T-1)K} \notag \\ 
\hat{M}~&\equiv ~diag\{ \sqrt{\hat{P}(u_{1})},\ldots ,\sqrt{\hat{P} (u_{K})}\} \otimes I_{T-1}~~\in~ [0,1]^{(T-1)K\times (T-1)K},\label{eq:M}
\end{align}
where $\otimes $ denotes the Kronecker product.  
Note that $Z_i = M \times V_i$ for all $i=1,\dots,n$. 

Let ${\pi }^{n}=\{{\pi }_{i}:i=1,\dots ,n\}\in \mathcal{M}^{n}$ denote a fixed permutation. For any $i=1,\dots ,n$ and $k=1,\dots ,K$, the $\pi$-permutation analogs of $V_{i,k}$ and $V_{i}$
\begin{align}
V_{i,k}^{\pi } \equiv \left[
\begin{array}{c}
1(X_{i,\pi _{i}{(1)}}\leq u_{k})-1(X_{i,\pi _{i}{(2)}}\leq u_{k}) \\
\vdots  \\ 
1(X_{i,\pi _{i}{(T-1)}}\leq u_{k})-1(X_{i,\pi _{i}{(T)}}\leq u_{k})
\end{array}
\right]~ ~\text{and}~~ V_{i}^{\pi }\equiv \left[
\begin{array}{c}
V_{i,1}^{\pi } \\ 
\vdots  \\ 
V_{i,K}^{\pi }
\end{array}
\right] \in \{-1,0,1\}^{K(T-1)\times 1}.\label{eq:perm_defn}
\end{align}
We note that $\hat{M}$ is invariant to ${\pi }^{n}$. To see why, note that for each $k=1,\ldots ,K$, and $t=1,\dots ,T$,
\begin{equation}
\hat{P}^{{\pi }}(u_{k})~=~\frac{1}{nT}\sum_{i=1}^{n} \sum_{t=1}^{T}1(u_{k-1}<X_{i,{\pi }_{i}(t)}\leq u_{k})~\overset{(1)}{=}~ \frac{1}{nT}\sum_{i=1}^{n}\sum_{t=1}^{T}1(u_{k-1}<X_{i,t}\leq u_{k})~=~\hat{P }(u_{k}), \label{eq:P_pi_constant}
\end{equation}
where (1) holds because the sum over $t=1,\dots ,T$ is invariant across the permutation. 

\underline{Step 2.} For any $i=1,\dots,n$ and $\pi_{i}  \in \mathcal{M}$, we define the matrix $B({\pi}_i)$ described in the statement and establish the following representation:
\begin{equation}
V_{i}^{\pi } ~=~B({\pi }_{i}) \times V_{i}. \label{eq:B_perm}
\end{equation}

This result follows from expressing $V_{i,k}^{\pi }$ as a particular linear combination of $V_{i,k}$. To see why, fix $t=1,\dots, T-1$ arbitrarily. If $\pi _{i}(t)<\pi _{i}(t+1)$, then we have
\begin{align*}
1(X_{i,{\pi }_{i}(t)}\leq u_{k})-1(X_{i,{\pi }_{i}(t+1)}\leq u_{k})& =\left[
\begin{array}{c}
1(X_{i,{\pi }_{i}(t)}\leq u_{k})-1(X_{i,{\pi }(t)+1}\leq u_{k})+\ldots \\
+1(X_{i,{\pi }_{i}(t+1)-1}\leq u_{k})-1(X_{i,{\pi }(t+1)}\leq u_{k})
\end{array}
\right]  \\
& =(0,\ldots ,0,1,\ldots ,1,0,\ldots ,0) \times V_{i,k},
\end{align*}
where the ones are located at time periods corresponding to $\pi _{i}(t),\dots ,\pi _{i}(t+1)-1$. Conversely, if $\pi _{i}(t)>\pi _{i}(t+1)$, then we have
\begin{align*}
1(X_{i,{\pi }_{i}(t)}\leq u_{k})-1(X_{i,{\pi }_{i}(t+1)}\leq u_{k})& ~=~\left[ 
\begin{array}{c}
-(1(X_{i,{\pi }_{i}(t)-1}\leq u_{k})-1(X_{i,{\pi }(t)}\leq u_{k}))+\ldots \\
-(1(X_{i,{\pi }_{i}(t+1)}\leq u_{k})-1(X_{i,{\pi }(t+1)+1}\leq u_{k}))
\end{array}
\right]  \\
& ~=~(0,\ldots ,0,-1,\ldots ,-1,0,\ldots ,0)\times V_{i,k},
\end{align*}
where the negative ones are located at time periods corresponding to $\pi _{i}(t+1),\dots,\pi_{i}(t)-1$. Since $\pi _{i}(t)$ and $\pi _{i}(t+1)$ were arbitrarily chosen, we can define a matrix $\tilde{B}({{\pi } _{i}})\in \{-1,0,1\}^{(T-1)\times (T-1)}$ such that $V_{i,k}^{\pi }=\tilde{B}(\pi _{i}) \times V_{i,k}$. By collecting results for $k=1,\ldots K$, and setting 
\begin{equation}
\label{eq:defn_B_mat}
    B(\pi _{i})~\equiv~ diag\{ \tilde{B}(\pi _{i}),\ldots ,\tilde{B}(\pi _{i})\},
\end{equation}
\eqref{eq:B_perm} follows. Finally, by repeating this operation for all $\pi = \pi_i \in \mathcal{M}$, we define the collection of matrices $\{B({\pi }) \in  \{-1,0,1\}^{(T-1)K\times (T-1)K}:\pi \in \mathcal{M}\}$.

\underline{Step 3.} Establish the Hoeffding's condition for $\frac{1}{\sqrt{n }}\sum_{i=1}^{n}B(\bm{\pi }_{i})Z_{i}$, where $\bm{\pi }^{n} = \{\bm{\pi }_{i}: i=1,\dots,n\}$ denotes a randomly chosen permutation in $\mathcal{M}^{n}$. That is,
\begin{equation}
\left( \frac{1}{\sqrt{n}}\sum_{i=1}^{n}B(\bm{\pi }_{i})Z_{i},\frac{1}{\sqrt{ n}}\sum_{i=1}^{n}B(\bm{\tilde{\pi}}_{i})Z_{i}\right) ~\overset{d}{ \to }~( \xi,\tilde{\xi}) , \label{eq:hoeffding}
\end{equation}
where $\bm{\pi}^{n} = \{\bm{\pi }_{i}: i=1,\dots,n\}$ and $\bm{\tilde{\pi}}^{n} = \{\bm{\tilde{\pi} }_{i}: i=1,\dots,n\}$ denote two mutually independent random permutations chosen uniformly from $\mathcal{M}^{n}$ and independent of the data, and $\xi$ and $\tilde{\xi}$ are i.i.d.\ $N(0_{(T-1)K\times 1},\Omega _{Z}^{\pi })$.

We establish \eqref{eq:hoeffding} using the Cram\'{e}r-Wold device. That is, for arbitrary $\lambda ,\nu \in \mathbb{R}^{(T-1)K\times 1}$, \eqref{eq:hoeffding} follows from showing that
\begin{equation}
\frac{1}{\sqrt{n}}\sum_{i=1}^{n}( \lambda ^{\prime }B(\bm{\pi } _{i})+\nu ^{\prime }B(\bm{\tilde{\pi}}_{i})) Z_{i}~\overset{d}{ \to }~N(0_{(T-1)K\times 1},\lambda ^{\prime }\Omega _{Z}^{\pi }\lambda +\nu ^{\prime }\Omega _{Z}^{\pi }\nu ). \label{eq:hoeffding2}
\end{equation}

We begin the argument by showing that $ \{ ( \lambda ^{\prime }B(\bm{\pi }_{i})+\nu ^{\prime }B( \bm{\tilde{\pi}}_{i})) Z_{i}\} _{i=1}^{n}$ is an i.i.d.\ sequence. To see why, note that $\{ Z_{i}\} _{i=1}^{n}$ is i.i.d.\ by Assumption \ref{ass:A1}. Also, since $\bm{\pi}^{n}= \{\bm{\pi }_{i}: i=1,\dots,n\}$ and $\bm{\tilde{\pi}}^{n} = \{\bm{\tilde{\pi} }_{i}: i=1,\dots,n\}$ are defined as i.i.d.\ sequences, we conclude that $\{ B(\bm{\pi }_{i})\} _{i=1}^{n}$ and $\{ B( \bm{\tilde{\pi}}_{i})\} _{i=1}^{n}$ are also i.i.d. By combining these facts, we get that $ \{ ( \lambda ^{\prime }B(\bm{\pi }_{i})+\nu ^{\prime }B( \bm{\tilde{\pi}}_{i})) Z_{i}\} _{i=1}^{n}$ is an i.i.d.\ sequence.

As a next step, we now show that
\begin{equation}
\sum_{\pi \in \mathcal{M} }B( \pi ) ~=~0_{(T-1)K\times (T-1)K}. \label{eq:sumB}
\end{equation}
Since $B(\pi )\equiv diag\{ \tilde{B}(\pi ),\ldots ,\tilde{B}(\pi )\} $, it suffices to show that $\sum_{\pi \in \mathcal{M} }\tilde{B}( \pi) =0_{( T-1) \times (T-1)}$. For each $t =1,\ldots,T-1$, denote the $t$'th row of $ \tilde{B}(\pi)$, denoted $\tilde{B}_{t}(\pi)$, can be expressed as follows:
\[
\tilde{B}_{t}(\pi) ~=~  (0,\dots,0,1,\dots,1,0,\dots,0) \times (1[\pi(t) < \pi(t+1)] - 1[\pi(t)>\pi(t+1)]),
\]
where the sequence of ones appears in the positions $\min\{\pi(t),\pi(t+1)\}$ through $\max\{\pi(t),\pi(t+1)\}-1$. From here, we get that $ \sum_{\pi\in \mathcal{M} }\tilde{B}_{t}( \pi ) =0_{1 \times (T-1)}$, as the occurrence of $(0,\dots,0,1,\dots,1,0,\dots,0)$ in the sum over $\mathcal{M}$ cancels with the corresponding $(0,\dots,0,-1,\dots,-1,0, \dots,0)$ when $\pi(t)$ and $\pi(t+1)$ are reversed.

Next, we show that $E[ B(\bm{\pi }_{i})] =E[ B( \bm{\tilde{\pi}}_{i})] =0_{(T-1)K\times (T-1)K}$ for all $i=1,\ldots ,n $. To see why, fix $i=1,\ldots ,n$ arbitrarily and note that
\begin{equation}
E[ B(\bm{\pi }_{i})] ~\overset{(1)}{=}~E[ B(\bm{\tilde{\pi}}_{i})]~\overset{(2)}{=}~\frac{1}{T!}\sum_{\pi \in \mathcal{M}}B( \pi )~\overset{(3)}{=}~0_{(T-1)K\times (T-1)K}, \label{eq:Bmean0}
\end{equation}
where (1) holds because $B(\bm{\pi }_{i})$ and $B(\bm{\tilde{\pi}} _{i})$ are equally distributed, (2) because there are $|\mathcal{M} |=(T!)$ possible permutations of $\{1,2,\ldots ,T\}$, all equally likely, and (3) by \eqref{eq:sumB}. Then, for all $i=1,\ldots ,n$,
\begin{equation}
    E[ ( \lambda ^{\prime }B(\bm{\pi }_{i})+\nu ^{\prime }B( \bm{\tilde{\pi}}_{i})) Z_{i}] ~\overset{(1)}{=}~E[ ( \lambda ^{\prime }E[ B(\bm{\pi }_{i})] +\nu ^{\prime }E[ B( \bm{\tilde{\pi}}_{i})] ) Z_{i}] ~\overset{(2)}{=}~
    0,
    \label{eq:mean_zero}
\end{equation}
where (1) holds by $Z_{i}\perp (B(\bm{\pi }_{i}),B(\bm{\tilde{\pi}}_{i}))$ and (2) by \eqref{eq:Bmean0}.

From here, note that for all $i=1,\ldots ,n$,
\begin{align}
V[ ( \lambda ^{\prime }B(\bm{\pi }_{i})+\nu ^{\prime }B( \bm{\tilde{\pi}}_{i})) Z_{i}] &~=~E[ ( \lambda ^{\prime }B(\bm{\pi }_{i})+\nu ^{\prime }B(\bm{\tilde{\pi}}_{i})) Z_{i}Z_{i}^{\prime }( B(\bm{\pi }_{i})^{\prime }\lambda +B( \bm{\tilde{\pi}}_{i})^{\prime }\nu ) ] \notag\\
&~\overset{(1)}{=}~E[ ( \lambda ^{\prime }B(\bm{\pi }_{i})+\nu ^{\prime }B( \bm{\tilde{\pi}}_{i})) \Omega _{Z}( B(\bm{\pi }_{i})^{\prime }\lambda +B(\bm{\tilde{\pi}}_{i})^{\prime }\nu ) ] \notag\\
&~\overset{(2)}{=}~\lambda ^{\prime }E [ B(\bm{{\pi}}_{i})\Omega _{Z}B(\bm{{\pi}}_{i})^{\prime } ]\lambda +\nu ^{\prime }E [ B(\bm{\tilde{\pi}}_{i})\Omega _{Z}B(\bm{\tilde{\pi}}_{i})^{\prime } ]\nu ,\notag\\
&~\overset{(3)}{=}~\lambda ^{\prime }\Omega _{Z}^{\pi }\lambda +\nu ^{\prime }\Omega _{Z}^{\pi }\nu ,\label{eq:var_good}
\end{align}
where (1) holds by $Z_{i}\perp (B(\bm{\pi }_{i}),B(\bm{\tilde{\pi}}_{i}))$ and $E [ Z_{i}Z_{i}'] =\Omega _{Z}$, (2) by \eqref{eq:Bmean0} and that $ B(\bm{\pi }_{i}) \perp B( \bm{\tilde{\pi}}_{i})$, and (3) by \eqref{eq:Omega_pi} and that there are $|\mathcal{M} |=(T!)$ possible permutations of $\{1,2,\ldots ,T\}$, and all are equally likely.

To conclude the step, note that \eqref{eq:hoeffding2} follows from the CLT, as $\{ ( \lambda ^{\prime }B(\bm{\pi }_{i})+\nu ^{\prime }B( \bm{\tilde{\pi}}_{i})) Z_{i}\} _{i=1}^{n}$ was shown to be an i.i.d.\ sequence that satisfies \eqref{eq:mean_zero} and \eqref{eq:var_good}.

\underline{Step 4.} Use the previous steps to conclude the proof.

By \citet[Lemma A.1]{chung/romano:2016}, \eqref{eq:chungRomanoLemmaA1} is equivalent to showing that $\hat{Z}^{\pi }$ satisfies the following Hoeffding condition:
\begin{equation}
( \hat{Z}^{\bm{\pi }},\hat{Z}^{\bm{\tilde{\pi}}}) ~\overset{d}{\to}~(\xi,\tilde{\xi}) ,
\end{equation}
where $\hat{Z}^{\bm{\pi }}$ and $\hat{Z}^{\bm{\tilde{\pi}}}$ are permuted according to $\bm{\pi}^{n} = \{\bm{\pi }_{i}: i=1,\dots,n\}$ and $\bm{\tilde{\pi}}^{n} = \{\bm{\tilde{\pi} }_{i}: i=1,\dots,n\}$, respectively, which are two mutually independent random permutations chosen uniformly from $\mathcal{M}^{n}$ and independent of the data, and $\xi$ and $\tilde{\xi}$ are i.i.d.\ according to $N(0_{(T-1)K\times 1},\Omega _{Z}^{\pi })$.

Before proving the desired result, we establish three preliminary results. First, by repeating the arguments in step 3 but with $M $ replaced by $I \in \mathbb{R}^{K(T-1) \times K(T-1)}$, we have that
\begin{equation}
\frac{1}{\sqrt{n}} \sum_{i=1}^{n}B(\bm{\pi }_{i})V_{i} = O_p(1)~~~\text{and} ~~~\frac{1}{ \sqrt{n}}\sum_{i=1}^{n}B(\bm{\tilde{\pi}}_{i})V_{i} = O_p(1). \label{eq:OpWithIV}
\end{equation}
Second, note that Assumption \ref{ass:A1}, the LLN, and the CMT imply that
\begin{equation}
\hat{\delta} ~\equiv~\hat{M}-M~=~o_{p}(1). \label{eq:Mconvergence}
\end{equation}
Third, note that for any permutation $\pi \in \mathcal{M}$, we have
\begin{align}
M\times B(\pi ) 
&~=~diag\{ \sqrt{P(u_{1})}\tilde{B}(\pi ),\ldots ,\sqrt{P(u_{K})} \tilde{B}(\pi )\} \nonumber \\
&~=~\{ diag\{ \tilde{B}(\pi ),\ldots ,\tilde{B}(\pi )\} \}~\times~ ( diag\{ \sqrt{P(u_{1})},\ldots \sqrt{ P(u_{K})}\} \otimes I_{T-1}) \nonumber \\
&~=~B(\pi )\times M. \label{eq:interchange}
\end{align}

The desired result follows from the next derivation.
\begin{align*}
( \hat{Z}^{\bm{\pi }},\hat{Z}^{\bm{\tilde{\pi}}}) &~\overset{(1)}{=}~\left( \frac{{\hat{M}}}{\sqrt{n}}\sum_{i=1}^{n}V_{i}^{\bm{\pi} },\frac{{\hat{M}}}{\sqrt{n}}\sum_{i=1}^{n}V_{i}^{\bm{\tilde{\pi}}} \right) \\
&~\overset{(2)}{=}~\left( \frac{{\hat{M}}}{\sqrt{n}}\sum_{i=1}^{n}B(\bm{\pi } _{i})V_{i},~\frac{{\hat{M}}}{\sqrt{n}}\sum_{i=1}^{n}B(\bm{\tilde{\pi}} _{i})V_{i}\right) \\
&~\overset{(3)}{=}~\left(
\begin{array}{c}
\frac{1}{\sqrt{n}}\sum_{i=1}^{n}B(\bm{\pi }_{i})Z_i+\hat{\delta} \frac{1}{\sqrt{n}}\sum_{i=1}^{n}B(\bm{\pi }_{i})V_{i}, \\
\frac{{1}}{\sqrt{n}}\sum_{i=1}^{n}B(\bm{\tilde{\pi}}_{i})Z_i+\hat{\delta}\frac{1}{\sqrt{n}}\sum_{i=1}^{n}B(\bm{\tilde{\pi}}_{i})V_{i}
\end{array}
\right) \\ 
&~\overset{(4)}{=}~\left(
\frac{1}{\sqrt{n}}\sum_{i=1}^{n}B(\bm{\pi }_{i})Z_i, \frac{1}{\sqrt{n}}\sum_{i=1}^{n}B(\bm{\tilde{\pi}}_{i})Z_i\right)+o_{p}( 1) \\
&~\overset{{(5)}}{\overset{d}{\to}}( \xi,\tilde{\xi}) ,
\end{align*}
as desired, where (1) holds by \eqref{eq:Z_hat_perm}, \eqref{eq:perm_defn}, and \eqref{eq:P_pi_constant}, (2) holds by \eqref{eq:B_perm}, (3) by \eqref{eq:Mconvergence} and $M B( \bm{\pi }_{i})V_{i} = B( \bm{\pi }_{i})Z_{i}$ for all $i=1,\dots,n$ (implied by $Z_{i}=MV_{i}$ for all $i=1,\dots,n$ and \eqref{eq:interchange}), (4) by \eqref{eq:OpWithIV} and \eqref{eq:Mconvergence}, and (5) by \eqref{eq:hoeffding}.
\end{proof}

\begin{proof}[Proof of Theorem \ref{thm:P_fixed}]
Throughout the proof, we continuously invoke the results and notations from Theorem \ref{thm:perm_dist}. Recall that this result derives the asymptotic distribution of $\hat{Z}^{\bm{\pi}}$ as in \eqref{eq:Z_hat_perm}, where $\bm{\pi}$ is a uniformly chosen random permutation in $\mathcal{M}^n$. Recall from this theorem that $\Omega_Z \equiv E[Z_i Z_i']$, which equals $\Sigma_Z \equiv var(Z_i)$ under $H_0$ in \eqref{eq:H0};  $\Omega_Z^\pi$, defined in \eqref{eq:Omega_pi}, represents the asymptotic variance of randomization distribution; $B(\pi)$, defined below \eqref{eq:defn_B_mat} for $\pi\in\mathcal{M}$, denotes a known matrix taking values in $\{-1,0,1\}$. With these notations in mind, we present the proof below.

\noindent \underline{Part (a).} We divide the argument into two cases.

\noindent  \underline{Case 1: $\Sigma_Z \neq 0_{T(K-1)\times T(K-1)}$.} By $T=2$, there are $T!=2$ permutations of $(1,2)$, hence $\mathcal{M} =\{(1,2),(2,1)\}$. Following the construction in step 2 of Theorem \ref{thm:perm_dist}, $B((1,2)) = I_{K\times K}$ and $B((2,1)) = -I_{K\times K}$. Therefore,
\[
    \Omega_Z^\pi ~\overset{(1)}{=}~ \tfrac{1}{2} B(1,2) \Omega_Z B(1,2)' + \tfrac{1}{2} B((2,1)) \Omega_Z B((2,1))' ~\overset{(2)}{=}~ \Omega_Z~\overset{(3)}{=}~ \Sigma_Z,
\]
where (1) holds by the definition of $\Omega_Z^\pi$, (2) holds by the definition of $B(\pi)$, and (3) follows by $\Sigma_Z \equiv var(Z_i) = E[Z_iZ_i'] \equiv \Omega_Z$ under $H_0$ in \eqref{eq:H0}.

Theorem \ref{thm:perm_dist} then implies that  $P(\hat{Z}^{\bm{\pi}} \leq x | {\bf X}_n) \overset{p}{\to} P(\xi \leq x)$ for all $x$ such that $P(\xi \leq x) $ is continuous, where $\xi \sim N(0_{(T-1)K \times 1}, \Sigma_Z)$. Then, the continuous mapping theorem from \citet[Lemma A.6]{chung/romano:2016} implies that for non-studentized statistic:
\begin{equation}
\label{eq:rand_dist_S}
    P(S_n^{\bm{\pi}} \leq x | {\bf X}_n) ~\overset{p}{\to}~ P(S \leq x),
\end{equation}
for all $x \in \mathbb{R}$, where $S$ is as in \eqref{eq:LimitingS}. This convergence relies on the fact that $\Sigma_Z \neq 0_{T(K-1)\times T(K-1)}$, which implies $P(S \leq x)$ is continuous for all $x \in \mathbb{R}$. From this point onward, the rest of the proof follows from arguments in part (a) of Theorem \ref{thm:AA_fixed}.

\noindent  \underline{Case 2: $\Sigma_Z = 0_{T(K-1)\times T(K-1)}$.} By the same arguments as in Theorem  \ref{thm:AA_fixed}, we have that $1(X_{i,t} \leq u_k) = 1(X_{i,t+1} \leq u_k)$  a.s.\ for all $t = 1,\ldots,T-1$ and $k = 1,\ldots K$, and so $S_n = 0$ a.s. Furthermore, for all $\pi \in \mathcal{M}$, we have that $1(X_{i,\pi(t)} \leq u_k) = 1(X_{i,\pi(t+1)} \leq u_k)$  a.s.\ for all $t = 1,\ldots,T-1$, $k = 1,\ldots K$. This implies that $S_n^{\bm{\pi}} = 0$ a.s., and so $S_n = c_n^\pi(1-\alpha) = 0$. The desired result follows from this and the construction of the test in \eqref{eq:test_perm}.

\noindent  \underline{Part (b).} We construct an example with $T=3$. For $\tau _{1},\tau _{2}\in ( 0,1) $, we focus on a Markov chain with two states, $s_1=0$ and $s_2=1$, and a transition matrix given by
\begin{equation*}
\left( 
\begin{array}{cc}
P( X_{i,t+1}=0|X_{i,t}=0) & P( X_{i,t+1}=1|X_{i,t}=0) \\
P( X_{i,t+1}=0|X_{i,t}=1) & P( X_{i,t+1}=1|X_{i,t}=1)
\end{array}
\right) ~=~\left( 
\begin{array}{cc}
\tau _{1} & 1-\tau _{1} \\ 
1-\tau _{2} & \tau _{2}
\end{array}
\right).
\end{equation*}
In the steady state, the marginal distribution is such that for all $t=1,2,3$,
\begin{equation}
 P( X_{i,t}=0)  ~=~ \frac{1-\tau _{2} }{2-\tau _{1}-\tau _{2}}~~~\text{and}~~~P( X_{i,t}=1)  ~=~ \frac{1-\tau _{1} }{2-\tau _{1}-\tau _{2}}. \label{eq:SS_markov}
\end{equation}

To assess the marginal homogeneity of this Markov chain on only two support points, it suffices to test the hypothesis at one of the two points (as the other is just its complement). For this reason, we construct our test statistic with $K=1$ and $u_1=0$.
It follows then
\[
    \Omega_Z~=~ 
    \frac{( 1-\tau _{1})(1-\tau_2)^2}{(2-\tau_1-\tau_2)^2 }\left[
    \begin{array}{cc}
    2  & -( 2-( \tau _{1}+\tau _{2}) ) \\
    -( 2-( \tau _{1}+\tau _{2}) )  & 2
    \end{array}
    \right].
\]

By $T = 3$, the $T! = 6$ permutations of $(1,2,3)$ are $ \mathcal{M}= \{(1,2,3),(1,3,2),(2,1,3),(2,3,1),(3,1,2),(3,2,1) \}$. Following the construction in step 2 of Theorem \ref{thm:perm_dist}, we have
\begin{align*}
    &B((1,2,3)) = \begin{bmatrix} 1 & 0 \\ 0 & 1 \end{bmatrix},~~~ B((1,3,2)) = \begin{bmatrix} 0 & -1 \\ -1 & 0 \end{bmatrix}, ~~~
    B((2,1,3)) = \begin{bmatrix} -1 & 0 \\ 0 &  1 \end{bmatrix}\\
    &B((2,3,1)) = \begin{bmatrix} 0 & 1 \\ -1 & -1\end{bmatrix}, ~~~
    B((3,1,2)) = \begin{bmatrix} -1 & -1 \\ 1 & 0\end{bmatrix},~~~
     B((3,2,1)) = \begin{bmatrix} 0 & -1 \\ -1 & 0\end{bmatrix},
\end{align*}
and
\[
    \Omega_Z^\pi ~=~ \tfrac{1}{6} \sum_{\pi \in \mathcal{M}} B(\pi) \Omega_Z B(\pi)'.
\]
It is not hard to verify that $\Omega_Z^\pi$ is PD and $\Omega_Z^\pi \neq \Omega_Z$. By the same arguments as in part (a), we conclude that
\begin{equation}
\label{eq:rand_dist_S2}
    P(S_n^{\bm{\pi}} \leq x | {\bf X}_n) ~\overset{p}{\to}~ P(S^{\pi} \leq x),
\end{equation}
for all $x \in \mathbb{R}$, where $S^{\pi} = \sum_{j=1}^{2}\lambda^{\pi} _{j}\zeta _{j}^{2}$ with$\{\zeta_j\}_{j=1}^{2}$ being i.i.d.\ $N(0,1)$, and $\{\lambda ^{\pi}_{j}\}_{j=1}^{2}$ are the eigenvalues of $\Omega_Z^\pi$. From this point onward, we can repeat arguments in part (a) of Theorem \ref{thm:AA_fixed} to show that 
\begin{align}
     c_n^\pi(1-\alpha)~\overset{p}{\to}~c^\chi(1-\alpha; \Omega_Z^\pi)~~~\text{and}~~~
     E_P[\phi^{\pi}_{n}(\alpha)]~{\to}~P[S \leq  c^\chi(1-\alpha; \Omega_Z^\pi)],\label{eq:distortion1}
\end{align}
where $S = \sum_{j=1}^{2}\lambda _{j}\zeta _{j}^{2}$ with $\{\lambda _{j}\}_{j=1}^{2}$ equal to the eigenvalues of $\Sigma_Z = \Omega_Z$. Since $\Omega_Z^\pi \neq \Omega_Z$, we have that $c^\chi(1-\alpha; \Omega_Z^\pi) \neq c^\chi(1-\alpha; \Omega_Z)$ and therefore $P[S \leq  c^\chi(1-\alpha; \Omega_Z^\pi)]\neq \alpha$. To show the asymptotic overrejection, it suffices to find examples of parameters in which $P[S \leq  c^\chi(1-\alpha; \Omega_Z^\pi)]>\alpha$. For instance, by choosing $\tau_1 = \tau_2=0.1$, we obtain 
\begin{align*}
     c^\chi(0.9; \Omega^\pi_Z) ~=~  1.56~&<~ c^\chi(0.9; \Sigma_Z) ~=~ 2.36,\\
 c^\chi(0.95; \Omega^\pi_Z) ~=~ 2.12~&<~ c^\chi(0.95; \Sigma_Z)~=~ 3.33,\\
 c^\chi(0.99; \Omega^\pi_Z) ~=~ 3.49~&<~ c^\chi(0.99; \Sigma_Z)~=~ 5.72,
\end{align*}
with the following asymptotic overrejection:
\begin{align*}
E_P[\phi^{\pi}_{n}( 0.1 ) ]\to 0.1828 > 0.1,~~~
E_P[\phi^{\pi}_{n}(0.05 ) ]\to 0.1194 > 0.05,~~~
E_P[\phi^{\pi}_{n}( 0.01 ) ] \to 0.0446 > 0.01.
\end{align*}

\noindent  \underline{Part (c).}  Under $H_0$, we have ${\Omega}_Z = \Sigma_Z$. Then, Assumption \ref{ass:NS} implies ${\Omega}_Z $ is PD. Thus, $B(\pi){\Omega}_Z B(\pi)'$ is PSD for all $\pi \in \mathcal{M}$, where $B(\pi)$ is defined in the proof of Theorem \ref{thm:perm_dist}. Furthermore, by choosing $\pi = (1,\dots,T)$, step 2 of Theorem \ref{thm:perm_dist} implies that $B((1,\dots,T)) = I_{(T-1)K \times (T-1)K}$, and so $B((1,\dots,T)){\Omega}_Z B((1,\dots,T))' = {\Omega}_Z$. Then, $\Omega_Z^{\pi} = \frac{1}{T!}\sum_{\pi \in \mathcal{M}} B(\pi) \Omega_Z B(\pi)'$ is PD, as it is the sum of PSD matrices with at least one PD matrix.

By Theorem \ref{thm:perm_dist}, Lemma \ref{lem:perm_cov}, and the fact that $\Omega_Z^{\pi}$ is PD, the Slutsky's theorem from \citet[Lemma A.5]{chung/romano:2016} implies that $P(\bar{S}_n^{\bm \pi} \leq x | {\bf X}_n) \overset{p}{\to } P( \chi_{(T-1)K}^2\leq x) $ for all $ x\in \mathbb{R} $. From this point onward, the rest of the proof follows from arguments in part (a) in Theorem \ref{thm:AA_fixed}.
\end{proof}


\begin{proof}[Proof of Theorem \ref{thm:fsperm}]
\noindent{\underline{Part (a)}}. Fix $( i,t,s) \in \{1,\ldots ,n\}\times \{1,\ldots ,T\}\times \{1,\ldots ,T\}$ arbitrarily, and let $\pi \in \mathcal{M}$ be any permutation that interchanges $t$ and $s$. For any $x\in \mathbb{R}$,
\begin{align*}
F_{t}( x)  &~=~P( X_{i,t}\leq x)  \\
&~=~\lim_{\{ u_{j}\} _{j\not=t}\to \infty }P( \{ X_{i,1},\ldots ,X_{i,t},\ldots ,X_{i,s},\ldots ,X_{i,T}\}  
\leq ( u_{1},\ldots ,x,\ldots ,u_{s},\ldots ,u_{T}) )  \\
&~\overset{(1)}{=}~\lim_{\{ u_{j}\} _{j\not=t}\to \infty }P( \{ X_{i,\pi ( 1) },\ldots ,X_{i,\pi ( t) },\ldots ,X_{i,\pi ( s) },\ldots ,X_{i,\pi ( T)}\}   
\leq ( u_{1},\ldots ,x,\ldots ,u_{s},\ldots ,u_{T}) )  \\
&~\overset{(2)}{=}~\lim_{\{ u_{j}\} _{j\not=t}\to \infty}P(  \{ X_{i,\pi ( 1) },\ldots ,X_{i,s},\ldots,X_{i,t},\ldots ,X_{i,\pi ( T) }\}  
\leq ( u_{1},\ldots ,x,\ldots ,u_{s},\ldots ,u_{T})  )  \\
&~=~P( X_{i,s}\leq x) =F_{s}( x) ,
\end{align*}
where (1) holds by $P\in \Omega _{{\rm TE}}$ and (2) by the specification of $\pi $. Since $x\in \mathbb{R}$ and $( t,s) \in \{1,\ldots ,T\}\times \{1,\ldots ,T\}$ were arbitrary, $H_{0}$ in \eqref{eq:H0} holds. 

To see that the reverse implication fails, consider the following example: $ \mathbf{X}_{n}=\{ ( X_{i,1},X_{i,2},X_{i,3}) \}_{i=1}^{n}$ i.i.d.\ with $X_{i,1}=X_{i,2}\perp X_{i,3}$ and $X_{i,t}\sim N( 0,1) $. It is not hard to verify that this distribution satisfies Assumption \ref{ass:A1} but does not belong to  $\Omega _{{\rm TE}}$.

\noindent{\underline{Part (b)}}.  Let ${\bf X}_n^\pi \equiv \{\{X_{i,\pi_i(t)}\}_{t=1}^T\}_{i=1}^n$ denote the sample permuted according to an arbitrary permutation $\pi^n = \{\pi_i\}_{i=1}^n \in \mathcal{M}^{n}$. Then,
\begin{equation}
        F_{{\bf X}_n} ~\overset{(1)}{=}~ \prod_{i=1}^n F_{X_{i,1},\ldots, X_{i,T}} ~\overset{(2)}{=}~ \prod_{i=1}^n F_{X_{i,\pi_i(1)},\ldots, X_{i,\pi_i(T)}} ~\overset{(3)}{=} F_{{\bf X}_n^\pi},
        \label{eq:randHyp}
\end{equation}
where (1) and (3) hold by Assumption \ref{ass:A1}, and (2) by $P\in \Omega _{{\rm TE}}$. 
We note that \eqref{eq:randHyp} implies that the randomization hypothesis (i.e., \citet[Definition 17.2.1]{lehmann/romano:2022}) holds. From here, \citet[Theorem 17.2.1]{lehmann/romano:2022} implies that the permutation test described in \citet[Section 17.2.1]{lehmann/romano:2022} satisfies \eqref{eq:FS_SizeControl} with equality. In turn, this implies that our permutation test (i.e., the non-random version of the test in \citet[Section 17.2.1]{lehmann/romano:2022}) satisfies \eqref{eq:FS_SizeControl}.
\end{proof}


\begin{lemma} \label{lem:perm_cov}
Under Assumption \ref{ass:A1},
\[
\hat{\Sigma}_{Z^{\bm{\pi} }}~\overset{p}{\to}~\Omega _{Z}^{\pi }.
\]
\end{lemma}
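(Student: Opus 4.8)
The plan is to reduce $\hat{\Sigma}_{Z^{\bm{\pi}}}$ to a sample second-moment matrix of the permuted sign vectors $V_i^{\bm{\pi}}$ and then apply a law of large numbers along the lines of Step~3 in the proof of Theorem~\ref{thm:perm_dist}. First I would rewrite \eqref{eq:perm_emp_cov} in matrix form. By the permutation-invariance of $\hat{P}^\pi$ established in \eqref{eq:P_pi_constant}, the prefactor $\sqrt{\hat{P}^\pi(u_k)\hat{P}^\pi(u_{\tilde k})}$ in \eqref{eq:perm_emp_cov} equals the product of the $(t,k)$ and $(\tilde t,\tilde k)$ diagonal entries of $\hat{M}$ as defined in \eqref{eq:M}; moreover, recalling the definitions of $V_i^\pi$ in \eqref{eq:perm_defn} and of $\hat{F}^\pi_t(u_k)$ below \eqref{eq:perm_emp_cov}, the bracketed term in \eqref{eq:perm_emp_cov} is exactly the $(t,k)$-coordinate of $V_i^{\bm{\pi}}-\bar V^{\bm{\pi}}$, where $\bar V^{\bm{\pi}}\equiv n^{-1}\sum_{j=1}^n V_j^{\bm{\pi}}$. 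Hence
\[
\hat{\Sigma}_{Z^{\bm{\pi}}} ~=~ \hat{M}\,\Big(\tfrac1n\sum_{i=1}^n (V_i^{\bm{\pi}}-\bar V^{\bm{\pi}})(V_i^{\bm{\pi}}-\bar V^{\bm{\pi}})'\Big)\,\hat{M} ~=~ \hat{M}\,\Big(\tfrac1n\sum_{i=1}^n V_i^{\bm{\pi}}(V_i^{\bm{\pi}})' - \bar V^{\bm{\pi}}(\bar V^{\bm{\pi}})'\Big)\,\hat{M}.
\]

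Next I would invoke the representation $V_i^{\bm{\pi}}=B(\bm{\pi}_i)V_i$ from \eqref{eq:B_perm}. As in Step~3 of Theorem~\ref{thm:perm_dist}, the pairs $(\bm{\pi}_i,V_i)$ are i.i.d.\ across $i$ (the $\bm{\pi}_i$ being i.i.d.\ uniform on $\mathcal{M}$ and independent of the data, and the $V_i$ being i.i.d.\ by Assumption~\ref{ass:A1}), so $\{B(\bm{\pi}_i)V_i\}_{i=1}^n$ and $\{B(\bm{\pi}_i)V_iV_i'B(\bm{\pi}_i)'\}_{i=1}^n$ are i.i.d.\ sequences with uniformly bounded entries. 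By the weak LLN, $n^{-1}\sum_i B(\bm{\pi}_i)V_iV_i'B(\bm{\pi}_i)' \overset{p}{\to} E[B(\bm{\pi}_i)V_iV_i'B(\bm{\pi}_i)']$, and by $\bm{\pi}_i\perp V_i$ this limit equals $\tfrac1{T!}\sum_{\pi\in\mathcal{M}}B(\pi)E[V_iV_i']B(\pi)'$. Likewise $\bar V^{\bm{\pi}}\overset{p}{\to} E[B(\bm{\pi}_i)V_i]=E[B(\bm{\pi}_i)]\,E[V_i]=0$, using $\bm{\pi}_i\perp V_i$ together with $E[B(\bm{\pi}_i)]=0$ from \eqref{eq:Bmean0}; hence $\bar V^{\bm{\pi}}(\bar V^{\bm{\pi}})'=o_p(1)$. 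Combining these with $\hat{M}\overset{p}{\to} M$ from \eqref{eq:Mconvergence} and the continuous mapping theorem yields
\[
\hat{\Sigma}_{Z^{\bm{\pi}}} ~\overset{p}{\to}~ M\,\Big(\tfrac1{T!}\sum_{\pi\in\mathcal{M}}B(\pi)E[V_iV_i']B(\pi)'\Big)\,M.
\]

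Finally I would identify this limit with $\Omega_Z^\pi$. Using the commutation identity $MB(\pi)=B(\pi)M$ from \eqref{eq:interchange} and $Z_i=MV_i$ (so that $M\,E[V_iV_i']\,M = E[Z_iZ_i']=\Omega_Z$), the displayed limit becomes $\tfrac1{T!}\sum_{\pi\in\mathcal{M}}B(\pi)\big(M\,E[V_iV_i']\,M\big)B(\pi)' = \tfrac1{T!}\sum_{\pi\in\mathcal{M}}B(\pi)\,\Omega_Z\,B(\pi)' = \Omega_Z^\pi$ by \eqref{eq:Omega_pi}, which completes the argument. I expect the main obstacle to be essentially bookkeeping rather than anything deep: correctly matching the entrywise formula \eqref{eq:perm_emp_cov} to the matrix product $\hat{M}(\cdot)\hat{M}$ including the demeaning term, and keeping track that the limiting expectation is taken jointly over the random permutation and the data, which is precisely what makes the i.i.d.-in-$i$ LLN applicable. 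Note that $H_0$ is not used anywhere here; Assumption~\ref{ass:A1} together with $E[B(\bm{\pi}_i)]=0$ already forces the demeaning term to vanish, and $\Omega_Z$ in the limit is the (possibly non-centered) second-moment matrix $E[Z_iZ_i']$.
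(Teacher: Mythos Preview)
Your proof is correct and follows essentially the same approach as the paper: both rewrite $\hat{\Sigma}_{Z^{\bm{\pi}}}$ as a sample covariance of $\hat{M}V_i^{\bm{\pi}}=\hat{M}B(\bm{\pi}_i)V_i$, invoke the i.i.d.\ LLN across $i$ (joint in data and permutation), use $\hat{M}\overset{p}{\to}M$, and finish with the commutation identity $MB(\pi)=B(\pi)M$ to identify the limit as $\Omega_Z^\pi$. Your ordering---take the LLN on the $V$-sums first and then sandwich by $M$ via CMT---is a touch cleaner than the paper's expansion of $\hat{M}=M+o_p(1)$ followed by bounding the $Z_iV_i'$, $V_iZ_i'$, $V_iV_i'$ cross terms, but the substance is the same.
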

\begin{proof}
This proof relies on notation and arguments in the proof of Theorem \ref{thm:perm_dist}. First, we show that
\begin{equation}
\frac{1}{n}\sum_{i=1}^{n}B( \bm{\pi }_{i}) Z_{i}Z_{i}^{\prime }B( \bm{\pi }_{i}) ^{\prime }~\overset{p}{\to}~\Omega _{Z}^{\pi }. \label{eq:perm_cov1}
\end{equation}
By similar arguments as in step 3 of Theorem \ref{thm:perm_dist}, we note that $\{ B( \bm{\pi }_{i}) Z_{i}Z_{i}^{\prime }B( \bm{\pi } _{i}) ^{\prime }\} _{i=1}^{n}$ is i.i.d.\ with
\[
E[ B( \bm{\pi }_{i}) Z_{i}Z_{i}^{\prime }B( \bm{ \pi }_{i}) ^{\prime }] ~\overset{(1)}{=}~E[ B( \bm{\pi } _{i}) \Omega _{Z}B( \bm{\pi }_{i}) ^{\prime }] ~\overset{(2)}{=}~\Omega _{Z}^{\pi }.
\]
where (1) holds by $\{ B( \bm{\pi }_{i}) \} _{i=1}^{n}\perp \{ Z_{i}\} _{i=1}^{n}$ and $E[ Z_{i}Z_{i}^{\prime }] =\Omega _{Z}$, and (2) by \eqref{eq:Omega_pi} and the fact that $\bm{\pi }_{i}$ is uniformly distributed in $\mathcal{M}$. From these observations and the LLN, \eqref{eq:perm_cov1} follows.

Second, we note that
\begin{equation}
\frac{1}{n}\sum_{i=1}^{n}B( \bm{\pi }_{i}) Z_{i}V_{i}^{\prime }B( \bm{\pi }_{i}) ^{\prime }=O_{p}( 1) ,~~~ \frac{1}{n}\sum_{i=1}^{n}B( \bm{\pi }_{i}) V_{i}Z_{i}^{\prime }B( \bm{\pi }_{i}) ^{\prime }=O_{p}( 1)\text{, and }~~~\frac{1}{n}\sum_{i=1}^{n}B( \bm{\pi }_{i}) V_{i}V_{i}^{\prime }B( \bm{\pi }_{i}) ^{\prime }=O_{p}( 1) . \label{eq:perm_cov2}
\end{equation}
These can be shown by the arguments that yield \eqref{eq:perm_cov1}, except that $Z_{i}Z_{i}^{\prime }$ is replaced by $Z_{i}V_{i}^{\prime }$, $V_{i}Z_{i}^{\prime }$, and $ V_{i}V_{i}^{\prime }$, respectively.

To conclude the proof, consider the following derivation.
\begin{align*}
\hat{\Sigma}_{Z^{\bm{\pi} }}
& ~\overset{(1)}{=}~\frac{1}{n}\sum_{i=1}^{n}\hat{M}V_{i}^{\bm{\pi} }( \hat{M}V_{i}^{\bm{\pi} }) ^{\prime }-\left( \frac{1}{n}\sum_{i=1}^{n}\hat{M}V_{i}^{\bm{\pi} }\right) \left( \frac{1}{n}\sum_{i=1}^{n}\hat{M}V_{i}^{\bm{\pi} }\right) ^{\prime } \\
& ~\overset{(2)}{=}~\hat{M}\left( \frac{1}{n}\sum_{i=1}^{n}B( \bm{ \pi }_{i}) V_{i}V_{i}^{\prime }B( \bm{\pi }_{i}) ^{\prime }\right) \hat{M}^{\prime }-\hat{M}\left( \frac{1}{n} \sum_{i=1}^{n}B( \bm{\pi }_{i}) V_{i}\right) \left( \frac{1}{n }\sum_{i=1}^{n}B( \bm{\pi }_{i}) V_{i}\right) ^{\prime }\hat{M }^{\prime } \\
& ~\overset{(3)}{=}~( M+o_{p}( 1) ) \left( \frac{1}{n} \sum_{i=1}^{n}B( \bm{\pi }_{i}) V_{i}V_{i}^{\prime }B( \bm{\pi }_{i}) ^{\prime }\right) ( M+o_{p}( 1) ) ^{\prime }+o_{p}( 1) \\
& ~\overset{(4)}{=}~\left[
\begin{array}{c}
\frac{1}{n}\sum_{i=1}^{n}B( \bm{\pi }_{i}) Z_{i}Z_{i}^{\prime }B( \bm{\pi }_{i}) ^{\prime }+( \frac{1}{n} \sum_{i=1}^{n}B( \bm{\pi }_{i}) Z_{i}V_{i}^{\prime }B( \bm{\pi }_{i}) ^{\prime }) o_{p}( 1) + \\
o_{p}(1) ( \frac{1}{n}\sum_{i=1}^{n}B( \bm{\pi } _{i}) V_{i}V_{i}^{\prime }B( \bm{\pi }_{i}) ^{\prime }) o_p(1) +o_{p}( 1) ( \frac{1}{n}\sum_{i=1}^{n}B( \bm{\pi }_{i}) V_{i}Z_{i}^{\prime }B( \bm{\pi } _{i}) ^{\prime })
\end{array}
\right] +o_{p}( 1) \\
&~ \overset{(5)}{=}~\Omega _{Z}^{\pi }+o_{p}( 1) ,
\end{align*}
as desired, where (1) holds by \eqref{eq:M}, \eqref{eq:perm_defn}, and \eqref{eq:P_pi_constant}, (2) by \eqref{eq:B_perm}, (3) by \eqref{eq:Mconvergence} and $\frac{1 }{n}\sum_{i=1}^{n}B( \bm{\pi }_{i}) V_{i}=o_{p}( 1) $ (implied by \eqref{eq:OpWithIV}), (4) by $M B( \bm{\pi }_{i})V_{i} = B( \bm{\pi }_{i})Z_{i}$ for all $i=1,\dots,n$ (implied by $Z_{i}=MV_{i}$ for all $i=1,\dots,n$ and \eqref{eq:interchange}), and (5) by \eqref{eq:perm_cov1} and \eqref{eq:perm_cov2}.
\end{proof}

\begin{lemma}\label{lem:Pauly} 
Let $\Omega _{{\rm FE}}$ denote the class of distributions that are ``fully'' exchangeable over units and time periods, i.e., $\mathbf{X}_{n}=\{ \{ X_{i,t}\} _{i=1}^{n}\} _{t=1}^{T}$ has the same distribution as $\{ \{ X_{\lambda ( i,t)}\} _{i=1}^{n}\} _{t=1}^{T}$ where $\lambda$ denotes an arbitrary permutation of units and time periods. Under Assumption \ref{ass:A1}, the following statements hold:
\begin{enumerate}[(a)]
\item If $n=1$, $ \Omega _{{\rm FE}}=\Omega _{{\rm TE}}$.
\item If $n>1$, $\Omega _{{\rm FE}}\subsetneq \Omega _{{\rm TE}}$.
\end{enumerate}
\end{lemma}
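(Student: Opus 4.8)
The plan is to read ``an arbitrary permutation of units and time periods'' as an arbitrary bijection $\lambda$ of the index set $\{1,\ldots,n\}\times\{1,\ldots,T\}$ onto itself, so that $P\in\Omega_{\rm FE}$ means the law of $\mathbf{X}_n$ is invariant under the full symmetric group on the $nT$ index pairs. This reading is essentially forced: under Assumption~\ref{ass:A1} the alternative reading, in which $\lambda$ ranges only over maps of the product form $\lambda(i,t)=(\sigma(i),\tau(t))$, collapses $\Omega_{\rm FE}$ to $\Omega_{\rm TE}$ for every $n$ (since i.i.d.\ sampling already makes the law unit-exchangeable, leaving only time-exchangeability as content), which would contradict part~(b). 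It is also convenient to record, though not strictly needed, that under Assumption~\ref{ass:A1} the law of $\mathbf{X}_n$ is a product $Q^{\otimes n}$ with $Q$ the common joint law on $\mathbb{R}^T$ of $\{X_{i,t}\}_{t=1}^T$, and $P\in\Omega_{\rm TE}$ iff $Q$ is invariant under coordinate permutations. Part~(a) is then immediate: when $n=1$ the index set $\{1\}\times\{1,\ldots,T\}$ is in bijection with $\{1,\ldots,T\}$, so every $\lambda$ is simply a $\pi\in\mathcal{M}$, and the defining condition $\mathbf{X}_1\overset{d}{=}\mathbf{X}_1^{\lambda}$ reads verbatim as $\{X_{1,t}\}_{t=1}^T\overset{d}{=}\{X_{1,\pi(t)}\}_{t=1}^T$, which is the defining condition of $\Omega_{\rm TE}$ for $n=1$; hence the two classes coincide.

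For part~(b), first establish the inclusion $\Omega_{\rm FE}\subseteq\Omega_{\rm TE}$. Fix $P\in\Omega_{\rm FE}$, a unit $i$, and $\pi\in\mathcal{M}$, and let $\lambda$ be the bijection of $\{1,\ldots,n\}\times\{1,\ldots,T\}$ with $\lambda(i,t)=(i,\pi(t))$ and $\lambda(j,t)=(j,t)$ for $j\neq i$. Full exchangeability gives $\mathbf{X}_n\overset{d}{=}\mathbf{X}_n^{\lambda}$, and taking the marginal over the coordinates indexed by $\{i\}\times\{1,\ldots,T\}$ yields $\{X_{i,t}\}_{t=1}^T\overset{d}{=}\{X_{i,\pi(t)}\}_{t=1}^T$; since $i$ and $\pi$ were arbitrary, $P\in\Omega_{\rm TE}$. (This direction uses only the definitions, not Assumption~\ref{ass:A1}.)

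It remains to show strictness for $n>1$, i.e.\ to exhibit a $P$ satisfying Assumption~\ref{ass:A1} that lies in $\Omega_{\rm TE}\setminus\Omega_{\rm FE}$. Take $\{X_{i,t}\}_{t=1}^T$ i.i.d.\ across $i=1,\ldots,n$ with $X_{i,1}=X_{i,2}=\cdots=X_{i,T}\sim N(0,1)$ (equivalently, $Q$ concentrated on the diagonal of $\mathbb{R}^T$; a nondegenerate equicorrelated Gaussian $Q$ serves equally well and makes the covariance heuristic transparent). This $P$ satisfies Assumption~\ref{ass:A1} and lies in $\Omega_{\rm TE}$ because $Q$ is symmetric in its coordinates. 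It is not in $\Omega_{\rm FE}$: under $P$ the pair $(X_{1,1},X_{1,2})$ is supported on $\{(x,x):x\in\mathbb{R}\}$, whereas for the transposition $\lambda$ of the index pairs $(1,2)$ and $(2,1)$ (available since $n\geq2$ and $T\geq2$), the coordinates of $\mathbf{X}_n^{\lambda}$ occupying positions $(1,1)$ and $(1,2)$ are $X_{1,1}$ and $X_{2,1}$, a pair of independent standard normals, which is not supported on the diagonal; hence $\mathbf{X}_n\not\overset{d}{=}\mathbf{X}_n^{\lambda}$ and $P\notin\Omega_{\rm FE}$. Combined with the inclusion, this gives $\Omega_{\rm FE}\subsetneq\Omega_{\rm TE}$.

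I expect no substantive obstacle here: the only real point of care is fixing the meaning of ``permutation of units and time periods'' as the full symmetric group on the $nT$ index pairs (rather than the product subgroup), since the proof of part~(b) hinges entirely on using a transposition that mixes a within-unit pair with a cross-unit pair. Once that is settled, the statement reduces to the two short distributional arguments and the one-line counterexample above.
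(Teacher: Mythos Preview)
Your proof is correct and reaches the same destination as the paper, but by a noticeably different path for the inclusion $\Omega_{\rm FE}\subseteq\Omega_{\rm TE}$. You obtain it directly: embed any time permutation $\pi$ acting on a single unit as a full index permutation $\lambda$, invoke full exchangeability, and marginalize. The paper instead establishes a stronger intermediate characterization---under Assumption~\ref{ass:A1}, any $P\in\Omega_{\rm FE}$ must have all $nT$ coordinates i.i.d.\ with a common marginal $F$---by using permutations that swap blocks of time indices between units $1$ and $2$ to decouple $\{X_{1,t}\}_{t=1}^T$, and transpositions within unit $1$ to equate the marginals. The inclusion then falls out because i.i.d.\ laws are trivially time-exchangeable. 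Your argument is shorter and does not rely on Assumption~\ref{ass:A1} for the inclusion step; the paper's argument is longer but yields the informative by-product that, under Assumption~\ref{ass:A1}, $\Omega_{\rm FE}$ consists precisely of the fully i.i.d.\ laws.

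For strictness you and the paper use the same counterexample (each unit's time observations collapsed to a single $N(0,1)$ draw). Your verification that $P\notin\Omega_{\rm FE}$ via the explicit transposition of $(1,2)$ and $(2,1)$ and the resulting support mismatch is a clean, self-contained argument; the paper leans on its i.i.d.\ characterization and simply observes the counterexample fails it. Your clarification that ``permutation of units and time periods'' must mean the full symmetric group on the $nT$ index pairs (not just the product subgroup) is well taken and matches how the paper implicitly uses $\lambda$.
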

\begin{proof} Part (a) is straightforward, so we prove part (b).
We begin by showing a useful intermediate result: $\mathbf{X}_{n} \sim P \in \Omega _{{\rm FE}}$ implies that its CDF can be written as
\begin{equation}
P( \mathbf{X}_{n}\leq y)~=~\prod\limits_{i=1}^{n}\prod\limits_{t=1}^{T}F( y_{i,t}) ~~~\text{for all }y\in \mathbb{R} ^{nT},
\label{eq:pauly1}
\end{equation}
where $F$ is the CDF of $X_{1,1}$. Since Assumption \ref{ass:A1} already implies independence across units:
\begin{align}
P( \mathbf{X}_{n}\leq y) ~=~\prod\limits_{i=1}^{n}\tilde{F}( \{ y_{i,t}\} _{t=1}^{T}) ,
\label{eq:pauly1p5}
\end{align}
where $\tilde{F}$ is the CDF of the vector $\{ X_{1,t}\} _{t=1}^{T}$. Then the desired result \eqref{eq:pauly1} follows immediately from \eqref{eq:pauly1p5} provided that $\{ X_{1,t}\} _{t=1}^{T}$ are i.i.d.\ with marginal CDF $F$. We now establish this result in two steps.

First, we show $\{ X_{1,t}\} _{t=1}^{T}$ is an independent sequence. To this end, fix $\{ x_{t}\} _{t=1}^{T}\in \mathbb{R}^{T}$ arbitrarily. For any $s=1,\ldots ,T-1$, consider the following permutation: $\lambda ( 1,t) =( 1,t) $ for $t\leq s$ and $\lambda ( 1,t) =( 2,t) $ for $t>s$, $\lambda ( 2,t) =( 2,t) $ for $t\leq s$ and $\lambda ( 2,t) =( 1,t) $ for $t>s$, and $\lambda ( j,t) =( j,t) $ for all $j>2$ and $t=1,\ldots ,T$. Then,
\begin{align}
&P( X_{1,1}\leq x_{1},\ldots ,X_{1,s}\leq x_{s},X_{1,s+1}\leq x_{s+1},\ldots ,X_{1,T}\leq x_{T}) \notag\\
& =\lim_{u\rightarrow \infty }P\left( 
\begin{array}{c}
X_{1,1}\leq x_{1},\ldots ,X_{1,s}\leq x_{s},X_{1,s+1}\leq x_{s+1},\ldots ,X_{1,T}\leq x_{T} \\ 
X_{j,t}\leq u\text{ for all }j\geq 2\text{ and }t=1,\ldots ,T
\end{array}
\right)   \notag\\
& \overset{(1)}{=}\lim_{u\rightarrow \infty }P\left( 
\begin{array}{c}
X_{\lambda ( 1,1)}\leq x_{1},\ldots ,X_{\lambda ( 1,s)}\leq x_{s},X_{\lambda ( 1,s+1)}\leq x_{s+1},\ldots ,X_{\lambda ( 1,T)}\leq x_{T} \\ 
X_{\lambda ( j,t)}\leq u\text{ for all }j\geq 2\text{ and }t=1,\ldots ,T
\end{array}
\right)  \notag\\
& \overset{(2)}{=}\lim_{u\rightarrow \infty }P\left( 
\begin{array}{c}
X_{1,1}\leq x_{1},\ldots ,X_{1,s}\leq x_{s},X_{2,s+1}\leq x_{s+1},\ldots ,X_{2,T}\leq x_{T} \\ 
X_{\lambda ( j,t)}\leq u\text{ for all }j\geq 2\text{ and }t=1,\ldots ,T
\end{array}
\right)  \notag\\
& =P( X_{1,1}\leq x_{1},\ldots ,X_{1,s}\leq x_{s},X_{2,s+1}\leq x_{s+1},\ldots ,X_{2,T}\leq x_{T})   \notag\\
& \overset{(3)}{=}P( X_{1,1}\leq x_{1},\ldots ,X_{1,s}\leq x_{s}) P( X_{2,s+1}\leq x_{s+1},\ldots ,X_{2,T}\leq x_{T}) ,\label{eq:pauly2}
\end{align}
where (1) holds by $P\in \Omega _{{\rm FE}}$, (2) by the specification of $\lambda $, and (3) by Assumption \ref{ass:A1}. By taking limits of \eqref{eq:pauly2} as $x_{1},\ldots ,x_{s}\to \infty$, we get 
\begin{equation}
P( X_{1,s+1}\leq x_{s+1},\ldots ,X_{1,T}\leq x_{T}) =P( X_{2,s+1}\leq x_{s+1},\ldots ,X_{2,T}\leq x_{T}) .
\label{eq:pauly3}
\end{equation}
Since \eqref{eq:pauly3} holds for all $\{ x_{t}\} _{t=1}^{T}\in \mathbb{R}^{T}$, we can combine it with \eqref{eq:pauly2} to get
\begin{equation}
( X_{1,1},\ldots ,X_{1,s}) \perp ( X_{1,s+1},\ldots ,X_{1,T}) \text{ for any }s=1,\ldots ,T-1.
\label{eq:pauly4}
\end{equation}
The desired result follows by considering \eqref{eq:pauly4} sequentially for $s=1$, $s=2$, and so on.

Second, we show that $\{ X_{1,t}\} _{t=1}^{T}$ is an identically distributed sequence. To this end, fix $x\in \mathbb{R}$ arbitrarily. For any $s\neq 1$, consider the following permutation: $\lambda ( 1,1) =( 1,s) $, $\lambda ( 1,s) =\lambda ( 1,1) $, and $\lambda ( i,t) =( i,t) $ otherwise. Then,
\begin{align}
P( X_{1,1}\leq x) & =\lim_{u\rightarrow \infty }P( X_{1,1}\leq x,X_{i,t}\leq u\text{ for all }( i,t) \not=( 1,1)) \notag\\
& \overset{(1)}{=}\lim_{u\rightarrow \infty }P( X_{\lambda ( 1,1)}\leq x,X_{\lambda ( i,t)}\leq u\text{ for all }( i,t) \not=( 1,1)) \notag\\
& \overset{(2)}{=}\lim_{u\rightarrow \infty }P( X_{1,s}\leq x,X_{\lambda(i,t)}\leq u\text{ for all }( i,t) \not=( 1,1)) \notag\\
& =P( X_{1,s}\leq x) ,\label{eq:pauly5}
\end{align}
where (1) holds by $P\in \Omega _{{\rm FE}}$ and (2) by the specification of $\lambda $. Since \eqref{eq:pauly5} holds for all $x\in \mathbb{R}$, $X_{1,1}$ and $X_{1,s}$ have the same distribution. Since the choice of $s=2,\dots,T$ was arbitrary, the desired result follows.

Finally, we conclude the proof by finding a distribution $P\in\Omega_{\text{TE}}$ but $P\not\in \Omega_{\text{FE}}$, so the inclusion is strict. Consider the following example: $\mathbf{X}_{n}=\{ \{ X_{i,t}\} _{i=1}^{n}\} _{t=1}^{T}$ with $n=2$, $T=2$, where $X_{1,1}=X_{1,2}=Z_{1}$, $X_{2,1}=X_{2,2}=Z_{2}$, and $\{ Z_{1},Z_{2}\}$ are i.i.d.\ $N( 0,1) $. It is easy to see that this distribution satisfies Assumption \ref{ass:A1} and $P\in \Omega _{{\rm TE}}$, however $P\not\in \Omega _{{\rm FE}}$.
\end{proof}
 \end{small}

\bibliography{BIBLIOGRAPHY}

@article{athey/imbens:2006,
  title={Identification and inference in nonlinear difference-in-differences models},
  author={Athey, Susan and Imbens, Guido W},
  journal={Econometrica},
  volume={74},
  number={2},
  pages={431--497},
  year={2006},
  publisher={Wiley Online Library}
}

@ARTICLE{beran:1987,
	author = {R. Beran},
	month = {September},
	year = 1987,
	title = {Prepivoting to Reduce Level Error of Confidence Sets},
	journal = {Biometrika},
	volume = 74,
	number = 3,
	pages = {457-468}
}

@ARTICLE{beran:1988,
	author = {R. Beran},
	month = {September},
	year = 1988,
	title = {Prepivoting Test Statistics: A Bootstrap View of Asymptotic Refinements},
	journal = {Journal of the American Statistical Association},
	volume = 83,
	number = 403,
	pages = {687-697}
}

@article{beutner:2025,
  title={Two-sample smooth test for the equality of distributions for dependent data and its bootstrap consistency},
  author={Beutner, Eric},
  journal={Electronic Journal of Statistics},
  volume={19},
  number={1},
  pages={982--1033},
  year={2025},
  publisher={The Institute of Mathematical Statistics and the Bernoulli Society}
}

@ARTICLE{bugni/horowitz:2021,
	author = {F. A. Bugni and J. L. Horowitz},
	title = {Permutation tests for equality of distributions of functional data},
	year = 2021,
	journal = {Journal of Applied Economics},
	volume = 36,
	number = 7,
	pages = {861-877}
}

@article{bugni2020testing,
  title={Testing homogeneity in dynamic discrete games in finite samples},
  author={Bugni, Federico A and Bunting, Jackson and Ura, Takuya},
  journal={arXiv preprint arXiv:2010.02297},
  year={2024}
}

@ARTICLE{chung/romano:2013,
	author = {E. Chung and J.P. Romano},
	year = 2013,
	title = {Exact and Asymptotically Robust Permutation Tests},
	journal = {The Annals of Statistics},
	volume = 41,
	number = 2,
	pages = {484-507}
}

@ARTICLE{chung/romano:2016,
	author = {E. Chung and J.P. Romano},
	year = 2016,
	title = {Multivariate and multiple permutation tests},
	journal = {Journal of Econometrics},
	volume = 193,
	number = 1,
	pages = {76-91}
}

@ARTICLE{chung/olivares:2021,
        title={Permutation test for heterogeneous treatment effects with a nuisance parameter},
        author={Chung, EunYi and Olivares, Mauricio},
        journal={Journal of Econometrics},
        volume={225},
        number={2},
        pages={148--174},
        year={2021},
        publisher={Elsevier}
}

@article{darling:1957,
  title={The Kolmogorov-Smirnov, Cramer-von Mises tests},
  author={Darling, Donald A},
  journal={Annals of Mathematical Statistics},
  pages={823--838},
  year={1957},
}

@BOOK{davidson:1994,
	author = {J. Davidson},
	year = 1994,
	title = {Stochastic Limit Theory},
	publisher = {Oxford University Press}
}

@article{diciccio/romano:2017,
  title={Robust permutation tests for correlation and regression coefficients},
  author={DiCiccio, Cyrus J and Romano, Joseph P},
  journal={Journal of the American Statistical Association},
  volume={112},
  number={519},
  pages={1211--1220},
  year={2017},
  publisher={Taylor \& Francis}
}

@article{roth2023s,
  title={What’s trending in difference-in-differences? A synthesis of the recent econometrics literature},
  author={Roth, Jonathan and Sant’Anna, Pedro HC and Bilinski, Alyssa and Poe, John},
  journal={Journal of Econometrics},
  volume={235},
  number={2},
  pages={2218--2244},
  year={2023},
  publisher={Elsevier}
}

@ARTICLE{ditzhaus/gaigall:2022,
	author = {M. Ditzhaus and D. Gaigall},	
	title = {Testing marginal homogeneity in Hilbert spaces with applications to stock market returns},
	journal = {TEST: An Official Journal of the Spanish Society of Statistics and Operations Research},
	volume = {31},
	number = {3},
	year = {2022},
	pages = {749-770},
}

@article{friedrich/brunner/pauly:2017,
  title={Permuting longitudinal data in spite of the dependencies},
  author={Friedrich, Sarah and Brunner, Edgar and Pauly, Markus},
  journal={Journal of Multivariate Analysis},
  volume={153},
  pages={255--265},
  year={2017},
  publisher={Elsevier}
}

@UNPUBLISHED{fogarty:2021,
	author = {C. B. Fogarty},
	year = 2021,
	note = {Mimeo: MIT Sloan School of Management, ArXiv:2012.04423v1},
	title = {Prepivoted permutation tests}
}

@article{igami/yang:2016,
  title={Unobserved heterogeneity in dynamic games: Cannibalization and preemptive entry of hamburger chains in Canada},
  author={Igami, Mitsuru and Yang, Nathan},
  journal={Quantitative Economics},
  volume={7},
  number={2},
  pages={483--521},
  year={2016},
  publisher={Wiley Online Library}
}

@article{janssen:1997,
        title={Studentized permutation tests for non-iid hypotheses and the generalized Behrens-Fisher problem},
        author={Janssen, Arnold},
        journal={Statistics \& Probability Letters},
        volume={36},
        number={1},
        pages={9--21},
        year={1997},
}

@article{kiefer:1959,
  title={K-sample analogues of the Kolmogorov-Smirnov and Cram{\'e}r-V. Mises tests},
  author={Kiefer, J},
  journal={Annals of Mathematical Statistics},
  pages={420--447},
  year={1959},
}

@article{konietschke/pauly:2014,
  title={Bootstrapping and permuting paired t-test type statistics},
  author={Konietschke, Frank and Pauly, Markus},
  journal={Statistics and Computing},
  volume={24},
  number={3},
  pages={283--296},
  year={2014},
  publisher={Springer}
}

@BOOK{lehmann/romano:2022,
	author = {E. L. Lehmann and J. P. Romano},
	year = 2022,
	title = {Testing Statistical Hypothesis: Fourth edition},
	publisher = {Springer}
}

@techreport{olivares:2022,
  title={Robust permutation test for equality of distributions under covariate-adaptive randomization},
  author={Olivares, Mauricio},
  year={2022},
  institution={Working paper}
}

@ARTICLE{otsu/pesendorfer/takayashi:2016,
	author = {T. Otsu and M. Pesendorfer and Y. Takahashi},
	year = 2016,
	title = {Pooling data across markets in dynamic Markov Games},
	journal = {Quantitative Economics},
	volume = 7,
	number = 2,
	pages = {523-559}
}

@ARTICLE{pauly/brunner/konietscheke:2015,
  title={Asymptotic permutation tests in general factorial designs},
  author={Pauly, Markus and Brunner, Edgar and Konietschke, Frank},
  journal={Journal of the Royal Statistical Society Series B: Statistical Methodology},
  volume={77},
  number={2},
  pages={461--473},
  year={2015},
  publisher={Oxford University Press}
}

@article{quessy/ethier:2012,
  title={Cram{\'e}r--von Mises and characteristic function tests for the two and k-sample problems with dependent data},
  author={Quessy, Jean-Fran{\c{c}}ois and {\'E}thier, Fran{\c{c}}ois},
  journal={Computational Statistics \& Data Analysis},
  volume={56},
  number={6},
  pages={2097--2111},
  year={2012},
  publisher={Elsevier}
}

@article{roth/santanna:2023,
  title={When is parallel trends sensitive to functional form?},
  author={Roth, Jonathan and Sant'Anna, Pedro HC},
  journal={Econometrica},
  volume={91},
  number={2},
  pages={737--747},
  year={2023},
  publisher={Wiley Online Library}
}

@book{scheffe:1959,
  title={The Analysis of Variance},
  author={Scheffe, Henry},
  year={1959},
  publisher={John Wiley \& Sons}
}

@BOOK{vandervaart:1998,
	author = {A. van der Vaart},
	year = 1998,
	title = {Asymptotic Statistics},
	publisher = {Cambridge University Press}
}

@BOOK{vandervaart/wellner:1996,
	author = {A. van der Vaart and J. Wellner},
	year = 1996,
	title = {Weak Convergence and Empirical Processes},
	publisher = {Springer}
}

@article{wylupek:2023,
  title={A nonparametric test for paired data},
  author={Wy{\l}upek, Grzegorz},
  journal={Journal of Multivariate Analysis},
  volume={198},
  pages={105229},
  year={2023},
  publisher={Elsevier}
}
\end{document}